\numberwithin{equation}{section}
\numberwithin{figure}{section}
\theoremstyle{plain}
\newtheorem{thm}{Theorem}[section]
  \theoremstyle{remark}
  \newtheorem{rem}[thm]{Remark}
  \theoremstyle{plain}
  \newtheorem{prop}[thm]{Proposition}
 \theoremstyle{definition}
  \newtheorem{example}[thm]{Example}
  \theoremstyle{definition}
  \newtheorem{defn}[thm]{Definition}
  \theoremstyle{plain}
  \newtheorem{lem}[thm]{Lemma}
  \theoremstyle{plain}
  \theoremstyle{remark}
  \newtheorem*{acknowledgement*}{Acknowledgement}
\DeclareMathOperator*{\Tr}{Tr}
\begin{document}
\selectlanguage{english}

\title{Higher order terms for the quantum evolution of a Wick observable
within the Hepp method}

\author{Sébastien Breteaux%
\thanks{IRMAR, UMR-CNRS 6625, Université de Rennes 1, campus de Beaulieu,
35042 Rennes Cedex, France. E-mail: sebastien.breteaux@univ-rennes1.fr%
}}

\date{Februar 2011}
\maketitle
\begin{abstract}
\noindent The Hepp method is the coherent state approach to the mean
field dynamics for bosons or to the semiclassical propagation. A key
point is the asymptotic evolution of Wick observables under the evolution
given by a time-dependent quadratic Hamiltonian. This article provides
a complete expansion with respect to the small parameter $\varepsilon>0$
which makes sense within the infinite-dimensional setting and fits
with finite-dimensional formulae.
\end{abstract}
Mathematics subject classification (2000): 81R30, 35Q40, 81S10, 81S30.

Keywords: mean field limit, semiclassical limit, coherent states,
squeezed states.

\selectlanguage{english}

\section{Introduction}

In this article we derive two expansions with respect to a small parameter
$\varepsilon$ of quantum evolved Wick observables under a time-dependent
quadratic Hamiltonian.

The Hepp method was introduced in~\cite{MR0332046} and then extended
in~\cite{MR530915,MR539736} in order to study the mean field dynamics
of many bosons systems via a (squeezed) coherent states approach.
The asymptotic analysis in the mean field limit is done with respect
to a small parameter~$\varepsilon$, where the number of particles
is of order~$\frac{1}{\varepsilon}$.

Remember that the mean field dynamics is obtained as a classical Hamiltonian
dynamics which governs the evolution of the center~$z(t)$ of the
Gaussian state (squeezed coherent state). Meanwhile the covariance
of this Gaussian as well as the control of the remainder term is determined
by the evolution of a quadratic approximate Hamiltonian around~$z(t)$.

A key point in this method is the asymptotic analysis of the evolution
of a Wick quantized observable according to this quantum time-dependent
quadratic Hamiltonian.

Only a few results are clearly written about the remainder terms and
some possible expansions in powers of $\varepsilon$, see the works
of Ginibre and Velo~\cite{MR602197,MR605198}. In the finite-dimensional
case, entering into the semiclassical theory, accurate results have
been given by Combescure, Ralston and Robert in~\cite{MR1690026}.
For the mean field infinite-dimensional setting some results have
been proved in~\cite{MR2575484,MR2291792,MR2530155} with
a different approach.

We stick here with the Hepp method with the presentation of~\cite{MR2465733}
which puts the stress on the similarities and differences between
the infinite-dimen\-sional bosonic mean field problem and the finite-dimensional
semiclassical analysis. Nevertheless, in~\cite{MR2465733} the authors
only considered the main order term although some of their formulae
make possible complete expansions. In this article we derive two expansions
of the quantum evolved Wick observables which are equal term by term.

Two difficulties have to be solved :
\begin{enumerate}
\item Unlike the time-independent finite-dimensional case, no Mehler type
explicit formula (see for example~\cite{MR1339714} or~\cite{MR883643})
is available. A general time-dependent Hamiltonian has no explicit
dynamics.
\item In the infinite-dimensional framework the quantization of a linear
symplectic transformation (a Bogoliubov transformation) requires some
care. Useful references on this subject are~\cite{MR0208930} and~\cite{MR1178936}.
Its realization in the Fock space relies on a Hilbert-Schmidt condition
on the antilinear part connected with the Shale theorem (see~\cite{MR0137504}
and~\cite{MR628382,MR2221699,MR2297950}).
\end{enumerate}
These things are well known but have to be considered accurately while
writing complete expansions.

Two different methods, with apparently two different final formulae,
will be used. A first one relies on a Dyson expansion approach and
provides the successive terms as time-dependent integrals. The second
one uses the exact formulae for the finite-dimensional Weyl quantization
and after having made explicit the relationship between Wick and Weyl
quantizations like in~\cite{MR1186643} or~\cite{MR2465733}, the
proper limit process with respect to the dimension is carried out.

The outline of this article is the following. In Section~\ref{sec:Wick-calculus}
we recall some facts and definitions about the Fock space and Wick
quantization. We then present our main results in Section~\ref{sec:results}
in Theorems~\ref{thm:integral-formula} and~\ref{thm:exponential-formula}
and illustrate them by a simple example. Section~\ref{sec:Classical-evolution}
and Section~\ref{sec:Existence-quantum-evolution} are devoted to
the construction and properties of the classical and quantum evolution
associated with a symmetric quadratic Hamiltonian. Section~\ref{sec:integral-formula}
and Section~\ref{sec:exponential-formula} contain the proofs of
our two expansion formulae. For the convenience of the reader we recall
some facts about real-linear symplectomorphisms and symplectic Fourier
transform in the appendices.

\section{\label{sec:Wick-calculus}Wick calculus with polynomial observables}

\subsubsection{Definitions}

We recall some definitions and results about Wick quantization. More
details can be found in~\cite{MR2465733}.

In this paper~$\left(\mathcal{Z},\left\langle \cdot,\cdot\right\rangle \right)$
denotes a separable Hilbert space over~$\mathbb{C}$, the field of
complex numbers. It is also a symplectic space with respect to the
symplectic form~$\sigma\left(z_{1},z_{2}\right)=\Im\left\langle z_{1},z_{2}\right\rangle $.
We use the physicists convention that all the scalar products over
Hilbert spaces are linear with respect to the right variable and antilinear
with respect to the left variable.  We denote by $\mathcal{S}_{m}$
the symmetrization operator on~$\bigotimes^{m}\mathcal{Z}$ (the
completion for the natural Hilbert scalar product of the algebraic
tensor product~$\bigotimes^{m,\, alg}\mathcal{Z}$) defined by \[
\mathcal{S}_{m}\left(z_{1}\otimes\cdots\otimes z_{m}\right)=\frac{1}{m!}\sum_{\sigma\in\mathfrak{S}_{m}}z_{\sigma_{1}}\otimes\cdots\otimes z_{\sigma_{m}}\,,\]
where the~$z_{j}$ are vectors in~$\mathcal{Z}$ and~$\mathfrak{S}_{m}$
denotes the set of the permutations of~$\left\{ 1,\dots,m\right\} $.
We will use the notation~$z_{1}\vee\cdots\vee z_{m}$ for~$\mathcal{S}_{m}\left(z_{1}\otimes\cdots\otimes z_{m}\right)$,
and~$z^{\vee m}$ for~$z\vee\cdots\vee z$ when the~$m$ terms
of this product are equal to~$z$. We call \emph{monomial} of \emph{order}~$\left(p,q\right)\in\mathbb{N}^{2}$
a complex-valued application defined on~$\mathcal{Z}$ of the form
\[
b\left(z\right)=\left\langle z^{\vee q},\tilde{b}z^{\vee p}\right\rangle \,,\]
with~$\widetilde{b}\in\mathcal{L}\left(\bigvee^{p}\mathcal{Z},\bigvee^{q}\mathcal{Z}\right)$
where~$\bigvee^{n}\mathcal{Z}$ (or~$\mathcal{Z}^{\vee n}$) denotes
the Hilbert completion of the~$n$-fold symmetric tensor product,
and for two Banach spaces~$E$ and~$F$, the space of continuous
linear applications from~$E$ to~$F$ is denoted by~$\mathcal{L}\left(E,F\right)$.
We then write~$b\in\mathcal{P}_{p,q}\left(\mathcal{Z}\right)$. The
\emph{total order} of~$b$ is the integer~$m=p+q$. The finite linear
combinations of monomials are called \emph{polynomials}. The set of
all polynomials of this type is denoted by~$\mathcal{P}\left(\mathcal{Z}\right)$.
Subsets of particular interest of~$\mathcal{P}\left(\mathcal{Z}\right)$
are~$\mathcal{P}_{m}\left(\mathcal{Z}\right)$ and~$\mathcal{P}_{\leq m}\left(\mathcal{Z}\right)$,
the finite linear combinations of monomials of total order equal to~$m$
and not greater than~$m$.

The Hilbert space \[
\mathcal{H}:=\bigoplus_{n\in\mathbb{N}}\bigvee^{n}\mathcal{Z}\]
is called the symmetric \emph{Fock space} associated with~$\mathcal{Z}$,
where tensor products and sum completions are made with respect to
the natural Hilbert scalar products inherited from~$\mathcal{Z}$.
We also consider the dense subspace~$\mathcal{H}_{\text{fin}}$ of~$\mathcal{H}$
of states with a finite number of particles\[
\mathcal{H}_{\text{fin}}:=\bigoplus_{n\in\mathbb{N}}^{\text{alg}}\bigvee^{n}\mathcal{Z}\,,\]
where the tensor products are completed but the sum is algebraic.

The \emph{Wick quantization} of a monomial~$b\in\mathcal{P}_{p,q}\left(\mathcal{Z}\right)$
is the operator defined on~$\mathcal{H}_{\text{fin}}$ by its action
on~$\bigvee^{n}\mathcal{Z}$ as an element of~$\mathcal{L}(\bigvee^{n}\mathcal{Z},\bigvee^{n+q-p}\mathcal{Z})$,
\[
\left.b^{Wick}\right|_{\bigvee^{n}\mathcal{Z}}=1_{\left[p,+\infty\right)}\left(n\right)\frac{\sqrt{n!\left(n+q-p\right)!}}{\left(n-p\right)!}\varepsilon^{\frac{p+q}{2}}\left(\tilde{b}\vee I_{\bigvee^{n-p}\mathcal{Z}}\right)\,,\]
where~$I_{X}$ denotes the identity map on the space~$X$ and for~$A_{j}\in\mathcal{L}\left(\mathcal{Z}^{\vee p_{j}},\mathcal{Z}^{\vee q_{j}}\right)$,
$A_{1}\vee A_{2}=\mathcal{S}_{q_{1}+q_{2}}A_{1}\otimes A_{2}\mathcal{S}_{p_{1}+p_{2}}$.
The Wick quantization is extended by linearity to polynomials.

We have a notion of \emph{derivative} of a polynomial, first defined
on the monomials and then extended by linearity. For~$b\in\mathcal{P}_{p,q}\left(\mathcal{Z}\right)$
and for any given~$z\in\mathcal{Z}$, the operator \begin{equation}
\partial_{\bar{z}}^{j}\partial_{z}^{k}b\left(z\right):=\frac{p!}{\left(p-k\right)!}\frac{q!}{\left(q-j\right)!}\left(\left\langle z^{\vee\left(q-j\right)}\right|\vee I_{\bigvee^{j}\mathcal{Z}}\right)\tilde{b}\left(z^{\vee\left(p-k\right)}\vee I_{\bigvee^{k}\mathcal{Z}}\right)\label{eq:derivee_poly_wick}\end{equation}
is an element of~$\mathcal{L}\left(\bigvee^{k}\mathcal{Z},\bigvee^{j}\mathcal{Z}\right)$.
We use the {}``bra'' and {}``ket'' notations of the physicists
for vectors and forms in Hilbert spaces. Then we can define the \emph{Poisson
bracket} of order~$k$ of two polynomials~$b_{1}$,~$b_{2}$, by
\[
\left\{ b_{1},b_{2}\right\} ^{\left(k\right)}=\partial_{z}^{k}b_{1}.\partial_{\bar{z}}^{k}b_{2}-\partial_{z}^{k}b_{2}.\partial_{\bar{z}}^{k}b_{1}\]
since, for any polynomial~$b$, $\partial_{z}^{k}b\left(z\right)$
is a $k$-form (on~$\mathcal{Z}$) and~$\partial_{\bar{z}}^{k}b\left(z\right)$
is a $k$-vector.
\begin{rem}
The product denoted by a dot in the definition of the Poisson bracket
is a~$\mathbb{C}$-bilinear duality-product between $k$-forms and
$k$-vectors. As an example consider the polynomials \[
b_{1}\left(z\right)=\left\langle z^{\vee3},\xi_{1}^{\vee3}\right\rangle \left\langle \eta_{1}^{\vee2},z^{\vee2}\right\rangle \quad\mbox{and}\quad b_{2}\left(z\right)=\left\langle z^{\vee3},\xi_{2}^{\vee3}\right\rangle \left\langle \eta_{2},z\right\rangle \,.\]
The Poisson bracket of order~$2$ of~$b_{1}$ and~$b_{2}$ is \[
\left\{ b_{1},b_{2}\right\} ^{\left(2\right)}\left(z\right)=2\times6\times\left\langle z^{\vee3},\xi_{1}^{\vee3}\right\rangle \left\langle \eta_{1}^{\vee2}\vee z,\xi_{2}^{\vee3}\right\rangle \left\langle \eta_{2},z\right\rangle -0.\]

\end{rem}

\subsubsection{Some examples of Wick quantizations}

Here is a quick review of the notations used for some useful examples
of Wick quantization. A vector of~$\mathcal{Z}$ is denoted by~$\xi$,
$A$ is a bounded operator and~$z$ is the variable of the polynomials.
In the next table, the first column describes the polynomial and the
second the corresponding Wick quantization (as an operator on~$\mathcal{H}_{fin}$).

\[
\begin{array}{ccc}
\left\langle z,Az\right\rangle  & \leftrightarrow & \mbox{d}\Gamma\left(A\right)\\
\left|z\right|^{2} & \leftrightarrow & N\\
\left\langle z,\xi\right\rangle  & \leftrightarrow & a^{*}\left(\xi\right)\\
\left\langle \xi,z\right\rangle  & \leftrightarrow & a\left(\xi\right)\\
\sqrt{2}\Re\left\langle z,\xi\right\rangle  & \leftrightarrow & \Phi\left(\xi\right)\end{array}\]
The operator~$\mbox{d}\Gamma\left(A\right)$ is the usual second
quantization of an operator restricted to~$\mathcal{H}_{fin}$ multiplied
by a factor~$\varepsilon$. If~$A=I_{\mathcal{Z}}$ we obtain~$N$
the usual number operator multiplied by a factor~$\varepsilon$.
The operators~$a$,~$a^{*}$ and~$\Phi$ are the usual annihilation,
creation and field operators of quantum field theory with an additional~$\sqrt{\varepsilon}$
factor. The real and imaginary parts of a complex number~$\zeta$
are denoted by~$\Re\zeta$ and~$\Im\zeta$. The field operators~$\Phi\left(\xi\right)$
are essentially self-adjoint and this enables us to define the ($\varepsilon$-dependent)
Weyl operators \[
W\left(\xi\right)=e^{i\Phi\left(\xi\right)}\,.\]

\subsubsection{Calculus}

Here are some calculation rules for Wick quantizations of polynomials
in~$\mathcal{P}\left(\mathcal{Z}\right)$. The proofs can be found
in~\cite{MR2465733}.
\begin{prop}
\label{pro:Wick-calculus}For every polynomial~$b\in\mathcal{P}\left(\mathcal{Z}\right)$,
\begin{itemize}
\item $b_{1}^{Wick}b_{2}^{Wick}=\left(\sum_{k=0}^{\min\left\{ p_{1},q_{2}\right\} }\frac{\varepsilon^{k}}{k!}\partial_{z}^{k}b_{1}.\partial_{\bar{z}}^{k}b_{2}\right)^{Wick}$
in~$\mathcal{H}_{fin}$ for any~$b_{i}\in\mathcal{P}_{p_{i},q_{i}}\left(\mathcal{Z}\right)$,
\item $b^{Wick}$ is closable and the domain of the closure contains \[
\mathcal{H}_{0}=\mbox{Vect}\left\{ W\left(z\right)\varphi,\varphi\in\mathcal{H}_{fin},\, z\in\mathcal{Z}\right\} \,,\]
(we still denote by $b^{Wick}$ the closure of $b^{Wick}$),
\item $\left(b^{Wick}\right)^{*}=\bar{b}^{Wick}$ on $\mathcal{H}_{fin}$
(where the bar denotes the usual conjugation on complex numbers),
\item for any $z_{0}$ in $\mathcal{Z}$, $W\left(\frac{\sqrt{2}}{i\varepsilon}z_{0}\right)^{*}b^{Wick}W\left(\frac{\sqrt{2}}{i\varepsilon}z_{0}\right)=\left(b\left(z_{0}+z\right)\right)^{Wick}$
holds on~$\mathcal{H}_{0}$ where~$b\left(z_{0}+\cdot\right)\in\mathcal{P}\left(\mathcal{Z}\right)$.
\end{itemize}
\end{prop}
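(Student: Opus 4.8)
The plan is to reduce every assertion to the case of monomials by linearity in $b$ (resp. in $b_1$ and $b_2$), and then to carry out explicit computations with symmetrized tensors using the defining formula for $b^{Wick}$ on $\bigvee^{n}\mathcal{Z}$.

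For the composition formula (first item), I would fix $b_{i}\in\mathcal{P}_{p_{i},q_{i}}(\mathcal{Z})$ and compute the action of $b_{1}^{Wick}b_{2}^{Wick}$ on a vector $z^{\vee n}\in\bigvee^{n}\mathcal{Z}$; by polarization it is enough to test on such vectors. Applying the definition twice, the result equals, up to the explicit scalar prefactors and powers of $\varepsilon$, the composition $(\tilde{b}_{1}\vee I)(\tilde{b}_{2}\vee I)$ applied to $z^{\vee n}$. The essential step is to re-expand this composition by sorting the $p_{1}$ input legs of $\tilde{b}_{1}$ according to how many of them, say $k$, are fed by the $q_{2}$ output legs of $\tilde{b}_{2}$, the remaining ones being fed by the ambient identity factor: this partial contraction is exactly $\partial_{z}^{k}b_{1}\cdot\partial_{\bar{z}}^{k}b_{2}\in\mathcal{P}_{p_{1}+p_{2}-k,\,q_{1}+q_{2}-k}(\mathcal{Z})$ in the sense of~\eqref{eq:derivee_poly_wick}. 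One then checks that the binomial factors coming from the various $\vee$-products collapse to $\frac{1}{k!}$, that the powers of $\varepsilon$ match (the left side carries $\varepsilon^{(p_{1}+q_{1}+p_{2}+q_{2})/2}$ while the $k$-th term on the right carries $\varepsilon^{(p_{1}+q_{1}+p_{2}+q_{2}-2k)/2}$, which accounts for the extra $\varepsilon^{k}$), and that $k$ ranges over $0\le k\le\min\{p_{1},q_{2}\}$ since one cannot contract more legs than are available. This is the computational heart of the proposition.

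For the third item I would establish the pairing identity $\langle\psi,b^{Wick}\varphi\rangle=\langle\bar{b}^{Wick}\psi,\varphi\rangle$ for all $\psi,\varphi\in\mathcal{H}_{\mathrm{fin}}$ directly from the definition: the operator $\tilde{b}\vee I$ has adjoint the corresponding operator for $\bar{b}$, and the scalar factor $\frac{\sqrt{n!(n+q-p)!}}{(n-p)!}$ is invariant under the simultaneous exchange $(p,n)\leftrightarrow(q,n+q-p)$, so the two sides agree degree by degree. This gives $\bar{b}^{Wick}\subseteq(b^{Wick})^{*}$; since $\bar{b}^{Wick}$ is densely defined, $b^{Wick}$ is closable, which is the second item's closability claim, and the identity just proved is the stated equality on $\mathcal{H}_{\mathrm{fin}}$. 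For the conjugation-by-Weyl formula (fourth item) I would first prove it weakly on $\mathcal{H}_{\mathrm{fin}}$, i.e. evaluate $\langle\psi,W(\tfrac{\sqrt{2}}{i\varepsilon}z_{0})^{*}b^{Wick}W(\tfrac{\sqrt{2}}{i\varepsilon}z_{0})\varphi\rangle$ for $\psi,\varphi\in\mathcal{H}_{\mathrm{fin}}$; writing $b^{Wick}$ as a finite combination of Wick-ordered products of $a^{*}$ and $a$ and inserting the standard translation rule $W(\tfrac{\sqrt{2}}{i\varepsilon}z_{0})^{*}a^{\#}(\xi)W(\tfrac{\sqrt{2}}{i\varepsilon}z_{0})=a^{\#}(\xi)+(\text{scalar in }z_{0})$ reproduces exactly the Wick symbol $b(z_{0}+\cdot)$. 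Combined with the closability already obtained, this shows that $W(\tfrac{\sqrt{2}}{i\varepsilon}z_{0})\varphi\in D(\overline{b^{Wick}})$ for every $\varphi\in\mathcal{H}_{\mathrm{fin}}$, which finishes the domain claim of the second item (as $\mathcal{H}_{0}$ is spanned by such vectors) and upgrades the fourth identity to hold on $\mathcal{H}_{0}$.

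The main obstacle is the bookkeeping in the composition formula: one must keep precise track of the symmetrization operators $\mathcal{S}_{n}$ when composing $\tilde{b}_{1}\vee I$ with $\tilde{b}_{2}\vee I$, and verify that the combinatorial coefficients from the $\vee$-products, from the definition of $b^{Wick}$ on $\bigvee^{n}\mathcal{Z}$, and from the definition of $\partial_{z}^{k}\partial_{\bar{z}}^{k}$ conspire to give precisely $\frac{\varepsilon^{k}}{k!}$. A convenient way to organize this, and to sidestep a head-on assault on the combinatorics, is to test all identities against exponential (coherent) vectors $E(z)=\sum_{n}\tfrac{1}{n!}\varepsilon^{-n/2}z^{\otimes n}$, on which annihilation operators act as scalars and $b^{Wick}$ has the generating-function expression $\langle E(w),b^{Wick}E(z)\rangle=e^{\langle w,z\rangle/\varepsilon}\langle w^{\vee q},\tilde{b}z^{\vee p}\rangle$; the composition rule then becomes an identity between analytic functions of $z$ and $w$ obtained by differentiating in $k$ directions, and the translation formula is immediate. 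The remaining points (closability and the adjoint relation) are soft once the defining formula is in hand.
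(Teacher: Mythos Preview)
The paper does not actually prove this proposition: immediately before the statement it says ``The proofs can be found in~\cite{MR2465733}'' and gives no further argument. So there is no in-paper proof to compare against; your outline is essentially the standard proof one finds in that reference (Ammari--Nier), and the strategy you describe---reducing to monomials, counting contractions between the $p_1$ input legs of $\tilde b_1$ and the $q_2$ output legs of $\tilde b_2$, checking the adjoint relation directly from the definition to get closability, and using the CCR translation $W^{*}a^{\#}(\xi)W=a^{\#}(\xi)+\text{scalar}$ for the Weyl conjugation---is correct and matches that reference. Your suggestion to test on exponential vectors is also the cleaner route used there for the composition formula; the only caveat is to be careful with the normalization of $E(z)$ so that the generating identity $\langle E(w),b^{Wick}E(z)\rangle=e^{\langle w,z\rangle/\varepsilon}b(z,\bar w)$ comes out with the right powers of $\varepsilon$, but this is bookkeeping rather than a gap.
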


\section{Main results and a simple example\label{sec:results}}

Our two hypotheses are:
\begin{description}
\item [{H1}] Let~$\left(\alpha_{t}\right)_{t\in\mathbb{R}}$ be a one
parameter family of self-adjoint operators on~$\mathcal{Z}$ defining
a strongly continuous dynamical system~$u_{\alpha}(t,s)$.
\item [{H1'}] Assume~\textbf{H1} and additionally that the dynamical system
preserves a dense set~$D$ such that, for any~$\psi\in D$, $u_{\alpha}\left(\cdot,\cdot\right)\psi$
belongs to~$\mathcal{C}^{1}\left(\mathbb{R}^{2},\mathcal{Z}\right)\cap\mathcal{C}^{0}\left(\mathbb{R}^{2},D\right)$.
\item [{H2}] Let~$\beta$ be in~$\mathcal{C}^{0}\left(\mathbb{R};\mathcal{Z}^{\vee2}\right)$,
($\beta_{t}$ defines a $\mathbb{C}$-antilinear Hilbert-Schmidt operator
by~$z\mapsto\left(I_{\mathcal{Z}}\vee\left\langle z\right|\right)\beta_{t}$).
\end{description}
With~\textbf{H1'} and~\textbf{H2}, the \emph{classical flow} associated
with~$Q_{t}\left(z\right)=\left\langle z,\alpha_{t}z\right\rangle +\Im\left\langle \beta_{t},z^{\vee2}\right\rangle $
of quadratic polynomials is the solution~$\varphi\left(t,s\right)$
to the equation\begin{equation}
\left\{ \begin{array}{rcl}
i\partial_{t}\varphi\left(t,0\right)\left[z\right] & = & \partial_{\bar{z}}Q_{t}\left(\varphi\left(t,0\right)\left[z\right]\right)\\
\varphi\left(0,0\right) & = & I_{\mathcal{Z}}\end{array}\right.\label{eq:classical-flow}\end{equation}
where~$\partial_{\bar{z}}Q_{t}\left(z\right)=\alpha z+i\left(I_{\mathcal{Z}}\vee\left\langle z\right|\beta\right)$,
written in a weak sense.

Although things are better visualized by writing a differential equation,
the hypotheses~\textbf{H1} and~\textbf{H2} suffice to define the
dynamical system~$\varphi\left(t,s\right)$. Details about this point
are given in Section~\ref{sec:Classical-evolution}. Actually~$\varphi\left(t,s\right)$
is a family of symplectomorphisms of~$\left(\mathcal{Z},\sigma\right)$
which are naturally decomposed into their $\mathbb{C}$-linear and
$\mathbb{C}$-antilinear parts: \[
\varphi=L+A\,,\quad L\in\mathcal{L}\left(\mathcal{Z}\right)\,,\quad AA^{*}\in\mathcal{L}_{1}\left(\mathcal{Z}\right)\,.\]
See Appendix~\ref{sec:symplectic-transformations} for more details
about symplectomorphisms and this decomposition.

Similarly, the \emph{quantum flow} associated with~$Q_{t}$ is the
solution~$U\left(t,s\right)$ of \begin{equation}
\left\{ \begin{array}{rcl}
i\varepsilon\partial_{t}U\left(t,0\right) & = & Q_{t}^{Wick}U\left(t,0\right)\\
U\left(0,0\right) & = & I_{\mathcal{H}}\end{array}\right.\,.\label{eq:quantum-flow}\end{equation}
The precise meaning of the solutions to this equation is specified
in Section~\ref{sec:Existence-quantum-evolution}. 

We are ready to state our two main results dealing with the evolution
of a Wick observable~$b^{Wick}$, $b\in\mathcal{P}\left(\mathcal{Z}\right)$,
under the quantum flow, that is to say the quantity~$U\left(0,t\right)b^{Wick}U\left(t,0\right)$.
(We use the usual notation~$\left\langle N\right\rangle =\sqrt{N^{2}+1}$.)
\begin{thm}
\label{thm:integral-formula}Assume~\textbf{H1} and~\textbf{H2}.
Let~$b\in\mathcal{P}_{\leq m}\left(\mathcal{Z}\right)$ be a polynomial.
Then, for any time~$t\geq0$, the formula\begin{equation}
U\left(0,t\right)b^{Wick}U\left(t,0\right)=\left(b^{\left(0\right),t}\right)^{Wick}+\sum_{k=1}^{\left\lfloor m/2\right\rfloor }\left(\frac{\varepsilon}{2}\right)^{k}\int_{\Delta_{t}^{k}}\left(b^{\left(k\right)t,\bar{s}^{k}}\right)^{Wick}d\bar{s}^{k}\label{eq:integral-formula}\end{equation}
holds as an equality of continuous operators from~$\mathcal{D}\left(\left\langle N\right\rangle ^{m/2}\right)$
to~$\mathcal{H}$, where
\begin{itemize}
\item $\bar{s}^{k}=\left(s_{1},\dots,s_{k}\right)\in\mathbb{R}_{+}^{k}$
and~$\Delta_{t}^{k}=\left\{ \bar{s}^{k}\in\mathbb{R}_{+}^{k},\,\sum_{j=1}^{k}s_{j}\leq t\right\} $,
\item the polynomials~$b^{\left(k\right)t,\bar{s}^{k}}$ are defined recursively
by\[
\left\{ \begin{array}{rcl}
b^{\left(0\right)t}\left(z\right) & = & b\left(\varphi\left(t,0\right)z\right)\\
b^{\left(k+1\right)t,\bar{s}^{k+1}} & = & \lambda^{s_{k+1}}b^{\left(k\right)t,\bar{s}^{k}}\end{array}\right.\,,\]
with~$\lambda^{s}c=-i\left\{ c\circ\varphi\left(0,s\right),Q_{s}\right\} ^{\left(2\right)}\circ\varphi\left(s,0\right)$
 for any polynomial~c.
\end{itemize}
\end{thm}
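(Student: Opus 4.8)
The plan is to work in the Heisenberg picture and iterate a Duhamel formula, exploiting the fact that the commutator of a Wick polynomial with the Wick quantization of the \emph{quadratic} Hamiltonian $Q_t$ contains only two terms. Set $B(t)=U(0,t)b^{Wick}U(t,0)$. Differentiating and feeding in the quantum flow equation~\eqref{eq:quantum-flow} (and its adjoint form for $U(0,t)$) gives $i\varepsilon\,\partial_t B(t)=U(0,t)\,[b^{Wick},Q_t^{Wick}]\,U(t,0)$, hence the integral identity
\begin{equation*}
B(t)=b^{Wick}+\frac{1}{i\varepsilon}\int_0^t U(0,s)\,[b^{Wick},Q_s^{Wick}]\,U(s,0)\,ds,
\end{equation*}
all of it to be read as an equality of continuous operators $\mathcal{D}(\langle N\rangle^{m/2})\to\mathcal H$; the mapping properties of $U(t,s)$ on the scale $\mathcal{D}(\langle N\rangle^{k/2})$ and the strong continuity that make this rigorous are exactly what Section~\ref{sec:Existence-quantum-evolution} provides.

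The algebraic input is the commutator with a quadratic: since $Q_t\in\mathcal P_{\le 2}(\mathcal Z)$, the Wick product formula of Proposition~\ref{pro:Wick-calculus} truncates at $k=2$ and the $k=0$ terms cancel in the bracket, so that
\begin{equation*}
[b^{Wick},Q_t^{Wick}]=\varepsilon\,\bigl(\{b,Q_t\}^{(1)}\bigr)^{Wick}+\frac{\varepsilon^{2}}{2}\,\bigl(\{b,Q_t\}^{(2)}\bigr)^{Wick}.
\end{equation*}
Here $\{b,Q_t\}^{(1)}$ is again of total order $\le m$ (it is the generator of the classical flow), while $\{b,Q_t\}^{(2)}$ has total order $\le m-2$; this degree drop is precisely what terminates the expansion at $k=\lfloor m/2\rfloor$ and organizes the recursion.

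Next I would pass to the interaction picture relative to the classical flow. For $c\in\mathcal P(\mathcal Z)$ put $\gamma_c(s):=U(0,s)\,(c\circ\varphi(0,s))^{Wick}\,U(s,0)$, so that $\gamma_c(0)=c^{Wick}$ and, since $\varphi(t,0)\circ\varphi(0,t)=I_{\mathcal Z}$, one has $B(t)=\gamma_{b^{(0)t}}(t)$. The classical flow equation~\eqref{eq:classical-flow} (valid in the weak sense recalled in Section~\ref{sec:Classical-evolution}) and the chain rule for the symplectomorphisms of Appendix~\ref{sec:symplectic-transformations} give $\partial_s(c\circ\varphi(0,s))=i\,\{c\circ\varphi(0,s),Q_s\}^{(1)}$; substituting this and the previous display into $i\varepsilon\,\partial_s\gamma_c(s)=U(0,s)\bigl([(c\circ\varphi(0,s))^{Wick},Q_s^{Wick}]+i\varepsilon\,(\partial_s(c\circ\varphi(0,s)))^{Wick}\bigr)U(s,0)$ makes the first-order Poisson-bracket terms cancel identically, leaving $\partial_s\gamma_c(s)=\frac{\varepsilon}{2}\,\gamma_{\lambda^s c}(s)$ with $\lambda^s$ exactly the operator of the statement — the conjugations $\varphi(0,s)$ and $\varphi(s,0)$ appearing because the surviving term is the second-order bracket evaluated at time $s$ with $c$ transported there by the classical flow. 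Iterating the resulting integral equation $\gamma_c(t)=c^{Wick}+\frac{\varepsilon}{2}\int_0^t\gamma_{\lambda^{s}c}(s)\,ds$ starting from $c=b^{(0)t}$, and using that each $\lambda^{s}$ lowers the total order by $2$ so that the $(k+1)$-st iterate vanishes once $k=\lfloor m/2\rfloor$, produces the polynomials $b^{(k)t,\bar{s}^{k}}=\lambda^{s_k}\cdots\lambda^{s_1}b^{(0)t}$, each weighted by $(\varepsilon/2)^k$ and a $k$-fold (time-ordered) integral, which is~\eqref{eq:integral-formula}.

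The main obstacle I expect is functional-analytic rather than algebraic: all Wick observables here are unbounded, so every step — differentiating $\gamma_c(s)$ under the integral, Duhamel, the commutator identity, the chain rule for $c\circ\varphi(0,s)$ — has to be justified as an operator identity on $\mathcal{D}(\langle N\rangle^{m/2})$ with bounds uniform on compact time intervals; this is where one leans on the behaviour of $U(t,s)$ on the $\langle N\rangle$-scales from Section~\ref{sec:Existence-quantum-evolution} and on the fact, guaranteed by~\textbf{H2} and the Shale/Hilbert--Schmidt structure of $\varphi(t,s)$ recalled in Appendix~\ref{sec:symplectic-transformations}, that $c\mapsto c\circ\varphi(t,s)$ maps each $\mathcal P_{\le m}(\mathcal Z)$ continuously into itself. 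A secondary point to pin down is the exact integration domain generated by the nested iteration (the time-ordered simplex, to be matched with $\Delta_t^k$) together with the measurability and integrability in $\bar{s}^{k}$ of $s\mapsto(b^{(k)t,\bar{s}^{k}})^{Wick}$ as an operator-valued map between the relevant spaces.
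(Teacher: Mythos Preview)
Your proposal is correct and follows essentially the same route as the paper: define $\gamma_c(s)=U(0,s)(c\circ\varphi(0,s))^{Wick}U(s,0)$, differentiate, observe that the first-order Poisson bracket from $[\,\cdot\,,Q_s^{Wick}]$ cancels the time derivative of $c\circ\varphi(0,s)$, and iterate the resulting integral identity until the degree drop by $2$ at each step kills the remainder. The paper's proof in Section~\ref{sec:integral-formula} is exactly this argument, stated a bit more tersely (it writes the key identity for $\gamma_c$ directly rather than first passing through a Duhamel formula for $B(t)$), and it flags the same functional-analytic and integration-domain issues you identify.
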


\begin{thm}
\label{thm:exponential-formula}Assume~\textbf{H1} and~\textbf{H2}.
Let~$m\geq2$ and~$b\in\mathcal{P}_{\leq m}\left(\mathcal{Z}\right)$
a polynomial. Then introducing
\begin{itemize}
\item the vector~$v_{t}\in\bigotimes^{2}\mathcal{Z}$ such that for all~$z_{1},\, z_{2}\in\mathcal{Z}$,
\[
\left\langle z_{1}\otimes z_{2},v_{t}\right\rangle =\left\langle z_{1},L^{*}\left(t,0\right)A\left(t,0\right)z_{2}\right\rangle \,,\]

\item the operator on~$\mathcal{P}\left(\mathcal{Z}\right)$ \[
\Lambda^{t}c\left(z\right)=\Tr\left[-2A^{*}\left(t,0\right)A\left(t,0\right)\partial_{\bar{z}}\partial_{z}c\left(z\right)\right]+\left\langle v_{t}\right|.\,\partial_{\bar{z}}^{2}c\left(z\right)+\partial_{z}^{2}c\left(z\right).\left|v_{t}\right\rangle \,,\]

\end{itemize}
the formula\begin{equation}
U\left(0,t\right)b^{Wick}U\left(t,0\right)=\left(e^{\frac{\varepsilon}{2}\Lambda^{t}}\left(b\circ\varphi\left(t,0\right)\right)\right)^{Wick}\label{eq:exponential-formula}\end{equation}
holds as an equality of continuous operators from~$\mathcal{D}\left(\left\langle N\right\rangle ^{m/2}\right)$
to~$\mathcal{H}$.\end{thm}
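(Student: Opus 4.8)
\textbf{Proof proposal for Theorem~\ref{thm:exponential-formula}.}

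The plan is to derive the exponential formula by a finite-dimensional computation followed by a limiting argument, exactly the strategy announced in the introduction. First I would reduce to the case where $\mathcal{Z}$ is finite-dimensional: fix a polynomial $b\in\mathcal{P}_{\leq m}(\mathcal{Z})$, note that its symbol $\tilde{b}$ together with $v_t$ and $A^*A$ are Hilbert--Schmidt (or trace class) objects that can be approximated by their compressions to an increasing sequence of finite-dimensional subspaces $\mathcal{Z}_n$ that also asymptotically reduce the flow $\varphi(t,0)$; the convergence of the quantized operators as bounded maps $\mathcal{D}(\langle N\rangle^{m/2})\to\mathcal{H}$ follows from the norm estimates on Wick operators recalled after Proposition~\ref{pro:Wick-calculus}. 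This is precisely where the Shale/Hilbert--Schmidt condition on the antilinear part $A$ enters, so I would be careful to state the approximation in terms of $AA^*\in\mathcal{L}_1$ rather than naively truncating.

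In finite dimension the quantum flow $U(t,0)$ generated by the quadratic Hamiltonian $Q_t^{Wick}$ is (up to the metaplectic phase) the metaplectic representation of the symplectomorphism $\varphi(t,0)$, and conjugation of a Weyl-quantized observable by a metaplectic operator is given by the exact Mehler-type relation $U(0,t)\,\mathrm{Op}^{Weyl}(a)\,U(t,0)=\mathrm{Op}^{Weyl}(a\circ\varphi(t,0))$ — the classical Egorov theorem is exact for quadratic generators, with no remainder. The second step is therefore to transport this identity from the Weyl side to the Wick side. I would use the known explicit intertwining between Wick and Weyl quantizations (as in~\cite{MR1186643,MR2465733}): on $\mathcal{P}_{\leq m}(\mathcal{Z}_n)$ one has $b^{Wick}=\bigl(e^{-\frac{\varepsilon}{2}\Delta_{\mathcal{Z}_n}}b\bigr)^{Weyl}$ for an appropriate Laplacian-type operator built from the complex structure, so that $b^{Weyl}=\bigl(e^{\frac{\varepsilon}{2}\Delta}b\bigr)^{Wick}$. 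Feeding the Egorov identity through this dictionary, $U(0,t)b^{Wick}U(t,0)$ equals the Wick quantization of
\[
e^{\frac{\varepsilon}{2}\Delta}\Bigl(\bigl(e^{-\frac{\varepsilon}{2}\Delta}b\bigr)\circ\varphi(t,0)\Bigr),
\]
and the content of the theorem is that this composite operator equals $e^{\frac{\varepsilon}{2}\Lambda^t}(b\circ\varphi(t,0))$. So the third step is the algebraic identity $e^{\frac{\varepsilon}{2}\Delta}\circ(\,\cdot\circ\varphi)\circ e^{-\frac{\varepsilon}{2}\Delta}=e^{\frac{\varepsilon}{2}\Lambda^t}\circ(\,\cdot\circ\varphi)$, i.e.\ conjugating the pull-back by $\varphi(t,0)$ through the heat-type semigroup turns the constant-coefficient operator $\Delta$ into $\Lambda^t$. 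Since $\varphi$ is real-linear, conjugating $\Delta$ (a second-order constant-coefficient operator in $z,\bar z$) by the linear change of variables $z\mapsto\varphi(t,0)z$ produces again a second-order constant-coefficient operator, whose coefficients I would compute using the decomposition $\varphi=L+A$: the $\partial_{\bar z}\partial_z$ part picks up $L^*L$ and $A^*A$, and on the quadratic symplectic constraint $L^*L-\overline{A^*A}=I$ one checks the cross terms collapse exactly to the $-2A^*A$ coefficient and the $v_t=L^*A$ contributions written in the statement. Actually the cleanest route is to verify the infinitesimal version: differentiating the claimed identity in $t$ and matching the time-derivative of $\Lambda^t$ (through the equations of motion for $L,A$ coming from~\eqref{eq:classical-flow}) against $-i$ times the order-$2$ Poisson bracket with $Q_t$, which is the generator appearing in the integral formula of Theorem~\ref{thm:integral-formula}.

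The main obstacle, as flagged in the introduction, is not the finite-dimensional algebra but making the dimensional limit rigorous in the infinite-dimensional setting: one must ensure that $v_t$, $A^*(t,0)A(t,0)$ and the whole operator $\Lambda^t$ make sense on $\mathcal{P}(\mathcal{Z})$ (this is where $\beta\in\mathcal{C}^0(\mathbb{R};\mathcal{Z}^{\vee2})$ and the Hilbert--Schmidt property of $A$ are used — they guarantee $v_t\in\bigotimes^2\mathcal{Z}$ and $A^*A$ trace class, so that $\Lambda^t$ maps $\mathcal{P}_{\leq m}$ into itself and $e^{\frac{\varepsilon}{2}\Lambda^t}$ is a finite sum when applied to a polynomial of bounded total order), and that both sides of~\eqref{eq:exponential-formula} are obtained as genuine limits of their finite-dimensional counterparts as operators $\mathcal{D}(\langle N\rangle^{m/2})\to\mathcal{H}$. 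I would handle this by combining the uniform-in-$n$ Wick-operator bounds with the continuity of $\varphi(t,0)$ and its linear/antilinear parts under the approximation, and by invoking the already-established Theorem~\ref{thm:integral-formula} to identify the limit term by term — which simultaneously proves that the two expansions agree, as promised. A minor point to dispatch is the metaplectic phase ambiguity: it is a scalar and drops out of the conjugation $U(0,t)(\cdot)U(t,0)$, so it never affects the formula.
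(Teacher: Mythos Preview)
Your strategy is the paper's strategy: exact Egorov for Weyl in finite dimension, the Wick--Weyl dictionary (done in the paper via symplectic Fourier transform of the Gaussian kernel, which gives $|T^*z|^2-|z|^2=2\langle z,AA^*z\rangle+\langle v,z^{\vee2}\rangle+\langle z^{\vee2},v\rangle$ directly), then a dimensional limit. Two points where the paper is sharper than your sketch. First, the link between $U(t,0)$ and $\varphi(t,0)$ is not assumed but proved: the paper shows (Proposition~\ref{pro:bogoliubov-implementation}) that $U(t,0)$ is a Bogoliubov implementer of $T=-i\varphi(0,t)i$, whence $T^*=\varphi(t,0)$, and this is what feeds into the general conjugation formula $U^*b^{Wick}U=(e^{\frac{\varepsilon}{2}\Lambda[T]}[b(T^*\cdot)])^{Wick}$. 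Second, and more substantively, your phrase ``subspaces $\mathcal{Z}_n$ that also asymptotically reduce the flow $\varphi(t,0)$'' hides the real work of the infinite-dimensional step: an arbitrary increasing sequence of finite-dimensional subspaces will \emph{not} be invariant under $\varphi(t,0)$, so one cannot simply restrict the Egorov identity. The paper resolves this by the polar decomposition $T=u\,e^{c\rho}$ of an implementable symplectomorphism (Theorem~\ref{thm:decomp_symplecto}), which produces a Hilbert basis diagonalizing both the linear and antilinear parts; the projections $\pi_K$ onto the first $K$ basis vectors then genuinely commute with $\hat T=e^{c\rho}$, the finite-dimensional formula holds on each cylindrical sector, and one passes to the limit by weak convergence of the compressed kernels $\tilde b_K\to\tilde b$ together with the $\langle N\rangle^{m/2}$ bounds. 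Your fallback of invoking Theorem~\ref{thm:integral-formula} to identify the limit would work, but the paper deliberately proves the two expansions independently, so that their term-by-term agreement is a corollary rather than an input.
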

\begin{rem}
The derivative~$\partial_{\bar{z}}\partial_{z}c\left(z\right)$ is
in~$\mathcal{L}\left(\mathcal{Z}\right)$ and~$\mbox{Tr}$ denotes
the trace on the subset of trace class operators of~$\mathcal{L}\left(\mathcal{Z}\right)$.
\end{rem}

\begin{rem}
For~$m\geq2$ the operators~$\lambda^{t}$ and~$\Lambda^{t}$ send~$\mathcal{P}_{m}\left(\mathcal{Z}\right)$
into~$\mathcal{P}_{m-2}\left(\mathcal{Z}\right)$.
\end{rem}

\begin{rem}
 The exponential is intended in the sense\[
e^{\frac{\varepsilon}{2}\Lambda^{t}}b=\sum_{k=0}^{\left\lfloor m/2\right\rfloor }\frac{1}{k!}\left(\frac{\varepsilon}{2}\Lambda^{t}\right)^{k}b\]
for a polynomial~$b$ in~$\mathcal{P}_{\leq m}\left(\mathcal{Z}\right)$.\end{rem}
\begin{example}
To give an idea of the behavior of these formulae we apply them in
the simplest (non trivial) possible situation, with~$\mathcal{Z}=\mathbb{C}$
and~$Q_{t}\left(z\right)=\Im\left(z^{2}\right)$. As~$Q_{t}$ is
time-independent the classical evolution equation is autonomous and
thus we can write~$\varphi\left(t,s\right)=\varphi\left(t-s\right)$
and~$i\partial_{t}\varphi\left(t\right)z=\partial_{\bar{z}}Q\left(\varphi\left(t\right)z\right)=i\overline{\varphi\left(t\right)z}$.
The solution is~$\varphi\left(t\right)z=z\cosh t+\bar{z}\sinh t$.
We can then compute both \[
\int_{0}^{t}b^{\left(1\right)t,s}ds\qquad\mbox{and}\qquad\Lambda^{t}\left(b\circ\varphi\left(t\right)\right)\,.\]
The first one is easily computed as~$\partial_{z}^{2}Q\left(z\right)=-i$,
$\partial_{\bar{z}}^{2}Q\left(z\right)=i$ and, with~$c=b\circ\varphi\left(t\right)$,
\begin{eqnarray*}
-i\left\{ c\circ\varphi\left(-s\right),Q\left(z\right)\right\} ^{\left(2\right)} & = & \left(\partial_{z}^{2}+\partial_{\bar{z}}^{2}\right)\left(c\circ\varphi\left(-s\right)\right)\\
 & = & \left[\cosh\left(-2s\right)\left(\partial_{z}^{2}+\partial_{\bar{z}}^{2}\right)c\right.\\
 &  & \left.+2\sinh\left(-2s\right)\partial_{\bar{z}}\partial_{z}c\right]\circ\varphi\left(-s\right)\end{eqnarray*}
and thus\begin{eqnarray*}
\int_{0}^{t}b^{\left(1\right)t,s}ds & = & \int_{0}^{t}\left(\cosh\left(-2s\right)\left(\partial_{z}^{2}+\partial_{\bar{z}}^{2}\right)+2\sinh\left(-2s\right)\partial_{\bar{z}}\partial_{z}\right)ds\left(b\circ\varphi\left(t\right)\right)\\
 & = & \left(\frac{1}{2}\sinh\left(2t\right)\left(\partial_{z}^{2}+\partial_{\bar{z}}^{2}\right)+\left(1-\cosh\left(2t\right)\right)\partial_{\bar{z}}\partial_{z}\right)\left(b\circ\varphi\left(t\right)\right)\,.\end{eqnarray*}
Now we compute the second one. Since~$L\left(t,0\right)z=L^{*}\left(t,0\right)z=z\cosh t$
and~$A\left(t,0\right)z=A^{*}\left(t,0\right)z=\bar{z}\sinh t$,
we get~$v_{t}=\cosh t\sinh t$ and then obtain directly\[
\Lambda^{t}=\left(1-\cosh\left(2t\right)\right)\partial_{\bar{z}}\partial_{z}+\frac{1}{2}\sinh\left(2t\right)\left(\partial_{z}^{2}+\partial_{\bar{z}}^{2}\right)\,.\]
We thus obtain the same result with the two computations for the
term of order~1 in~$\varepsilon$.

Then we can show that \[
\int_{\Delta_{t}^{k}}b^{\left(k\right)t,\bar{s}^{k}}d\bar{s}^{k}=\frac{1}{k!}\left(\Lambda^{t}\right)^{k}\left(b\circ\varphi\left(t\right)\right)\]
since\begin{multline*}
\int_{\Delta_{t}^{k}}\prod_{j=1}^{k}\left(2\sinh\left(-2s_{j}\right)\partial_{\bar{z}}\partial_{z}+\cosh\left(-2s_{j}\right)\left(\partial_{z}^{2}+\partial_{\bar{z}}^{2}\right)\right)d\bar{s}^{k}\\
=\frac{1}{k!}\left(\left(1-\cosh\left(-2t\right)\right)\partial_{\bar{z}}\partial_{z}-\frac{1}{2}\sinh\left(-2t\right)\left(\partial_{z}^{2}+\partial_{\bar{z}}^{2}\right)\right)^{k}\end{multline*}
because \[
\frac{d}{ds}\left[\left(1-\cosh\left(-2s\right)\right)\partial_{\bar{z}}\partial_{z}-\frac{1}{2}\sinh\left(-2s\right)\left(\partial_{z}^{2}+\partial_{\bar{z}}^{2}\right)\right]=2\sinh\left(-2s\right)\partial_{\bar{z}}\partial_{z}+\cosh\left(-2s\right)\left(\partial_{z}^{2}+\partial_{\bar{z}}^{2}\right)\,.\]
\end{example}
\begin{rem}
Since these two formulae will be proven independently and the identification
of each term of order~$k$ in~$\varepsilon$ in the expansion of
the symbol is clear, we carry out a computation only on the formal
level for the convenience of the reader to show the link between the
two formulae in the general case. 

We show (formally) that \[
\frac{d}{ds}\Lambda^{s}=\lambda^{s}\,.\]
Then it is simple to show that \[
\int_{\bar{s}^{k}\in\Delta_{t}^{k}}\lambda^{s_{k}}\lambda^{s_{k-1}}\cdots\lambda^{s_{1}}d\bar{s}^{k}=\frac{1}{k!}\left(\Lambda^{t}\right)^{k}\]
as operators on~$\mathcal{P}\left(\mathcal{Z}\right)$ once the case~$k=2$
is understood:\begin{eqnarray*}
2\int_{\bar{s}^{2}\in\Delta_{t}^{2}}\lambda^{s_{2}}\lambda^{s_{1}}d\bar{s}^{2} & = & \int_{0}^{t}\int_{0}^{s_{1}}\lambda^{s_{2}}\lambda^{s_{1}}ds_{2}ds_{1}+\int_{0}^{t}\int_{0}^{s_{2}}\lambda^{s_{2}}\lambda^{s_{1}}ds_{1}ds_{2}\\
 & = & \int_{0}^{t}\Lambda^{s_{1}}\lambda^{s_{1}}ds_{1}+\int_{0}^{t}\lambda^{s_{2}}\Lambda^{s_{2}}ds_{2}\\
 & = & \left(\Lambda^{t}\right)^{2}\,.\end{eqnarray*}
In this computation we have used that~$\Lambda^{0}=0$ as~$A\left(0,0\right)=0$.

We first give~$\lambda^{s}$ in a more explicit way. As~$\partial_{\bar{z}}^{2}Q=i\left|\beta\right\rangle $
and~$\partial_{z}^{2}Q=-i\left\langle \beta\right|$ we first get\[
\lambda c=\left[\partial_{z}^{2}\left(c\circ\varphi^{-1}\right).\left|\beta\right\rangle +\left\langle \beta\right|.\partial_{\bar{z}}^{2}\left(c\circ\varphi^{-1}\right)\right]\circ\varphi\]
with~$\varphi=\varphi\left(t,0\right)$ and omitting the time dependence
everywhere. Then with~$\varphi=L+A$ (and thus~$\varphi^{-1}=L^{*}-A^{*}$)
and~$\left\langle z_{1},Az_{2}\right\rangle =\left\langle z_{1}\otimes z_{2},w_{A}\right\rangle $
we obtain\begin{eqnarray*}
\lambda c\left(z\right) & = & \partial_{z}^{2}c\left(z\right).\left|\left(L^{*\vee2}+A^{*\vee2}\right)\beta\right\rangle +\left\langle \left(L^{*\vee2}+A^{*\vee2}\right)\beta\right|.\partial_{\bar{z}}^{2}c\left(z\right)\\
 &  & -2\left(\left\langle \left(I_{\mathcal{Z}}\otimes\partial_{\bar{z}}\partial_{z}c\left(z\right)^{*}L^{*}\right)\beta,w_{A}\right\rangle +\left\langle w_{A},\left(I_{\mathcal{Z}}\otimes\partial_{\bar{z}}\partial_{z}c\left(z\right)L^{*}\right)\beta\right\rangle \right)\,.\end{eqnarray*}

Then we compute~$\frac{d}{ds}\Lambda^{s}$ in several parts. The
linear and antilinear parts of the equation $i\partial_{s}\varphi\left(s,0\right)z=\partial_{\bar{z}}Q_{s}\left(\varphi\left(s,0\right)z\right)$
give \begin{eqnarray*}
\partial_{s}Lz & = & -i\alpha Lz+\left(\left\langle Az\right|\vee I_{\mathcal{Z}}\right)\left|\beta\right\rangle \\
\partial_{s}Az & = & -i\alpha Az+\left(\left\langle Lz\right|\vee I_{\mathcal{Z}}\right)\left|\beta\right\rangle \,.\end{eqnarray*}
We now show that~$\partial_{s}v_{s}=\left|\left(L^{*\vee2}+A^{*\vee2}\right)\beta\right\rangle $,\begin{eqnarray*}
\partial_{s}\left\langle z_{1}\otimes z_{2},v_{s}\right\rangle  & = & \partial_{s}\left\langle Lz_{1},Az_{2}\right\rangle \\
 & = & \left\langle -i\alpha Lz_{1},Az_{2}\right\rangle +\left\langle \beta,Az_{2}\vee Az_{1}\right\rangle \\
 &  & +\left\langle Lz_{1},-i\alpha Az_{2}\right\rangle +\left(\left\langle Lz_{2}\right|\vee\left\langle Lz_{1}\right|\right)\left|\beta\right\rangle \\
 & = & \left\langle \beta,\left(A\vee A\right)\left(z_{1}\vee z_{2}\right)\right\rangle +\left\langle \left(L\vee L\right)\left(z_{1}\vee z_{2}\right),\beta\right\rangle \\
 & = & \left\langle z_{1}\vee z_{2},\left(L^{*\vee^{2}}+A^{*\vee^{2}}\right)\beta\right\rangle \,.\end{eqnarray*}
And thus~$\partial_{s}\left(\partial_{z}^{2}.\left|v\right\rangle +\left\langle v\right|.\partial_{\bar{z}}^{2}\right)=\partial_{z}^{2}.\left|\left(L^{*\vee2}+A^{*\vee2}\right)\beta\right\rangle +\left\langle \left(L^{*\vee2}+A^{*\vee2}\right)\beta\right|.\partial_{\bar{z}}^{2}$.

We then show that \[
\partial_{s}\mbox{Tr}\left[A^{*}A\partial_{\bar{z}}\partial_{z}c\left(z\right)\right]=\left\langle \beta,\left(I_{\mathcal{Z}}\otimes L\partial_{\bar{z}}\partial_{z}c\left(z\right)\right)w_{A}\right\rangle +\left\langle w_{A},\left(I_{\mathcal{Z}}\otimes\partial_{\bar{z}}\partial_{z}c\left(z\right)L^{*}\right)\beta\right\rangle \,.\]
We first observe that~$\mbox{Tr}\left[A^{*}A\partial_{\bar{z}}\partial_{z}c\left(z\right)\right]=\left\langle w_{A},\left(I_{\mathcal{Z}}\otimes\partial_{\bar{z}}\partial_{z}c\left(z\right)\right)w_{A}\right\rangle $.
A simple calculation using~$\partial_{s}Az=-i\alpha Az+\left(\left\langle Lz\right|\vee I_{\mathcal{Z}}\right)\left|\beta\right\rangle $
shows that~$\partial_{s}w_{A}=\left(-i\alpha\otimes I_{\mathcal{Z}}\right)w_{A}+\left(I_{\mathcal{Z}}\otimes L^{*}\right)\beta$
and this immediately gives the result.
\end{rem}

\section{Classical evolution of a Wick polynomial under a quadratic evolution\label{sec:Classical-evolution}}

The adjoint of a $\mathbb{C}$-antilinear operator is defined in Appendix~\ref{sec:symplectic-transformations}.
\begin{defn}
A $\mathbb{C}$-antilinear operator~$A$ on~$\mathcal{Z}$ is said
of \emph{Hilbert-Schmidt class} if~$\left\Vert A\right\Vert _{\mathcal{L}_{2}^{a}\left(\mathcal{Z}\right)}:=\left\Vert AA^{*}\right\Vert _{\mathcal{L}_{1}\left(\mathcal{Z}\right)}^{1/2}$
is finite, where~$\left\Vert \cdot\right\Vert _{\mathcal{L}_{1}\left(\mathcal{Z}\right)}$
is the usual trace norm for $\mathbb{C}$-linear operators. The set
of Hilbert-Schmidt antilinear operators is denoted by~$\mathcal{L}_{2}^{a}\left(\mathcal{Z}\right)$.

Let~$\mathcal{X}\left(\mathcal{Z}\right)=\mathcal{L}\left(\mathcal{Z}\right)+\mathcal{L}_{2}^{a}\left(\mathcal{Z}\right)$
with norm~\[
\left\Vert T\right\Vert _{\mathcal{X}\left(\mathcal{Z}\right)}=\left\Vert L\right\Vert _{\mathcal{L}\left(\mathcal{Z}\right)}+\left\Vert A\right\Vert _{\mathcal{L}_{2}^{a}\left(\mathcal{Z}\right)}\]
 for~$T=L+A$, where~$L$ and~$A$ are respectively $\mathbb{C}$-linear
and $\mathbb{C}$-antilinear. The space~$\mathcal{X}\left(\mathcal{Z}\right)$
is a Banach algebra.\end{defn}
\begin{rem}
The norm~$\left\Vert T\right\Vert _{\mathcal{X}\left(\mathcal{Z}\right)}$
is well defined as the decomposition~$T=L+A$ is unique ($L=\frac{1}{2}\left(T-iTi\right)$
and~$A=\frac{1}{2}\left(T+iTi\right)$).
\end{rem}

\subsubsection{Construction of the classical flow without the~$\alpha$ term}

Let~$\beta\in\mathcal{C}^{0}\left(\mathbb{R};\mathcal{Z}^{\vee2}\right)$
and~$Q_{t}=\Im\left\langle \beta_{t},z^{\vee2}\right\rangle $. 
Observe that~$\partial_{\bar{z}}Q\left(t\right)\left(z\right)=i\left(I_{\mathcal{Z}}\vee\left\langle z\right|\right)\beta_{t}$
and so~$\left(\partial_{\bar{z}}Q_{t}\right)_{t}$ is a continous
one parameter family of~$\mathcal{X}\left(\mathcal{Z}\right)$, so
that the theory of ordinary differential equations in Banach algebras
(see for example~\cite{MR0249698}) asserts that there exists a unique
two parameters family~$\varphi\left(t_{2},t_{1}\right)$ of elements
of~$\mathcal{X}\left(\mathcal{Z}\right)$ such that\[
\left\{ \begin{array}{rcl}
i\partial_{t}\varphi\left(t,0\right) & = & \partial_{\bar{z}}Q_{t}\;\varphi\left(t,0\right)\\
\varphi\left(0,0\right) & = & I_{\mathcal{Z}}\end{array}\right.\,,\]
with~$\varphi$ of~$\mathcal{C}^{1}$ class in both parameters such
that for all~$r$,~$s$ and~$t$, \[
\varphi\left(t,s\right)\varphi\left(s,r\right)=\varphi\left(t,r\right)\,.\]

The classical flow~$\varphi\left(t,s\right)$ is a symplectomorphism
with respect to the symplectic form~$\sigma\left(z_{1},z_{2}\right)=\Im\left\langle z_{1},z_{2}\right\rangle $.
It can be checked deriving\[
\sigma\left(\varphi\left(t,s\right)z_{1},\varphi\left(t,s\right)z_{2}\right)\]
 with respect to~$t$.

\subsubsection{The strongly continuous dynamical system associated with~$\left(\alpha_{t}\right)$}

We first state a proposition which is a direct consequence of Theorem~X.70
in~\cite{MR0493420} in the unitary case. This proposition provides
a set of assumptions ensuring the existence of a strongly continuous
dynamical system associated with a family~$\left(\alpha_{t}\right)_{t}$
of self-adjoint operators. Other more general situations can be considered
as in~\cite{MR0279626,MR0161185} for example.
\begin{prop}
\label{thm:reed-simon-system-dynamique-fortement-continu}Let~$\left(\alpha_{t}\right)_{t\in\mathbb{R}}$
be a family of self-adjoint operators on the Hilbert space~$\mathcal{Z}$
satisfying the following conditions.
\begin{enumerate}
\item The~$\alpha_{t}$ have a common domain~$D$ (from which it follows
by the closed graph theorem that~$c\left(t,s\right)=\left(\alpha_{t}-i\right)\left(\alpha_{s}-i\right)^{-1}$
is bounded).
\item For each~$z\in\mathcal{Z}$, $\left(t-s\right)^{-1}c\left(t,s\right)z$
is uniformly strongly continuous and uniformly bounded in~$s$ and~$t$
for~$t\neq s$ lying in any fixed compact interval.
\item For each~$z\in\mathcal{Z}$, $c\left(t\right)z=\lim_{s\nearrow t}\left(t-s\right)^{-1}c\left(t,s\right)z$
exists uniformly for~$t$ in each compact interval and~$c\left(t\right)$
is bounded and strongly continuous in~$t$.
\end{enumerate}
The approximate propagator~$u_{k}$ is defined by $u_{k}\left(t,s\right)=\exp(-\left(t-s\right)i\alpha_{\frac{j-1}{k}})$
if~$\frac{j-1}{k}\leq s\leq t\leq\frac{j}{k}$ and~$u_{k}\left(t,r\right)=u_{k}\left(t,s\right)u_{k}\left(s,r\right)$.

Then for all~$s$, $t$ in a compact interval and any~$z\in\mathcal{Z}$,\[
u\left(t,s\right)z=\lim_{k\to+\infty}u_{k}\left(t,s\right)z\]
exists uniformly in~$s$ and~$t$. Further, if~$z\in D$, then~$u\left(t,s\right)z$
is in~$D$ for all~$s,\, t$ and satisfies\[
\left\{ \begin{array}{rcl}
i\frac{d}{dt}u\left(t,s\right)z & = & \alpha_{t}u\left(t,s\right)z\\
u\left(s,s\right)z & = & z\end{array}\right.\,.\]

\end{prop}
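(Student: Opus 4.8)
The plan is to obtain the statement as the specialization to unitary evolution of Theorem~X.70 in~\cite{MR0493420} (a theorem of Kato on time-dependent Hamiltonians with a common domain). The whole task then reduces to checking that the three numbered conditions are exactly the hypotheses needed there. First I would record the elementary consequence of condition~(1) already noted in the statement: since $(\alpha_s-i)^{-1}$ maps $\mathcal{Z}$ into the common domain $D$ and $\alpha_t$ is defined on $D$, the operator $c(t,s)=(\alpha_t-i)(\alpha_s-i)^{-1}$ is everywhere defined; being the composition of the closed operator $\alpha_t-i$ with the bounded operator $(\alpha_s-i)^{-1}$, it is closed, hence bounded by the closed graph theorem. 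This is the object through which the time dependence of the Hamiltonian is controlled.

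Next I would match the remaining hypotheses: condition~(2) is precisely the uniform strong continuity and uniform boundedness of $(t-s)^{-1}c(t,s)$ on compact intervals, and condition~(3) is the existence, again uniform on compacts, and strong continuity of the one-sided derivative $c(t)=\lim_{s\nearrow t}(t-s)^{-1}c(t,s)$. No lower bound on the $\alpha_t$ is required, because we are in the unitary case: each frozen propagator $e^{-i(t-s)\alpha_\tau}$ is unitary for $\alpha_\tau$ merely self-adjoint. With this identification Theorem~X.70 applies and yields: the approximate propagators $u_k(t,s)$ converge strongly, uniformly for $s,t$ in a compact interval, to a unitary family $u(t,s)$; the cocycle identity $u(t,s)u(s,r)=u(t,r)$ passes to the limit from the $u_k$; $u(t,s)D\subset D$; and, for $z\in D$, $t\mapsto u(t,s)z$ is $\mathcal{C}^1$ with $i\partial_t u(t,s)z=\alpha_t u(t,s)z$ and $u(s,s)z=z$.

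For completeness I would recall the mechanism behind Theorem~X.70 so that the form of the conclusion is transparent: $u_k$ is built by freezing $\alpha_t$ on a mesh of width $1/k$; a telescoping (Duhamel) estimate, combined with boundedness of $c(t,s)$ and an a priori bound on $\bigl\|(\alpha_{t_0}-i)\,u_k(t,s)z\bigr\|$ uniform in $k$ (obtained by differentiating $t\mapsto(\alpha_{t_0}-i)u_k(t,s)z$ and invoking condition~(3)), shows that $(u_k(t,s)z)_k$ is Cauchy in $\mathcal{Z}$; the same a priori bound, via weak compactness of bounded sets applied to $(\alpha_{t_0}-i)u_k(t,s)z$, gives $u(t,s)z\in D$ and lets one pass to the limit in the integrated Schrödinger equation.

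The only step requiring genuine care, and hence the one I expect to be the main obstacle, is precisely this uniform graph-norm a priori estimate for the $u_k$: it is what forces the limit flow to leave $D$ invariant and to be differentiable there, and it is the sole place where hypothesis~(3) (as opposed to mere continuity in~(2)) is used. Everything else is bookkeeping. In the write-up I would therefore simply invoke Theorem~X.70 in~\cite{MR0493420} together with the hypothesis check above, and point to~\cite{MR0279626,MR0161185} for the more flexible frameworks mentioned in the text.
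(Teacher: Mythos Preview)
Your proposal is correct and follows exactly the paper's own approach: the paper does not give a proof but simply states that the proposition is a direct consequence of Theorem~X.70 in~\cite{MR0493420} in the unitary case, which is precisely the reduction you carry out. Your additional explanation of the closed-graph argument for boundedness of $c(t,s)$ and the sketch of the Duhamel mechanism behind Theorem~X.70 go beyond what the paper records, but they are accurate and consistent with that reference.
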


\subsubsection{Construction of the classical flow with the~$\alpha$ term}

Assume~\textbf{H1} and~\textbf{H2}. Let~$\hat{\varphi}$ be the
solution of\[
\left\{ \begin{array}{rcl}
i\partial_{t}\hat{\varphi}\left(t,0\right) & = & \partial_{\bar{z}}\hat{Q}_{t}\;\hat{\varphi}\left(t,0\right)\\
\hat{\varphi}\left(0,0\right) & = & I_{\mathcal{Z}}\end{array}\right.\,,\]
with~$ $$\hat{Q}_{t}\left(z\right)=\Im\left\langle \hat{\beta}_{t},z^{\vee2}\right\rangle $,
$\hat{\beta}_{t}=u_{\alpha}\left(t,0\right)^{*\vee2}\beta_{t}$. What
we call here the solution of\begin{equation}
\left\{ \begin{array}{rcl}
i\partial_{t}\varphi\left(t,0\right) & = & \partial_{\bar{z}}Q_{t}\;\varphi\left(t,0\right)\\
\varphi\left(0,0\right) & = & I_{\mathcal{Z}}\end{array}\right.\,,\label{eq:def-phi}\end{equation}
with~$Q_{t}=\left\langle z,\alpha_{t}z\right\rangle +\Im\left\langle \beta_{t},z^{\vee2}\right\rangle $
is\[
\varphi\left(t,0\right)=u_{\alpha}\left(t,0\right)\circ\hat{\varphi}\left(t,0\right)\,.\]

Depending on the assumptions on~$\left(\alpha_{t}\right)$ it will
be possible to precise if~$\varphi$ sloves Equation~(\ref{eq:def-phi})
in a usual sense (strongly, weakly, on some dense subset...).

With the particular set of assumptions of Theorem~\ref{thm:reed-simon-system-dynamique-fortement-continu}
we get that for all~$z_{1}\in D$ and~$z_{2}\in\mathcal{Z}$,\[
\left\{ \begin{array}{rcl}
i\partial_{t}\left\langle z_{1},\varphi\left(t,0\right)z_{2}\right\rangle  & = & \left\langle \alpha z_{1},\varphi\left(t,0\right)z_{2}\right\rangle +i\left\langle z_{1}\vee\varphi\left(t,0\right)z_{2},\beta\right\rangle \\
\varphi\left(0,0\right) & = & I_{\mathcal{Z}}\end{array}\right.\,.\]

\subsubsection{Composition of a Wick polynomial with the classical evolution}

The composition of a polynomial with the classical flow defines a
time-dependent polynomial.
\begin{defn}
We define a norm on~$\mathcal{P}\left(\mathcal{Z}\right)$ by \[
\left\Vert b\right\Vert _{\mathcal{P}\left(\mathcal{Z}\right)}=\sum_{p,\, q}\left\Vert b_{p,q}\right\Vert _{q\leftarrow p}\]
where~$b=\sum_{p,\, q}b_{p,q}$ is a polynomial with~$b_{p,q}\in\mathcal{P}_{p,q}\left(\mathcal{Z}\right)$
and~$\left\Vert b_{p,q}\right\Vert _{q\leftarrow p}$ is a shorthand
for $\|\tilde{b}_{p,q}\|_{\mathcal{L}\left(\bigvee^{p}\mathcal{Z},\bigvee^{q}\mathcal{Z}\right)}$.
For a polynomial~$b$ in~$\mathcal{P}_{m}\left(\mathcal{Z}\right)$,
we will sometimes write~$\left\Vert b\right\Vert _{\mathcal{P}_{m}\left(\mathcal{Z}\right)}$.
\end{defn}

\begin{prop}
\label{pro:estimate_comp_classical_flow}Let~$b\in\mathcal{P}_{m}\left(\mathcal{Z}\right)$
be a polynomial, and~$\varphi\in\mathcal{X}\left(\mathcal{Z}\right)$.
Then~$b\circ\varphi\in\mathcal{P}_{m}\left(\mathcal{Z}\right)$ and
we have the estimate\[
\left\Vert b\circ\varphi\right\Vert _{\mathcal{P}_{m}\left(\mathcal{Z}\right)}\leq\left\Vert \varphi\right\Vert _{\mathcal{X}\left(\mathcal{Z}\right)}^{m}\left\Vert b\right\Vert _{\mathcal{P}_{m}\left(\mathcal{Z}\right)}\,.\]
\end{prop}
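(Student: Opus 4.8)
The plan is to reduce to the case of a monomial and then to expand $b\circ\varphi$ explicitly in terms of the symbol $\tilde b$ and the linear/antilinear parts of $\varphi$. Write $b=\sum_{p+q=m}b_{p,q}$ with $b_{p,q}(z)=\langle z^{\vee q},\tilde b_{p,q}z^{\vee p}\rangle$; since $\|b\circ\varphi\|_{\mathcal{P}_m}\le\sum_{p+q=m}\|b_{p,q}\circ\varphi\|_{\mathcal{P}_m}$ and $\sum_{p+q=m}\|\tilde b_{p,q}\|_{q\leftarrow p}=\|b\|_{\mathcal{P}_m}$, it suffices to prove the estimate with $\|\varphi\|_{\mathcal{X}(\mathcal{Z})}^m$ in place of the operator norm for a single monomial $b_{p,q}$.

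The key step is to compute $b_{p,q}(\varphi z)$. Using $\varphi=L+A$ with $L$ $\mathbb{C}$-linear and $A$ $\mathbb{C}$-antilinear, one has $\varphi z = Lz + A z$, so $(\varphi z)^{\vee p}$ expands by the (symmetric) binomial formula into $\sum_{k=0}^{p}\binom{p}{k}(Lz)^{\vee k}\vee(Az)^{\vee(p-k)}$, and likewise for $(\varphi z)^{\vee q}$. Plugging this in, $b_{p,q}\circ\varphi$ becomes a finite sum over $0\le k\le p$, $0\le l\le q$ of terms of the form $\binom{p}{k}\binom{q}{l}\big\langle (Lz)^{\vee l}\vee (Az)^{\vee(q-l)},\,\tilde b_{p,q}\,\big((Lz)^{\vee k}\vee (Az)^{\vee(p-k)}\big)\big\rangle$. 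Because $A$ is antilinear, the factors $(Az)^{\vee(p-k)}$ and $(Az)^{\vee(q-l)}$ contribute antiholomorphically in $z$ on the appropriate side; collecting holomorphic vs antiholomorphic degrees, each such term is itself a monomial in $\mathcal{P}_{p',q'}(\mathcal{Z})$ with $p'=k+(q-l)$, $q'=l+(p-k)$, hence $p'+q'=p+q=m$. Thus $b_{p,q}\circ\varphi\in\mathcal{P}_m(\mathcal{Z})$, proving the first assertion.

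For the norm bound, I estimate the symbol of each term. The symbol is obtained from $\tilde b_{p,q}$ by symmetrizing a contraction with $L$, $L^*$ (and their conjugates) and with the vectors/forms built from $A$; concretely each term's symbol is of the form $\mathcal{S}_{q'}\,\big(\text{(product of at most $q$ copies of $L^*$ or $L^T$ and at most $q$ antilinear factors of $A$)}\big)\tilde b_{p,q}\big(\text{similar with $p$ factors}\big)\mathcal{S}_{p'}$. Since the symmetrizers $\mathcal{S}_n$ have operator norm $1$, and each linear factor contributes at most $\|L\|_{\mathcal{L}(\mathcal{Z})}$ while each antilinear factor, being built from $A$ in a way bounded by $\|A\|_{\mathcal{L}_2^a(\mathcal{Z})}\le\|A\|_{\mathcal{L}_2^a(\mathcal{Z})}$ (here one uses that for a single contraction one only needs the operator norm of $A$, which is dominated by its Hilbert-Schmidt norm $\|A\|_{\mathcal{L}_2^a(\mathcal{Z})}=\|AA^*\|_{\mathcal{L}_1}^{1/2}$), each of the $\le 2^m$ terms has symbol bounded in $\|\cdot\|_{q'\leftarrow p'}$ by $\binom{p}{k}\binom{q}{l}\,\|L\|^{k+l}\,\|A\|_{\mathcal{L}_2^a}^{(p-k)+(q-l)}\,\|\tilde b_{p,q}\|_{q\leftarrow p}$. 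Summing over $k$ and $l$ and using the binomial theorem twice gives $\|b_{p,q}\circ\varphi\|_{\mathcal{P}_m}\le(\|L\|_{\mathcal{L}(\mathcal{Z})}+\|A\|_{\mathcal{L}_2^a(\mathcal{Z})})^m\|\tilde b_{p,q}\|_{q\leftarrow p}=\|\varphi\|_{\mathcal{X}(\mathcal{Z})}^m\|\tilde b_{p,q}\|_{q\leftarrow p}$, and summing over $p+q=m$ finishes the proof.

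The main obstacle I anticipate is the bookkeeping in identifying the symbol of each expanded term and checking that it lands in the right space $\mathcal{L}(\bigvee^{p'}\mathcal{Z},\bigvee^{q'}\mathcal{Z})$ with the claimed norm bound — in particular making precise, via the bra–ket pairing, how a factor $(Az)^{\vee(p-k)}$ occurring inside a ket turns into an antiholomorphic dependence contributing to the $q'$-side of the new monomial, and verifying that only the operator norm (not the full Hilbert-Schmidt norm of a tensor power) of $A$ is needed for a single-copy contraction so that the bound by $\|A\|_{\mathcal{L}_2^a(\mathcal{Z})}$ is legitimate. Everything else is the elementary binomial combinatorics above.
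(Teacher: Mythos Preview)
Your argument is correct and is essentially the natural proof: the paper does not give an independent argument but simply cites Proposition~2.12 of \cite{MR2465733}, whose proof proceeds exactly by splitting $\varphi=L+A$, expanding $(\varphi z)^{\vee p}$ and $(\varphi z)^{\vee q}$ binomially, identifying each resulting piece as a monomial of total order $m$, and summing the binomial estimates to recover $(\|L\|_{\mathcal{L}(\mathcal{Z})}+\|A\|_{\mathcal{L}_2^a(\mathcal{Z})})^{m}$. Your only self-flagged point---that for a single contraction one only needs the operator norm of $A$, which is dominated by $\|A\|_{\mathcal{L}_2^a(\mathcal{Z})}=\|AA^*\|_{\mathcal{L}_1}^{1/2}$ since $AA^*\ge0$ implies $\|AA^*\|_{\mathcal{L}(\mathcal{Z})}\le\|AA^*\|_{\mathcal{L}_1(\mathcal{Z})}$---is exactly right and is all that is needed to close the estimate.
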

\begin{proof}
The proof is essentially the same as in Proposition~2.12 of~\cite{MR2465733}.
\end{proof}
%
\begin{comment}
!!!!!!!!
\end{comment}
{}

\section{Quantum evolution of a Wick polynomial\label{sec:Existence-quantum-evolution}}

\subsubsection{Without the~$\alpha$ term}

\begin{defn}
Let~$\beta\in\mathcal{C}^{0}\left(\mathbb{R};\mathcal{Z}^{\vee2}\right)$
and~$Q_{t}\left(z\right)=\Im\left\langle \beta_{t},z^{\vee2}\right\rangle $.
A family~$U\left(t,s\right)$ of unitary operators on~$\mathcal{H}$
defined for~$s,\, t$ real is a \emph{solution} of \begin{equation}
\left\{ \begin{array}{rcl}
i\partial_{t}U\left(t,0\right) & = & \frac{Q_{t}^{Wick}}{\varepsilon}U\left(t,0\right)\\
U\left(0,0\right) & = & I_{\mathcal{H}}\end{array}\right.\label{eq:quantum_flow}\end{equation}
if
\begin{enumerate}
\item $U\left(t,s\right)$ is strongly continuous in $\mathcal{H}$ with
respect to~$s$,~$t$ with~$U\left(s,s\right)=I$,
\item $U\left(t,r\right)=U\left(t,s\right)U\left(s,r\right)$, $r\leq s\leq t$,
\item $i\frac{d}{dt}U\left(t,s\right)y$ exists for almost every~$t$
(depending on~$s$) and is equal to\ $Q_{t}^{Wick}U\left(t,s\right)y$,
\item $i\varepsilon\frac{d}{ds}U\left(t,s\right)y=-U\left(t,s\right)Q_{s}^{Wick}y$,
$y\in\mathcal{D}\left(N+1\right)$, $0\leq s\leq t$.
\end{enumerate}
\end{defn}
This definition is made to fit the general framework of Theorems~4.1
and~5.1 of~\cite{MR0279626}. More precisely we may check the following
theorem.
\begin{thm}
\label{thm:existence-quantum-flow}Let~$\beta\in\mathcal{C}^{0}\left(\mathbb{R};\mathcal{Z}^{\vee2}\right)$
and~$Q_{t}\left(z\right)=\Im\left\langle \beta_{t},z^{\vee2}\right\rangle $.

Then the quantum flow equation~(\ref{eq:quantum_flow}) associated
to the family~$\frac{1}{\varepsilon}Q_{t}$ has a unique solution.
This solution preserves the sets~$\mathcal{D}(\left\langle N\right\rangle ^{k/2})$
for~$k\geq2$.
\end{thm}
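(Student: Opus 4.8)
\emph{Proof strategy.} The plan is to cast equation~(\ref{eq:quantum_flow}) into the framework of the Kato-type existence and uniqueness theorems for time-dependent generators (Theorems~4.1 and~5.1 of~\cite{MR0279626}), with $\mathcal{H}$ as the ambient Hilbert space and $Y_{k}:=\mathcal{D}\bigl(\left\langle N\right\rangle^{k/2}\bigr)$ ($k\geq2$) as the regularity scale; the invariance of the $Y_{k}$'s then drops out as part of the conclusion of those theorems rather than being an extra task.

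First I would verify self-adjointness of the frozen generators. Writing $b_{t}(z)=\left\langle\beta_{t},z^{\vee2}\right\rangle\in\mathcal{P}_{2,0}\left(\mathcal{Z}\right)$ one has $Q_{t}=\tfrac{1}{2i}\bigl(b_{t}-\bar b_{t}\bigr)$, hence by Proposition~\ref{pro:Wick-calculus} $Q_{t}^{Wick}=\tfrac{1}{2i}\bigl(b_{t}^{Wick}-(b_{t}^{Wick})^{*}\bigr)$ on $\mathcal{H}_{\text{fin}}$, an operator that sends $\bigvee^{n}\mathcal{Z}$ into $\bigvee^{n-2}\mathcal{Z}\oplus\bigvee^{n+2}\mathcal{Z}$ with sectorwise norm $O\bigl(\varepsilon\,n\,\|\beta_{t}\|_{\mathcal{Z}^{\vee2}}\bigr)$. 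Consequently $\tfrac1\varepsilon Q_{t}^{Wick}$ is dominated, as a quadratic form, by the comparison operator $N_{0}:=\mathrm{d}\Gamma(I_{\mathcal{Z}})+1=N/\varepsilon+1$, which is self-adjoint with core $\mathcal{H}_{\text{fin}}$; moreover $[N_{0},\tfrac1\varepsilon Q_{t}^{Wick}]$ is again of the same ``particle number $\pm2$'' type (the lowering part $b_{t}^{Wick}$ satisfies $[N_{0},\cdot]=-2(\cdot)$ and its adjoint $[N_{0},\cdot]=+2(\cdot)$), so it is $N_{0}$-bounded too. A commutator theorem of Nelson type (see e.g.\ Theorem~X.37 in~\cite{MR0493420}) then gives essential self-adjointness of $\tfrac1\varepsilon Q_{t}^{Wick}$ on $\mathcal{H}_{\text{fin}}$, and Stone's theorem produces the strongly continuous unitary group $s\mapsto e^{-isQ_{t}^{Wick}/\varepsilon}$; unitarity yields stability in $\mathcal{H}$ with constants $M=1$, $\omega=0$ for free.

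Next I would show that each frozen group preserves $Y_{k}$ with bounds locally uniform in $t$ --- which is precisely the stability of $\bigl\{\tfrac i\varepsilon Q_{t}^{Wick}\bigr\}_{t}$ in $Y_{k}$. Fixing $\varphi\in\mathcal{H}_{\text{fin}}$ and, after a routine regularization of $\left\langle N\right\rangle^{k}$ by bounded operators, differentiating $f(s)=\bigl\|\left\langle N\right\rangle^{k/2}e^{-isQ_{t}^{Wick}/\varepsilon}\varphi\bigr\|^{2}$, one finds $|f'(s)|\leq\tfrac1\varepsilon\bigl|\bigl\langle e^{-isQ_{t}^{Wick}/\varepsilon}\varphi,[\left\langle N\right\rangle^{k},Q_{t}^{Wick}]\,e^{-isQ_{t}^{Wick}/\varepsilon}\varphi\bigr\rangle\bigr|$. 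Since $Q_{t}^{Wick}$ shifts the particle number by $\pm2$, one can write $[\left\langle N\right\rangle^{k},Q_{t}^{Wick}]$ as a telescoping difference of sector-values of $\left\langle N\right\rangle^{k}$ multiplied by the $\pm2$-shifting parts of $Q_{t}^{Wick}$, and a sector-by-sector estimate shows that it is bounded, as a form, by $O\bigl(\varepsilon\,\|\beta_{t}\|_{\mathcal{Z}^{\vee2}}\bigr)\left\langle N\right\rangle^{k}$, uniformly for $t$ in compact sets; hence $f'(s)\leq C(t)f(s)$ and $f(s)\leq e^{C(t)|s|}f(0)$ by Gronwall, so by density of $\mathcal{H}_{\text{fin}}$ in $Y_{k}$ the group preserves $Y_{k}$ and is $C_{0}$ there with the required exponential bound. (Equivalently, $e^{-isQ_{t}^{Wick}/\varepsilon}$ is, up to the $\varepsilon$-rescaling, the Fock-space implementer of a symplectomorphism with Hilbert--Schmidt antilinear part, and such implementers preserve the domains of the powers of the number operator; but the explicit Gronwall bound is what is needed for the uniformity in $t$.)

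Finally, the estimate $\bigl\|(Q_{t}^{Wick}-Q_{s}^{Wick})\varphi\bigr\|\leq C\,\|\beta_{t}-\beta_{s}\|_{\mathcal{Z}^{\vee2}}\,\bigl\|(N+1)\varphi\bigr\|$ together with $\beta\in\mathcal{C}^{0}\bigl(\mathbb{R};\mathcal{Z}^{\vee2}\bigr)$ shows that $t\mapsto\tfrac i\varepsilon Q_{t}^{Wick}$ is norm-continuous from $\mathbb{R}$ into $\mathcal{L}\bigl(Y_{k},\mathcal{H}\bigr)$ (indeed into $\mathcal{L}(Y_{k},Y_{k-2})$). All the hypotheses of Theorems~4.1 and~5.1 of~\cite{MR0279626} are then met, and they furnish a unique family $U(t,s)$ of unitaries satisfying items~1--4 of the definition together with $U(t,s)Y_{k}\subseteq Y_{k}$ for every $k\geq2$; uniqueness can alternatively be obtained directly by differentiating $s\mapsto U_{1}(t,s)^{*}U_{2}(t,s)\varphi$ for $\varphi\in\mathcal{D}(N+1)$ and using item~4. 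The main obstacle is the middle step: arranging the powers of $\left\langle N\right\rangle$ and of $\varepsilon$ in $[\left\langle N\right\rangle^{k},Q_{t}^{Wick}]$ so that the Gronwall estimate genuinely closes and the domains $\mathcal{D}(\left\langle N\right\rangle^{k/2})$ are preserved with locally uniform constants; once that is in hand, checking the remaining Kato hypotheses is routine.
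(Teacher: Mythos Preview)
Your approach is essentially the same as the paper's: the paper also establishes essential self-adjointness of $\tfrac1\varepsilon Q_t^{Wick}$ via the commutator theorem~X.37 of~\cite{MR0493420} using the comparison operator $N/\varepsilon+1$, derives the form bound $\pm i[\tfrac1\varepsilon Q^{Wick},(N/\varepsilon+1)^{k}]\leq 3^{k}\sqrt{2}\|\beta\|(N/\varepsilon+1)^{k}$ by the same sector-by-sector computation you outline, uses it (via a Gronwall-type argument packaged as a citation to~\cite{MR0391794}) to get the frozen-group bounds on $\mathcal{D}((N/\varepsilon+1)^{k/2})$, and then feeds these into Theorems~4.1 and~5.1 of~\cite{MR0279626}; the only cosmetic difference is that the paper works with $(N/\varepsilon+1)^{k}$ throughout and passes to $\langle N\rangle^{k}$ at the very end, and verifies Kato's admissibility condition explicitly via the Laplace resolvent formula rather than leaving it implicit.
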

To establish this theorem we will use the following estimates.
\begin{lem}
\label{lem:estimation-QWick-Q-quadratique}Let~$\beta\in\mathcal{Z}^{\vee2}$
and~$Q\left(z\right)=\Im\left\langle \beta,z^{\vee2}\right\rangle $.
Then, on~$\mathcal{H}_{fin}$, and for~$k\geq1$, $Q^{Wick}$ satisfies
the estimates\begin{equation}
\left\Vert Q^{Wick}/\varepsilon\Psi\right\Vert \leq\frac{3}{2}\left\Vert \beta\right\Vert _{\mathcal{Z}^{\vee2}}\left\Vert \left(N/\varepsilon+1\right)\Psi\right\Vert \label{eq:estimate-QWick}\end{equation}
and\begin{equation}
\pm i\left[Q^{Wick}/\varepsilon,\left(N/\varepsilon+1\right)^{k}\right]\leq3^{k}\sqrt{2}\left\Vert \beta\right\Vert _{\mathcal{Z}^{\vee2}}\left(N/\varepsilon+1\right)^{k}\,.\label{eq:estimate-[QWick,N+1]}\end{equation}

\end{lem}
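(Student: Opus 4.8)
The plan is to reduce both estimates to explicit computations on $\mathcal{H}_{fin}$ using the Wick calculus and the definition of $Q^{Wick}$. Since $\beta\in\mathcal{Z}^{\vee2}$, the polynomial $Q(z)=\Im\langle\beta,z^{\vee2}\rangle=\tfrac{1}{2i}\left(\langle\beta,z^{\vee2}\rangle-\langle z^{\vee2},\beta\rangle\right)$ decomposes into a $(2,0)$ part and a $(0,2)$ part; by Proposition \ref{pro:Wick-calculus} the Wick quantization is the sum of the corresponding creation-type and annihilation-type operators, which in the language of the examples are, up to the $\varepsilon$-factors and the factor $\tfrac{1}{2i}$, of the form $a^{*}(\,\cdot\,)a^{*}(\,\cdot\,)$ and $a(\,\cdot\,)a(\,\cdot\,)$ contracted against $\beta$. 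The first step is therefore to write $Q^{Wick}/\varepsilon$ restricted to $\bigvee^{n}\mathcal{Z}$ explicitly: it raises the particle number by $2$ with a coefficient behaving like $\tfrac{1}{2}\sqrt{(n+2)(n+1)}\,\|\beta\|$ and lowers it by $2$ with a coefficient like $\tfrac{1}{2}\sqrt{n(n-1)}\,\|\beta\|$, both of which are bounded by $\tfrac{3}{2}\|\beta\|_{\mathcal{Z}^{\vee2}}(n/\varepsilon+1)\varepsilon$ after one accounts for the factor $\varepsilon$ in $N$; squaring, using that the number-raising and number-lowering pieces land in orthogonal subspaces, and summing over $n$ gives \eqref{eq:estimate-QWick}. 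The constant $3/2$ comes from bounding $\sqrt{(n+2)(n+1)}\leq n+2\leq 3(n+1)$ generously; one must be a little careful to state the bound in terms of $(N/\varepsilon+1)$ rather than $N/\varepsilon$ so that the $n=0,1$ sectors (annihilated by the lowering part) are covered.

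For the commutator estimate \eqref{eq:estimate-[QWick,N+1]}, the key observation is that $(N/\varepsilon+1)^{k}$ acts on $\bigvee^{n}\mathcal{Z}$ as the scalar $(n+1)^{k}$, so the commutator $[Q^{Wick}/\varepsilon,(N/\varepsilon+1)^{k}]$ acts on the number-raising part of $Q^{Wick}/\varepsilon$ by the scalar factor $(n+1)^{k}-(n+3)^{k}$ and on the number-lowering part by $(n+1)^{k}-(n-1)^{k}$. The plan is to combine $\pm i[Q^{Wick}/\varepsilon,(N/\varepsilon+1)^{k}]$ with its adjoint to get a self-adjoint operator, bound it above and below using the Cauchy-Schwarz / arithmetic-geometric inequality $\pm(XY+Y^{*}X^{*})\leq XX^{*}+Y^{*}Y$ applied with $X,Y$ built from the pieces of $Q^{Wick}/\varepsilon$ and powers of $(N/\varepsilon+1)$, and then estimate the resulting scalar coefficients. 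Here one uses elementary inequalities of the type $(n+3)^{k}-(n+1)^{k}\leq 2k\,3^{k-1}(n+2)^{k-1}\cdot\text{(something)}$, more simply bounded by a constant times $3^{k}(n+1)^{k}$; matching the annihilation coefficient $\sqrt{n(n-1)}$ and the difference $(n+1)^{k}-(n-1)^{k}$ with a single power $(n+1)^{k}$ on the right is where the factors $3^{k}$ and $\sqrt{2}$ are produced. Since everything is diagonal in $n$ and the raising/lowering parts map into orthogonal sectors, the operator inequality follows sector by sector and then globally on $\mathcal{H}_{fin}$.

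The main obstacle is bookkeeping rather than conceptual: one must track the $\varepsilon$-powers carefully (the $\varepsilon^{(p+q)/2}=\varepsilon$ from the definition of $b^{Wick}$ for a quadratic $b$, against the $\varepsilon$ hidden in $N$), and one must choose the splitting $X,Y$ in the arithmetic-geometric step so that the resulting bound is genuinely $(N/\varepsilon+1)^{k}$ and not $(N/\varepsilon+1)^{k+1}$ — this forces one to balance half a power of $(N/\varepsilon+1)$ on each side and to exploit that the scalar gaps $(n+3)^{k}-(n+1)^{k}$ are $O((n+1)^{k-1})$, which exactly compensates the linear growth $\sqrt{(n+1)(n+2)}$ of the off-diagonal coefficient. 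Once the right pairing is fixed, the inequalities $(n+2)^{k}\leq 3^{k}(n+1)^{k}$ and similar crude bounds close the estimate with the stated constants $3^{k}\sqrt{2}$, and no sharper argument is needed since the constants are not claimed to be optimal.
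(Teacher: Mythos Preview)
Your plan is correct and follows essentially the same route as the paper. The paper also writes $\tfrac{2i}{\varepsilon}Q^{Wick}|_{\mathcal{Z}^{\vee n}}$ explicitly as a sum of a raising piece with coefficient $\sqrt{(n+2)(n+1)}$ and a lowering piece with coefficient $\sqrt{n(n-1)}$, gets \eqref{eq:estimate-QWick} from $n+2\leq 2(n+1)$, and for \eqref{eq:estimate-[QWick,N+1]} computes the commutator sector by sector, bounds $(n+3)^{k}-(n+1)^{k}\leq 3^{k}(n+1)^{k-1}$ via the binomial expansion, and closes with Cauchy--Schwarz; your arithmetic--geometric step $\pm(XY+Y^{*}X^{*})\leq XX^{*}+Y^{*}Y$ is exactly that inequality in operator form. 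One small caution: for a general $\Psi=\sum_{n}\Psi^{(n)}$ the raising and lowering pieces are not globally orthogonal (sector $m$ receives contributions from both $\Psi^{(m-2)}$ and $\Psi^{(m+2)}$), so in the first estimate you should appeal to the triangle inequality or the trivial bound $\|a+b\|^{2}\leq 2\|a\|^{2}+2\|b\|^{2}$ rather than strict orthogonality; this still yields the constant $3/2$ with room to spare.
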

The second estimate is in the sense of quadratic forms, for all~$\Psi\in\mathcal{H}_{fin}$,
\begin{multline*}
\pm i\left(\left\langle \frac{1}{\varepsilon}Q^{Wick}\Psi,\left(N/\varepsilon+1\right)^{k}\Psi\right\rangle -\left\langle \left(N/\varepsilon+1\right)^{k}\Psi,\frac{1}{\varepsilon}Q^{Wick}\Psi\right\rangle \right)\\
\leq\frac{3^{k}}{\sqrt{2}}\left\Vert \beta\right\Vert _{\mathcal{Z}^{\vee2}}\left\langle \Psi,\left(N/\varepsilon+1\right)^{k}\Psi\right\rangle \,.\end{multline*}

\begin{proof}
The first estimate is a consequence of~$n+2\leq2\left(n+1\right)$
associated to\[
\frac{2i}{\varepsilon}\left.Q^{Wick}\right|_{\mathcal{Z}^{\vee n}}=\sqrt{n\left(n-1\right)}\left\langle \beta\right|\vee I_{\bigvee^{n-2}\mathcal{Z}}-\sqrt{\left(n+2\right)\left(n+1\right)}\left|\beta\right\rangle \vee I_{\bigvee^{n}\mathcal{Z}}\,.\]

For the second estimate, consider~$\frac{2i}{\varepsilon}\left\langle \Psi,\left[\left(1+N/\varepsilon\right)^{k},Q^{Wick}\right]\Psi\right\rangle $.
The first term of this commutator is \[
\sum_{n}\left(n+1\right)^{k}\left(\sqrt{\left(n+2\right)\left(n+1\right)}\left\langle \Psi^{\left(n\right)}\vee\left\langle \beta\right|,\Psi^{\left(n+2\right)}\right\rangle \right.\left.-\sqrt{n\left(n-1\right)}\left\langle \Psi^{\left(n\right)},\left|\beta\right\rangle \vee\Psi^{\left(n-2\right)}\right\rangle \right)\,.\]
Then we deduce easily the second term and a reindexation gives the
following form for the whole commutator:\begin{multline*}
\sum_{n}\left[\left(n+1\right)^{k}-\left(\left(n+2\right)+1\right)^{k}\right]\sqrt{\left(n+2\right)\left(n+1\right)}\\
\times\left(\left\langle \Psi^{\left(n\right)}\vee\left\langle \beta\right|,\Psi^{\left(n+2\right)}\right\rangle +\left\langle \Psi^{\left(n+2\right)},\left|\beta\right\rangle \vee\Psi^{\left(n\right)}\right\rangle \right)\,.\end{multline*}
Newton's binomial formula and the inequalities~$\sum_{l=0}^{k-1}\binom{k}{l}2^{k-l}\leq3^{k}$
and $\left(n+1\right)^{l}\leq\left(n+1\right)^{k-1}$ yield \[
\left(n+1\right)^{k}-\left(\left(n+2\right)+1\right)^{k}\leq3^{k}\left(n+1\right)^{k-1}\,.\]
Using also~$n+2\leq2\left(n+1\right)$ to control~$\sqrt{\left(n+2\right)\left(n+1\right)}$
we obtain \[
\pm i\left\langle \Psi,\left[\left(1+N/\varepsilon\right)^{k},\frac{Q^{Wick}}{\varepsilon}\right]\Psi\right\rangle \leq\frac{1}{2}\sum_{n}3^{k}\left(n+1\right)^{k-1}\sqrt{2}\left(n+1\right)\left\Vert \Psi^{\left(n\right)}\right\Vert \left\Vert \beta\right\Vert _{\mathcal{Z}^{\vee2}}\left\Vert \Psi^{\left(n+2\right)}\right\Vert \,.\]
Cauchy-Schwarz's inequality gives the claimed estimate.
\end{proof}

\begin{lem}
\label{lem:QWick-autoadjoint}Let~$\beta\in\mathcal{Z}^{\vee2}$
and~$Q\left(z\right)=\Im\left\langle \beta,z^{\vee2}\right\rangle $.
Then~$Q^{Wick}$ is essentially self-adjoint on~$\mathcal{H}_{fin}$
and its closure is essentially self-adjoint on any other core for~$N/\varepsilon+1$.
Inequalities~(\ref{eq:estimate-QWick}) and~(\ref{eq:estimate-[QWick,N+1]})
still hold on~$\mathcal{D}\left(N/\varepsilon+1\right)$.
\end{lem}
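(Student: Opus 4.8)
The plan is to derive essential self-adjointness from Nelson's commutator theorem (see Theorem~X.37 in~\cite{MR0493420}), using as comparison operator $T=N/\varepsilon+1$. This operator is self-adjoint, bounded below by~$1$ (since $N/\varepsilon$ is the number operator, hence $\geq0$), and $\mathcal{H}_{fin}$ is a core for it.

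First I would record that $Q^{Wick}$ is symmetric on $\mathcal{H}_{fin}$: the symbol $Q\left(z\right)=\Im\left\langle\beta,z^{\vee2}\right\rangle$ is real-valued, so $\bar{Q}=Q$, and the third item of Proposition~\ref{pro:Wick-calculus} gives $\left(Q^{Wick}\right)^{*}=Q^{Wick}$ on $\mathcal{H}_{fin}$ (equivalently, the creation and annihilation summands of $\frac{2i}{\varepsilon}\left.Q^{Wick}\right|_{\mathcal{Z}^{\vee n}}$ written in Lemma~\ref{lem:estimation-QWick-Q-quadratique} are mutually adjoint). Next I would extend $Q^{Wick}/\varepsilon$ off $\mathcal{H}_{fin}$ and simultaneously prove the last assertion of the lemma: given $\Psi\in\mathcal{D}\left(N/\varepsilon+1\right)$, choose $\Psi_{n}\in\mathcal{H}_{fin}$ with $\Psi_{n}\to\Psi$ and $\left(N/\varepsilon+1\right)\Psi_{n}\to\left(N/\varepsilon+1\right)\Psi$ (possible since $\mathcal{H}_{fin}$ is a core); by~(\ref{eq:estimate-QWick}) applied to $\Psi_{n}-\Psi_{m}$ the sequence $\left(Q^{Wick}/\varepsilon\right)\Psi_{n}$ is Cauchy, so the closure of $Q^{Wick}$ has $\mathcal{D}\left(N/\varepsilon+1\right)$ in its domain and $\left(Q^{Wick}/\varepsilon\right)\Psi_{n}$ converges to $\left(Q^{Wick}/\varepsilon\right)\Psi$. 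Passing to the limit, both~(\ref{eq:estimate-QWick}) and the quadratic-form version of~(\ref{eq:estimate-[QWick,N+1]}) persist on $\mathcal{D}\left(N/\varepsilon+1\right)$, every term being continuous under these convergences.

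With that in hand, write $A:=Q^{Wick}/\varepsilon$ for the closure, a closed symmetric operator with $\mathcal{D}\left(A\right)\supseteq\mathcal{D}\left(T\right)$. The two hypotheses of the commutator theorem for the pair $\left(A,T\right)$ on $\mathcal{D}\left(T\right)$ are exactly the estimates just extended: the relative bound $\left\Vert A\Psi\right\Vert\leq\frac{3}{2}\left\Vert\beta\right\Vert_{\mathcal{Z}^{\vee2}}\left\Vert T\Psi\right\Vert$ from~(\ref{eq:estimate-QWick}), and the form bound $\pm i\left(\left\langle A\Psi,T\Psi\right\rangle-\left\langle T\Psi,A\Psi\right\rangle\right)\leq c\left\langle\Psi,T\Psi\right\rangle=c\left\Vert T^{1/2}\Psi\right\Vert^{2}$, which is the case $k=1$ of~(\ref{eq:estimate-[QWick,N+1]}). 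Theorem~X.37 of~\cite{MR0493420} then yields that $A$, and hence $Q^{Wick}=\varepsilon A$, is essentially self-adjoint on $\mathcal{H}_{fin}$ and remains so when restricted to any other core for $T=N/\varepsilon+1$.

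The only point calling for some care is bookkeeping rather than analysis: matching the constants and the form-sense statement of Lemma~\ref{lem:estimation-QWick-Q-quadratique} to the precise hypotheses of the commutator theorem --- in particular that the comparison operator may be taken $\geq1$ (true here because $N\geq0$) and that one only needs the two estimates on a core of $T$. I do not expect a genuine analytic obstacle.
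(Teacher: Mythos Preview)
Your proposal is correct and follows exactly the paper's approach: apply the commutator Theorem~X.37 of~\cite{MR0493420} with comparison operator $N/\varepsilon+1$, using the $k=1$ estimates of Lemma~\ref{lem:estimation-QWick-Q-quadratique}. The paper's proof is a single sentence to this effect; you have simply spelled out the density argument extending~(\ref{eq:estimate-QWick}) and~(\ref{eq:estimate-[QWick,N+1]}) to $\mathcal{D}(N/\varepsilon+1)$, which the paper leaves implicit.
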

We still denote by~$Q^{Wick}$ this self-adjoint extension.
\begin{proof}
We apply the commutators Theorem X.37 of~\cite{MR0493420} with
the estimates of Lemma~\ref{lem:estimation-QWick-Q-quadratique}
for~$k=1$.\end{proof}
\begin{lem}
\label{lem:stability_by_U}If a solution of the quantum flow equation~(\ref{eq:quantum_flow})
exists then it leaves~$\mathcal{Q}(\left(N/\varepsilon+1\right)^{k})=\mathcal{D}(\left(N/\varepsilon+1\right)^{k/2})$
invariant for any integer~$k\geq2$.

In the time-independent case the estimate \[
\left\Vert U\left(t,0\right)\right\Vert _{\mathcal{L}\left(\mathcal{D}\left(\left(N/\varepsilon+1\right)^{k/2}\right)\right)}\leq\exp\left(3^{k}\sqrt{2}\left\Vert \beta\right\Vert \left|t\right|\right)\]
 holds.\end{lem}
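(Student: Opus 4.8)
The plan is to prove Lemma~\ref{lem:stability_by_U} in two steps: first establish the invariance of the form domain~$\mathcal{Q}((N/\varepsilon+1)^k)$ under a hypothetical solution~$U(t,0)$, and then, in the time-independent case, upgrade this to the quantitative exponential bound via a Gronwall-type argument. The main tool throughout is the commutator estimate~(\ref{eq:estimate-[QWick,N+1]}) from Lemma~\ref{lem:estimation-QWick-Q-quadratique}, together with the fact (Lemma~\ref{lem:QWick-autoadjoint}) that~$Q^{Wick}$ is self-adjoint on~$\mathcal{D}(N/\varepsilon+1)$, which is what makes the differentiation of~$U(t,0)$ legitimate on the relevant cores.

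First I would fix an integer~$k\geq2$ and~$y\in\mathcal{H}_{\mathrm{fin}}$ (or more generally a vector in a suitable core), set~$M=N/\varepsilon+1$, and consider the quadratic form~$t\mapsto\langle U(t,0)y,M^k U(t,0)y\rangle$. Because a solution satisfies~$i\frac{d}{dt}U(t,0)y = \varepsilon^{-1}Q_t^{Wick}U(t,0)y$ for almost every~$t$ (item~3 of the definition of solution, with the~$\varepsilon^{-1}Q_t$ normalization), this form is absolutely continuous in~$t$ and its derivative is, almost everywhere,
\[
\frac{d}{dt}\langle U(t,0)y,M^k U(t,0)y\rangle = i\left\langle U(t,0)y,\left[\tfrac{Q_t^{Wick}}{\varepsilon},M^k\right]U(t,0)y\right\rangle\,.
\]
Applying the commutator bound~(\ref{eq:estimate-[QWick,N+1]}), with~$\|\beta_t\|$ bounded on compact time intervals by continuity of~$t\mapsto\beta_t$ (hypothesis~\textbf{H2}), gives
\[
\left|\frac{d}{dt}\langle U(t,0)y,M^k U(t,0)y\rangle\right| \leq 3^k\sqrt{2}\,\sup_{s}\|\beta_s\|_{\mathcal{Z}^{\vee2}}\,\langle U(t,0)y,M^k U(t,0)y\rangle\,.
\]
A standard regularization is needed here to make this rigorous: one replaces~$M^k$ by the bounded, form-bounded truncation~$M^k(1+\delta M)^{-k}$ (or uses a spectral cutoff~$M^k\mathbf{1}_{[0,R]}(M)$), for which the derivative identity is unambiguous and the commutator estimate still applies uniformly in~$\delta$ (respectively~$R$); Gronwall then yields a bound independent of the regularization parameter, and monotone convergence lets one pass to the limit to conclude that~$M^{k/2}U(t,0)y$ is finite in norm. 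This shows~$U(t,0)$ maps~$\mathcal{D}(M^{k/2})$ into itself; applying the same argument to~$U(0,t)=U(t,0)^{*}$ (which solves the backward equation, item~4) gives the reverse inclusion, hence invariance of~$\mathcal{Q}(M^k)=\mathcal{D}(M^{k/2})$.

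For the second assertion, in the time-independent case the supremum over~$s$ of~$\|\beta_s\|$ is simply~$\|\beta\|$, and integrating the differential inequality above from~$0$ to~$t$ gives
\[
\langle U(t,0)y,M^k U(t,0)y\rangle \leq e^{3^k\sqrt{2}\,\|\beta\|\,|t|}\,\langle y,M^k y\rangle\,,
\]
which is exactly the claimed operator-norm bound~$\|U(t,0)\|_{\mathcal{L}(\mathcal{D}(M^{k/2}))}\leq\exp(3^k\sqrt{2}\|\beta\||t|)$ once one recalls that the graph norm of~$M^{k/2}$ is equivalent to~$\Psi\mapsto\langle\Psi,M^k\Psi\rangle^{1/2}$. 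I expect the main obstacle to be the regularization step: the commutator~$[Q^{Wick},M^k]$ is only controlled as a quadratic form on~$\mathcal{H}_{\mathrm{fin}}$, so one must carefully justify differentiating the cut-off form, check that the error terms introduced by the cutoff vanish in the limit, and ensure the Gronwall constant does not depend on the cutoff — this is routine but is where all the analytic care lies. Everything else (the group property, self-adjointness of~$Q^{Wick}$, the a.e.-differentiability of~$U(t,s)y$) is supplied by the preceding lemmas and the definition of solution.
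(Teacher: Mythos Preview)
Your proposal is correct and follows essentially the same route as the paper, which simply invokes Theorem~2 of Faris--Lavine~\cite{MR0391794} (adapted to the time-dependent setting) together with the commutator estimate~(\ref{eq:estimate-[QWick,N+1]}); what you have written out is precisely the standard regularize-differentiate-Gronwall argument underlying that theorem. One cosmetic remark: your Gronwall inequality actually yields $\|M^{k/2}U(t,0)y\|\leq e^{\frac{1}{2}3^k\sqrt{2}\|\beta\||t|}\|M^{k/2}y\|$, a factor of two better in the exponent than the stated bound, so ``exactly'' should be ``at least as strong as''.
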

\begin{proof}
From Lemma~\ref{lem:QWick-autoadjoint}, for any~$k\geq2$, $\mathcal{D}(\left(N/\varepsilon+1\right)^{k/2})\subset\mathcal{D}(Q^{Wick})$.
We can adapt the proof of Theorem~2 of~\cite{MR0391794} to the
case of the quantization of a continuous one parameter family of quadratic
polynomials with the estimates of Lemma~\ref{lem:estimation-QWick-Q-quadratique}.
\end{proof}

\begin{proof}[Proof of theorem~\ref{thm:existence-quantum-flow}]
We use Theorems~4.1 and 5.1 of~\cite{MR0279626} with the family
of operators~$iQ\left(t\right)^{Wick}/\varepsilon$ (here we directly
consider the self-adjoint extension of~$Q_{t}^{Wick}/\varepsilon$).
We set~$Y=\mathcal{D}(\left(N/\varepsilon+1\right)^{k/2})$.
\begin{enumerate}
\item This family is stable in the sense that~$\|\prod_{j=1}^{k}e^{-is_{j}Q\left(t_{j}\right)^{Wick}/\varepsilon}\|_{\mathcal{L}\left(\mathcal{H}\right)}\leq1$
(we actually have an equality here).
\item The space~$Y$ is admissible for this family in the sense that for
each~$t$, $(iQ_{t}^{Wick}/\varepsilon+\lambda)^{-1}$ leaves~$Y$
invariant and \[
\left\Vert \left(iQ_{t}^{Wick}/\varepsilon+\lambda\right)^{-1}\right\Vert _{\mathcal{L}\left(Y\right)}\leq\left(\lambda-3^{k}\sqrt{2}\left\Vert \beta\right\Vert \right)^{-1}\]
for~$\Re\lambda>3^{k}\sqrt{2}\left\Vert \beta\right\Vert $.

This is true because, as we have seen in Lemma~\ref{lem:stability_by_U},
$(e^{-isQ_{t}^{Wick}/\varepsilon})_{s\in\mathbb{R}}$ leaves~$Y$
invariant and, thanks to the estimate of the same lemma, we can apply
the resolvent formula \[
\left(iQ_{t}^{Wick}/\varepsilon+\lambda\right)^{-1}=\int_{0}^{+\infty}e^{-\lambda s}e^{-isQ_{t}^{Wick}/\varepsilon}ds\]
and obtain the desired estimate.

\item $ $$Y\subset\mathcal{D}\left(Q_{t}^{Wick}/\varepsilon\right)$ so
that~$Q_{t}^{Wick}/\varepsilon\in\mathcal{L}\left(Y,\mathcal{H}\right)$
for each~$t$, and the map~$t\to Q_{t}^{Wick}/\varepsilon\in\mathcal{L}\left(Y,\mathcal{H}\right)$
is continuous.
\item $Y=\mathcal{D}(\left(N/\varepsilon+1\right)^{k/2})$ is reflexive.
\end{enumerate}
Theorems~4.1 and 5.1 of~\cite{MR0279626} thus apply and give the
existence of an evolution operator.

The preservation of the set~$\mathcal{D}(\left(N/\varepsilon+1\right)^{k/2})$
comes from the application of Lemma~\ref{lem:stability_by_U} to
the solution of the time-dependent problem. To conclude it is then
enough to observe that the domains~$\mathcal{D}(\left\langle N\right\rangle ^{k/2})$
and~$\mathcal{D}(\left(N/\varepsilon+1\right)^{k/2})$ are the same
and have equivalent norms.
\end{proof}

\subsubsection{With the~$\alpha$ term}

Assume~\textbf{H1} and~\textbf{H2}. Let~$\hat{U}$ be the solution
of\begin{equation}
\left\{ \begin{array}{rcl}
i\partial_{t}\hat{U}\left(t,0\right) & = & \frac{\hat{Q}_{t}^{Wick}}{\varepsilon}\hat{U}\left(t,0\right)\\
\hat{U}\left(0,0\right) & = & I_{\mathcal{H}}\end{array}\right.\end{equation}
with~$\hat{Q}_{t}\left(z\right)=\Im\left\langle \hat{\beta}_{t},z^{\vee2}\right\rangle $,
$\hat{\beta}_{t}=u_{\alpha}\left(t,0\right)^{*\vee2}\beta_{t}$. What
we call here the solution of\begin{equation}
\left\{ \begin{array}{rcl}
i\partial_{t}U\left(t,0\right) & = & \frac{Q_{t}^{Wick}}{\varepsilon}U\left(t,0\right)\\
U\left(0,0\right) & = & I_{\mathcal{H}}\end{array}\right.\end{equation}
with~$Q_{t}=\left\langle z,\alpha_{t}z\right\rangle +\Im\left\langle \beta_{t},z^{\vee2}\right\rangle $
is\[
U\left(t,0\right)=\Gamma\left(u_{\alpha}\left(t,0\right)\right)\circ\hat{U}\left(t,0\right)\,.\]

\section{Removal of the~$\alpha$ part}
\begin{prop}
Assume~\textbf{H1} and~\textbf{H2}. Suppose Theorems~\ref{thm:integral-formula}
and~\ref{thm:exponential-formula} hold with a null one parameter
family of self-adjoint operators on~$\mathcal{Z}$, and~$\hat{\beta}_{t}=u_{\alpha}\left(t,0\right)^{*\vee2}\beta_{t}$.
We denote with a hat the quantities associated with this solution.
Then Theorems~\ref{thm:integral-formula} and~\ref{thm:exponential-formula}
hold.\end{prop}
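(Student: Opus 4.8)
The strategy is a conjugation/change-of-variables argument exploiting the factorization $U(t,0)=\Gamma(u_\alpha(t,0))\circ \hat U(t,0)$ and $\varphi(t,0)=u_\alpha(t,0)\circ\hat\varphi(t,0)$ already established in Section~\ref{sec:Existence-quantum-evolution} and Section~\ref{sec:Classical-evolution}. The point is that the unitary $\Gamma(u_\alpha(t,0))$ is the Fock-space second quantization of the \emph{linear} symplectomorphism $u_\alpha(t,0)$, and conjugating a Wick observable by it simply composes the symbol with $u_\alpha(t,0)$: one has $\Gamma(u)^* b^{Wick}\Gamma(u)=\bigl(b\circ u\bigr)^{Wick}$ for unitary $u$ (the standard compatibility of $\Gamma$ with Wick quantization, which preserves $\mathcal{P}_{\le m}(\mathcal{Z})$ and the domains $\mathcal{D}(\langle N\rangle^{m/2})$ since $\Gamma(u)$ commutes with $N$). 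Hence
\begin{align*}
U(0,t)b^{Wick}U(t,0)
&=\hat U(0,t)\,\Gamma(u_\alpha(t,0))^*\,b^{Wick}\,\Gamma(u_\alpha(t,0))\,\hat U(t,0)\\
&=\hat U(0,t)\,\bigl(b\circ u_\alpha(t,0)\bigr)^{Wick}\,\hat U(t,0).
\end{align*}
The first step is therefore to record this identity precisely, checking that $b\circ u_\alpha(t,0)$ is again a polynomial in $\mathcal{P}_{\le m}(\mathcal{Z})$ (Proposition~\ref{pro:estimate_comp_classical_flow} with $\varphi=u_\alpha(t,0)$, which is a contraction so the estimate is uniform on compacts) and that the identity holds as continuous operators on the relevant domain.

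The second step is to apply the hypothesis — Theorems~\ref{thm:integral-formula} and~\ref{thm:exponential-formula} for the $\alpha$-free problem with data $\hat\beta_t=u_\alpha(t,0)^{*\vee2}\beta_t$ — to the polynomial $b\circ u_\alpha(t,0)$. This yields
\[
\hat U(0,t)\bigl(b\circ u_\alpha(t,0)\bigr)^{Wick}\hat U(t,0)
=\Bigl(e^{\frac{\varepsilon}{2}\hat\Lambda^t}\bigl(b\circ u_\alpha(t,0)\circ\hat\varphi(t,0)\bigr)\Bigr)^{Wick}
=\Bigl(e^{\frac{\varepsilon}{2}\hat\Lambda^t}\bigl(b\circ\varphi(t,0)\bigr)\Bigr)^{Wick}
\]
(and the analogous identity for the integral formula, with $\hat\lambda^s$, $\hat b^{(k)t,\bar s^k}$), using precisely the factorization $\varphi(t,0)=u_\alpha(t,0)\circ\hat\varphi(t,0)$. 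So the symbol on the right is already the correct composed symbol $b\circ\varphi(t,0)$; what remains is purely algebraic.

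The third and main step — the real content — is to identify the $\alpha$-free operators $\hat\Lambda^t$, $\hat\lambda^s$ (built from the linear/antilinear parts $\hat L,\hat A$ of $\hat\varphi$ and from $\hat\beta$) with the operators $\Lambda^t$, $\lambda^s$ of the theorem statements (built from $L,A$ and $\beta$). This is where the unitarity of $u_\alpha$ is used structurally. One checks that $\varphi(t,0)=u_\alpha(t,0)\hat\varphi(t,0)$ decomposes as $L(t,0)=u_\alpha(t,0)\hat L(t,0)$ and $A(t,0)=u_\alpha(t,0)\hat A(t,0)$ (since $u_\alpha$ is $\mathbb{C}$-linear), whence $A^*(t,0)A(t,0)=\hat A^*(t,0)u_\alpha(t,0)^*u_\alpha(t,0)\hat A(t,0)=\hat A^*(t,0)\hat A(t,0)$ and $L^*(t,0)A(t,0)=\hat L^*(t,0)\hat A(t,0)$, so $v_t=\hat v_t$; the trace term and the $v_t$ terms defining $\Lambda^t$ are thus literally unchanged. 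Similarly for $\lambda^s$ one uses that $\hat\beta$ absorbs exactly the $u_\alpha$ factors needed so that $-i\{c\circ\hat\varphi(0,s),\hat Q_s\}^{(2)}\circ\hat\varphi(s,0)=-i\{c\circ\varphi(0,s),Q_s\}^{(2)}\circ\varphi(s,0)$ after the cancellation $u_\alpha(s,0)^*u_\alpha(s,0)=I$ in the second-order Poisson bracket (here the second-order bracket only sees $\partial^2 Q_s$, which equals $\pm i$ times $\hat\beta_s$ resp. $\beta_s$ up to the $u_\alpha^{\vee2}$ dressing, and these dressings cancel against the $\varphi$-to-$\hat\varphi$ conjugations). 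The anticipated obstacle is bookkeeping: carefully tracking the $\mathbb{C}$-linear versus $\mathbb{C}$-antilinear parts through the composition $u_\alpha\circ\hat\varphi$, the adjoints in $\mathcal{X}(\mathcal{Z})$ (recalling $\varphi^{-1}=L^*-A^*$ for a symplectomorphism, per Appendix~\ref{sec:symplectic-transformations}), and the symmetrized tensor factors $u_\alpha^{\vee2}$ acting on $\beta_t$, to see that all $u_\alpha$-dependence cancels term by term. Once that cancellation is verified for $\Lambda^t$ and $\lambda^s$ — equivalently, once one sees the $\alpha$-free formulae applied to the dressed data reproduce the general formulae verbatim — the conclusion is immediate, and the statement for the integral formula follows by the identical computation on each $\hat b^{(k)t,\bar s^k}$.
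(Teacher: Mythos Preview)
Your proposal is correct and follows essentially the same route as the paper: factor $U(t,0)=\Gamma(u_\alpha(t,0))\hat U(t,0)$, use $\Gamma(u)^*b^{Wick}\Gamma(u)=(b\circ u)^{Wick}$ to reduce to the $\alpha$-free problem applied to $b\circ u_\alpha(t,0)$, and then verify $\hat\lambda^s=\lambda^s$ (and $\hat\Lambda^t=\Lambda^t$) via the cancellation $u_\alpha^*u_\alpha=I$ and the identity $u_\alpha^{\vee2}\hat\beta=\beta$. The paper writes out only the integral-formula side explicitly (checking $\widehat{b\circ u_\alpha}^{(k)t,\bar s^k}=b^{(k)t,\bar s^k}$ by induction using $\hat\lambda^s=\lambda^s$), whereas you also spell out the $\Lambda^t$ identification via $L=u_\alpha\hat L$, $A=u_\alpha\hat A$; both are the same argument.
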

\begin{proof}
For Equation~\ref{eq:integral-formula}, we forget during the proof
the~$\left(t,0\right)$ dependency in our notations and write\[
\int_{\Delta_{t}^{0}}b^{\left(0\right)t,\bar{s}^{0}}d\bar{s}^{0}\]
instead of~$b^{\left(0\right),t}$. Then\begin{eqnarray*}
U^{*}b^{Wick}U & = & \hat{U}^{*}\Gamma\left(u_{\alpha}^{*}\right)b^{Wick}\Gamma\left(u\right)\hat{U}\\
 & = & \hat{U}^{*}\left(b\circ u_{\alpha}\right)^{Wick}\hat{U}\\
 & = & \sum_{k=0}^{\left\lfloor \frac{m}{2}\right\rfloor }\left(\frac{\varepsilon}{2}\right)^{k}\int_{\Delta_{t}^{k}}\left(\widehat{b\circ u_{\alpha}}^{\left(k\right)t,\bar{s}^{k}}\right)^{Wick}d\bar{s}^{k}\end{eqnarray*}
where the~$\hat{b}^{\left(k\right)t,\bar{s}^{k}}$ are defined recursively
by\[
\left\{ \begin{array}{rcl}
\hat{b}^{\left(0\right)t}\left(z\right) & = & b\circ\hat{\varphi}\\
\hat{b}^{\left(k+1\right)t,\bar{s}^{k+1}} & = & \hat{\lambda}^{s_{k+1}}\hat{b}^{\left(k\right)t,\bar{s}^{k}}\end{array}\right.\]
with~$\hat{\lambda}^{s}c=-i\left\{ c\circ\hat{\varphi}\left(0,s\right),\hat{Q}_{s}\right\} ^{\left(2\right)}\circ\hat{\varphi}\left(s,0\right)$
 for any polynomial~$c$. Thus it suffices to prove that\[
\widehat{b\circ u_{\alpha}}^{\left(k\right)t,\bar{s}^{k}}=b^{\left(k\right)t,\bar{s}^{k}}\,.\]
This is clear for~$k=0$ as~$u_{\alpha}\circ\hat{\varphi}=\varphi$.
Then we observe that\begin{eqnarray*}
\hat{\lambda}^{s}c & = & -i\left\{ c\circ\hat{\varphi}^{-1},\hat{Q}\right\} ^{\left(2\right)}\circ\hat{\varphi}\\
 & = & -i\left\{ c\circ\varphi^{-1}\circ u_{\alpha},\hat{Q}\right\} ^{\left(2\right)}\circ u_{\alpha}^{-1}\circ\varphi\\
 & = & -i\left\{ c\circ\varphi^{-1},Q\right\} ^{\left(2\right)}\circ\varphi\end{eqnarray*}
where we used that~$\partial_{z}^{2}\left\langle z,\alpha z\right\rangle =0$,
$\partial_{\bar{z}}^{2}\left\langle z,\alpha z\right\rangle =0$ and~$\beta_{t}=u_{\alpha}\left(t,0\right)^{\vee2}\hat{\beta}_{t}$.
\end{proof}

We can thus restrict our proof to the case of a polynomial~$Q_{t}$
of the form~$Q_{t}\left(z\right)=\Im\left\langle \beta_{t},z^{\vee2}\right\rangle $
with~$\beta_{t}\in\mathcal{C}^{0}\left(\mathbb{R};\mathcal{Z}^{\vee2}\right)$
and no~$\left(\alpha_{t}\right)$ term.

\section{A Dyson type expansion formula for the Wick symbol of the evolved
quantum observable\label{sec:integral-formula}}

In this section we prove Theorem~\ref{thm:integral-formula}.
\begin{proof}
We first prove that the formula, for~$c\in\mathcal{P}_{\leq m}\left(\mathcal{Z}\right)$,\[
U\left(0,s\right)\left(c\circ\varphi\left(0,s\right)\right)^{Wick}U\left(s,0\right)=c^{Wick}-\frac{i\varepsilon}{2}\int_{0}^{s}U\left(0,\sigma\right)\left\{ c\circ\varphi\left(0,\sigma\right),Q_{\sigma}\right\} ^{\left(2\right)Wick}U\left(\sigma,0\right)d\sigma\]
holds as an equality of continuous operators from~$\mathcal{D}(\left\langle N\right\rangle ^{m/2})$
to~$\mathcal{H}$, with~$\left\langle N\right\rangle =(N^{2}+1)^{1/2}$.
This is a consequence of the fact that the derivative of the left
hand term as a function of~$s$ is~$-\frac{i\varepsilon}{2}U\left(0,s\right)\left\{ c\circ\varphi\left(0,s\right),Q_{s}\right\} ^{\left(2\right)Wick}U\left(s,0\right)$
as it can be seen from the relation\[
i\partial_{\sigma}\left(c\circ\varphi\left(0,\sigma\right)\right)=-\partial_{z}\left(c\circ\varphi\left(0,\sigma\right)\right).\partial_{\bar{z}}Q_{\sigma}+\partial_{z}Q_{\sigma}.\partial_{\bar{z}}\left(c\circ\varphi\left(0,\sigma\right)\right)\]
and Proposition~\ref{pro:Wick-calculus}. Applying the previous
formula with~$c=b^{\left(K\right)t,\bar{s}^{K}}$ we get recursively
\begin{eqnarray*}
 &  & U\left(0,t\right)b^{Wick}U\left(t,0\right)\\
 &  & \quad=\sum_{k=0}^{K-1}\left(\frac{\varepsilon}{2}\right)^{k}\int_{\bar{s}^{k}\in\Delta_{t}^{k}}\left(b^{\left(k\right)t,\bar{s}^{k}}\right)^{Wick}d\bar{s}^{k}\\
 &  & \quad\phantom{=}+\left(\frac{\varepsilon}{2}\right)^{K}\int_{\bar{s}^{K}\in\Delta_{t}^{K}}U\left(0,s_{K}\right)\left(b^{\left(K\right)t,\bar{s}^{K}}\circ\varphi\left(0,s_{K}\right)\right)^{Wick}U\left(s_{K},0\right)d\bar{s}^{K}\,.\end{eqnarray*}
This process gives a null remainder as soon as~$K>m/2$ as for~$K\leq\left\lfloor m/2\right\rfloor $,
since the polynomial~$b^{\left(K\right)\bar{s}^{K}}$ is of total
order~$m-2K$.
\end{proof}

\section{An exponential type expansion formula for the Wick symbol of the
evolved observable\label{sec:exponential-formula}}

In this section we prove Theorem~\ref{thm:exponential-formula}.

\subsection{Quantum evolution as a Bogoliubov implementation}

Some basic facts about symplectomorphisms are recalled in Appendix~\ref{sec:symplectic-transformations}.
\begin{defn}
\label{def:implementable-symplectomorphism}A symplectomorphism~$T$
is called \emph{implementable} if and only if there exists a unitary
operator~$U$ on~$\mathcal{H}$ , called a \emph{Bogoliubov implementer}
of~$T$, such that\[
\forall\xi\in\mathcal{Z},\, U^{*}W\left(\xi\right)U=W\left(T\xi\right)\,.\]
\end{defn}
\begin{prop}
\label{pro:bogoliubov-implementation}Assume~$\alpha_{t}\equiv0$
and~\textbf{H2}. Let~$Q_{t}=\Im\left\langle \beta_{t},z^{\vee2}\right\rangle $,
$\varphi\left(t,s\right)$ the associated classical evolution (see
Section~\ref{sec:Classical-evolution}) and~$U\left(t,s\right)$
the associated quantum evolution (see Section~\ref{sec:Existence-quantum-evolution}).
Then for all~$t$ in~$\mathbb{R}$, $U\left(t,0\right)$ is a Bogoliubov
implementer of~$-i\varphi\left(0,t\right)i$.\end{prop}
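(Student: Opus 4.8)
The plan is to transfer the statement from the Weyl operators to the field operators $\Phi(\xi)$ — on which the conjugation by $U$ becomes a genuinely \emph{linear} flow that can be integrated and matched with the classical one — and then to exponentiate. Since $\alpha_t\equiv0$ by hypothesis, $Q_t(z)=\Im\langle\beta_t,z^{\vee2}\rangle$; I write $\mu_t$ for the $\mathbb{C}$-antilinear Hilbert--Schmidt operator $z\mapsto(I_{\mathcal Z}\vee\langle z|)\beta_t$ furnished by \textbf{H2}, noting that symmetry of $\beta_t$ gives $\mu_t^{*}=\mu_t$ and that $\partial_{\bar z}Q_t(z)=i\mu_t z$, so \eqref{eq:classical-flow} amounts to $\partial_t\varphi(t,0)=\mu_t\,\varphi(t,0)$, $\varphi(0,0)=I_{\mathcal Z}$. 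As $U(t,0)$ is unitary with $U(t,0)^{*}=U(0,t)$, and $Ve^{iA}V^{*}=e^{iVAV^{*}}$ for a unitary $V$ and self-adjoint $A$, it is enough to prove
\[
U(0,t)\,\Phi(\xi)\,U(t,0)=\Phi\bigl((-i\varphi(0,t)i)\xi\bigr)\qquad(\xi\in\mathcal Z)
\]
as an identity of self-adjoint operators, and then take $e^{i\,\cdot}$ of both sides.

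The first step I would carry out is the commutator identity. Writing $\Phi(\xi)=p_\xi^{Wick}$ with $p_\xi(z)=\sqrt2\,\Re\langle z,\xi\rangle$, the product rule of Proposition~\ref{pro:Wick-calculus}, applied to the degree-one polynomial $p_\xi$ and the degree-two polynomial $Q_t$, leaves only the order-one term: $[\Phi(\xi),Q_t^{Wick}]=\varepsilon\,\{p_\xi,Q_t\}^{(1)Wick}$. A short computation with $\partial_z p_\xi=\tfrac{1}{\sqrt2}\langle\xi|$, $\partial_{\bar z}p_\xi=\tfrac{1}{\sqrt2}|\xi\rangle$, $\partial_{\bar z}Q_t(z)=i\mu_t z$ and $\mu_t^{*}=\mu_t$ then gives $\{p_\xi,Q_t\}^{(1)}(z)=i\sqrt2\,\Re\langle z,\mu_t\xi\rangle$, i.e.
\[
[\Phi(\xi),Q_t^{Wick}]=i\varepsilon\,\Phi(\mu_t\xi).
\]
Setting $\Psi_t(\xi):=U(0,t)\Phi(\xi)U(t,0)$ and differentiating — legitimate on $\mathcal H_{fin}$ thanks to the domain properties of Section~\ref{sec:Existence-quantum-evolution}, since $U$ preserves the spaces $\mathcal D(\langle N\rangle^{k/2})$ and the flow is strongly $C^{1}$ there — one gets $\partial_t\Psi_t(\xi)=\tfrac{1}{i\varepsilon}U(0,t)[\Phi(\xi),Q_t^{Wick}]U(t,0)=\Psi_t(\mu_t\xi)$, with $\Psi_0=\Phi$.

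On the classical side I would set $R(t):=-i\varphi(0,t)i\in\mathcal X(\mathcal Z)$ and check that $R(0)=I_{\mathcal Z}$ and $\partial_t R(t)=R(t)\,\mu_t$: differentiating $\varphi(0,t)\varphi(t,0)=I_{\mathcal Z}$ and using $\partial_t\varphi(t,0)=\mu_t\varphi(t,0)$ gives $\partial_t\varphi(0,t)=-\varphi(0,t)\mu_t$, and conjugating by multiplication by $i$, with $\mu_t$ antilinear (so that $\mu_t i=-i\mu_t$), turns $-i\,\partial_t\varphi(0,t)\,i$ into $R(t)\mu_t$. Hence $t\mapsto\Phi(R(t)\xi)$ solves exactly the same Cauchy problem as $\Psi_t(\xi)$, namely $\partial_t X_t(\xi)=X_t(\mu_t\xi)$, $X_0=\Phi$. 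To force $\Psi_t(\xi)=\Phi(R(t)\xi)$ I would use a Gronwall argument at the level of quadratic forms (note the ODE couples different $\xi$'s through $\mu_t$, so uniqueness must be run on the whole symbol, not at a fixed $\xi$): for $\phi,\psi\in\mathcal H_{fin}$ the quantity $d_t(\xi):=\langle\phi,(\Psi_t(\xi)-\Phi(R(t)\xi))\psi\rangle$ is, for each $t$, a bounded $\mathbb R$-linear functional of $\xi$, vanishes at $t=0$, and obeys $\partial_t d_t(\xi)=d_t(\mu_t\xi)$, so $\sup_{\|\xi\|\le1}|d_t(\xi)|\le\int_0^t\|\mu_s\|\,\sup_{\|\xi\|\le1}|d_s(\xi)|\,ds$ gives $d_t\equiv0$. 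Since $\mathcal H_{fin}$ is a core for the essentially self-adjoint operator $\Phi(R(t)\xi)$ while $U(0,t)\Phi(\xi)U(t,0)$ is self-adjoint, this form identity is the operator identity displayed above; exponentiating yields $U(t,0)^{*}W(\xi)U(t,0)=W(R(t)\xi)=W\bigl((-i\varphi(0,t)i)\xi\bigr)$ — the assertion. (That $-i\varphi(0,t)i$ is a symplectomorphism follows from $\varphi(t,0)$ being one, see Section~\ref{sec:Classical-evolution}, together with the fact that conjugation by the symplectomorphism given by multiplication by $i$ preserves that property; cf.\ Appendix~\ref{sec:symplectic-transformations}.)

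The decisive obstacle is the commutator identity $[\Phi(\xi),Q_t^{Wick}]=i\varepsilon\Phi(\mu_t\xi)$ — that Heisenberg evolution carries $\Phi(\xi)$ into $\Phi$ of a vector transported by a linear flow — together with keeping the $\mathbb C$-linear/$\mathbb C$-antilinear bookkeeping exact: the self-adjointness $\mu_t=\mu_t^{*}$, coming from the symmetry of $\beta_t$, is what makes $\{p_\xi,Q_t\}^{(1)}$ equal $i\,p_{\mu_t\xi}$, and the sign produced when conjugating the antilinear $\mu_t$ by $i$ is precisely what makes $R(t)=-i\varphi(0,t)i$ obey $\partial_t R(t)=R(t)\mu_t$ rather than some variant. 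Once this is in place the Gronwall uniqueness and the return to Weyl operators by functional calculus are routine, modulo the domain bookkeeping for the $\varepsilon$-quantization of a continuous family of quadratic polynomials set up in Section~\ref{sec:Existence-quantum-evolution}.
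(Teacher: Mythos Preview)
Your argument is correct and takes a genuinely different route from the paper. The paper differentiates the \emph{Weyl}-operator valued map $t\mapsto U(0,t)W(-i\varphi(t,0)i\xi)U(t,0)$ directly and shows it is constant; the key tools are the weak derivative formula for $W(\cdot)$ (Proposition~\ref{pro:derivative-Weyl}) together with the translation identity $W(\eta)^{*}Q^{Wick}W(\eta)=Q(\cdot+\tfrac{i\varepsilon}{\sqrt2}\eta)^{Wick}$, after which one checks a polynomial identity in~$z$. You instead work at the level of the unbounded field operators: the Wick product rule gives the clean commutator $[\Phi(\xi),Q_t^{Wick}]=i\varepsilon\,\Phi(\mu_t\xi)$, which turns $t\mapsto U(0,t)\Phi(\xi)U(t,0)$ into a linear flow in the argument that you match with $R(t)=-i\varphi(0,t)i$ via $\partial_t R=R\mu_t$, and only then exponentiate. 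The paper's route has the advantage of staying with bounded operators throughout, so the only domain issue is the weak differentiability of $W(\cdot)$ on $\mathcal D(N+1)$; your route is arguably more transparent (it exhibits directly that the Heisenberg-evolved field is again a field, with argument transported by the classical flow) but requires the extra step of passing from the form identity on $\mathcal H_{fin}$ to an equality of self-adjoint operators before applying $e^{i\,\cdot}$. Both hinge on the same sign bookkeeping for the antilinear $\mu_t$ under conjugation by~$i$, and both use that $U(t,0)$ preserves the scale $\mathcal D(\langle N\rangle^{k/2})$.
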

\begin{proof}
 We begin with a formal computation which will be justified further.
It suffices to show that \[
i\varepsilon\partial_{t}\left[U\left(0,t\right)W\left(-i\varphi\left(t,0\right)i\xi\right)U\left(t,0\right)\right]=0\,.\]
Computing this derivative and omitting the time and~$-i\varphi\left(t,0\right)i\xi$
dependencies in our notations, we get with~$U\left(t,0\right)=U$\[
U^{*}W\left\{ -W^{*}Q^{Wick}W+Q^{Wick}+W^{*}i\varepsilon\partial_{t}W\right\} U\,.\]
Then from Proposition~2.10~(iii) in~\cite{MR2465733}, the differential
formula of Weyl operators recalled in Proposition~\ref{pro:derivative-Weyl}
below  and with~$f_{t}=-i\varphi\left(t,0\right)i\xi$ it suffices
to show that \[
Q\left(z+\frac{i\varepsilon}{\sqrt{2}}f_{t}\right)=Q\left(z\right)+i\varepsilon\left(\frac{i\varepsilon}{2}\Im\left\langle f_{t},\partial_{t}f_{t}\right\rangle +i\sqrt{2}\Re\left\langle \partial_{t}f_{t},z\right\rangle \right)\]
to get the result. This equality results from the expansion of~$Q\left(z\right)=\Im\left\langle \beta,z^{\vee2}\right\rangle $,
recalling that $i\partial_{t}\varphi\left(t,0\right)\xi=\partial_{\bar{z}}Q\left(\varphi\left(t,0\right)\xi\right)$,
and observing that~$\partial_{\bar{z}}Q\left(z\right)=i\left(\left\langle z\right|\vee I_{\mathcal{Z}}\right)\left|\beta\right\rangle $.
We now need to clarify the meaning of this computation. It suffices
to show that the quantity\[
\left\langle \Phi,U\left(0,t\right)W\left(-i\varphi\left(t,0\right)i\xi\right)U\left(t,0\right)\Psi\right\rangle \]
is constant for~$\Psi$, $\Phi$ in $\mathcal{D}\left(N+1\right)$.
Since this domain is preserved by the operators~$U(t,s)$, the Weyl
operators are weakly derivable on this domain (see next proposition),
and~$U\left(t,s\right)$ is derivable on this domain, then we get
the justification of the previous formal computation.
\end{proof}

\begin{prop}
\label{pro:derivative-Weyl}Let~$z$,~$h$ be vectors in~$\mathcal{Z}$,~$t$
be a real parameter and~$\varphi$,~$\psi$ be in the domain of~$\Phi\left(h\right)$.
Then\begin{eqnarray*}
\lim_{t\to0}\frac{1}{t}\left(\left\langle \varphi,\left[W\left(z+th\right)-W\left(z\right)\right]\psi\right\rangle \right) & = & \left\langle \varphi,W\left(z\right)\left[i\Phi\left(h\right)+\frac{i\varepsilon}{2}\Im\left\langle z,h\right\rangle +\right]\psi\right\rangle \\
 & = & \left\langle \varphi,\left[i\Phi\left(h\right)-\frac{i\varepsilon}{2}\Im\left\langle z,h\right\rangle \right]W\left(z\right)\psi\right\rangle \,.\end{eqnarray*}
\end{prop}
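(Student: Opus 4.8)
The plan is to reduce the statement to Stone's theorem by means of the Weyl relations in the $\varepsilon$-dependent normalization used here. First I would record that, since $[\Phi(\xi_1),\Phi(\xi_2)] = i\varepsilon\Im\langle\xi_1,\xi_2\rangle$ on $\mathcal{H}_{fin}$ and this commutator is a scalar, the Baker--Campbell--Hausdorff identity gives $W(\xi_1)W(\xi_2) = e^{-\frac{i\varepsilon}{2}\Im\langle\xi_1,\xi_2\rangle}W(\xi_1+\xi_2)$ for all $\xi_1,\xi_2\in\mathcal{Z}$. Using the linearity $\Phi(th)=t\Phi(h)$, i.e.\ $W(th)=e^{it\Phi(h)}$, this yields the two factorizations $W(z+th)=e^{\frac{i\varepsilon t}{2}\Im\langle z,h\rangle}\,W(z)W(th)$ and $W(z+th)=e^{-\frac{i\varepsilon t}{2}\Im\langle z,h\rangle}\,W(th)W(z)$, obtained by exchanging the roles of the two summands.

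Next, since $\Phi(h)$ is essentially self-adjoint on $\mathcal{H}_{fin}$, the one-parameter unitary group $t\mapsto W(th)$ has generator $i\Phi(h)$, so by Stone's theorem $\tfrac1t\bigl(W(th)-I\bigr)\psi\to i\Phi(h)\psi$ strongly in $\mathcal{H}$ for every $\psi\in\mathcal{D}(\Phi(h))$. Inserting the first factorization into $\tfrac1t\bigl(W(z+th)-W(z)\bigr)\psi$ and splitting it as $e^{\frac{i\varepsilon t}{2}\Im\langle z,h\rangle}\,W(z)\,\tfrac{W(th)-I}{t}\psi+\tfrac{1}{t}\bigl(e^{\frac{i\varepsilon t}{2}\Im\langle z,h\rangle}-1\bigr)W(z)\psi$, the first term converges to $W(z)\,i\Phi(h)\psi$ (boundedness of $W(z)$ together with convergence of the scalar prefactor to $1$) and the second to $\tfrac{i\varepsilon}{2}\Im\langle z,h\rangle\,W(z)\psi$; pairing with $\varphi$ gives the first equality.

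For the second equality one repeats the argument from the second factorization, which requires knowing that $W(z)$ maps $\mathcal{D}(\Phi(h))$ into itself. This again follows from the Weyl relations: they give $W(z)^*W(sh)W(z)=e^{i\varepsilon s\Im\langle z,h\rangle}W(sh)$, a scalar multiple of $W(sh)$, whence $W(z)^*\Phi(h)W(z)=\Phi(h)+\varepsilon\Im\langle z,h\rangle$ as self-adjoint operators with domain $W(z)^*\mathcal{D}(\Phi(h))=\mathcal{D}(\Phi(h))$. The same conjugation formula shows directly that $W(z)\bigl[i\Phi(h)+\tfrac{i\varepsilon}{2}\Im\langle z,h\rangle\bigr]=\bigl[i\Phi(h)-\tfrac{i\varepsilon}{2}\Im\langle z,h\rangle\bigr]W(z)$ on $\mathcal{D}(\Phi(h))$, so in fact it suffices to establish one of the two identities. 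I expect the only genuinely delicate point to be this domain stability of the Weyl operators, equivalently the fact that conjugation by $W(z)$ shifts $\Phi(h)$ only by the scalar $\varepsilon\Im\langle z,h\rangle$; everything else is soft functional analysis, and in particular no bound on $N$ or on its higher moments is needed.
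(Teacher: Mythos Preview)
Your proof is correct and follows essentially the same route as the paper: factor $W(z+th)$ via the Weyl relations as $e^{\frac{i\varepsilon t}{2}\Im\langle z,h\rangle}W(z)W(th)$, split the difference quotient into a $\tfrac{W(th)-I}{t}$ piece and a scalar-exponential piece, and use the one-parameter group structure of $t\mapsto W(th)$ (Stone's theorem) to pass to the limit. The paper's proof is slightly more terse---it moves $W(z)$ to the other side as $W(-z)$ acting on $\varphi$ and simply says the second identity ``is obtained in the same way''---whereas you additionally justify the domain stability $W(z)\mathcal{D}(\Phi(h))=\mathcal{D}(\Phi(h))$ via the conjugation formula $W(z)^*\Phi(h)W(z)=\Phi(h)+\varepsilon\Im\langle z,h\rangle$ and observe that this conjugation identity already shows the equivalence of the two asserted limits. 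This extra care is welcome but not a different strategy.
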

\begin{proof}
For the first equality. The Weyl commutation relations give\begin{eqnarray*}
\frac{1}{t}\left\langle \varphi,\left[W(z+th)-W(z)\right]\psi\right\rangle  & = & \frac{1}{t}\left\langle W(-z)\varphi,\left[e^{\frac{i\varepsilon}{2}\Im\langle z,th\rangle}W(th)-I_{\mathcal{Z}}\right]\psi\right\rangle \\
 & = & \left\langle W(-z)\varphi,e^{\frac{i\varepsilon}{2}\Im\langle z,th\rangle}\frac{1}{t}(W(th)-I_{\mathcal{Z}})\psi\right\rangle \\
 &  & +\frac{1}{t}\left(e^{\frac{i\varepsilon}{2}\Im\left\langle z,th\right\rangle }-1\right)\left\langle W(-z)\varphi,\psi\right\rangle \\
 & \underset{t\to0}{\rightarrow} & \left\langle \varphi,W(z)\left[i\Phi\left(h\right)+\frac{i\varepsilon}{2}\Im\left\langle z,h\right\rangle \right]\psi\right\rangle \,.\end{eqnarray*}
The convergence of the first term is due to the continuous one parameter
group structure of~$W\left(th\right)$. The other equality is obtained
in the same way.
\end{proof}

\subsection{Action of Bogoliubov transformations on Wick symbols}

A theorem due to Shale~(see~\cite{MR0137504}) characterizes implementable
symplectomorphisms. We quote here a version of this theorem fitting
our needs.
\begin{thm}[Shale, 1962]
\label{thm:Shale}A symplectomorphism~$T$ is implementable if and
only if the $\mathbb{C}$-linear part of~$T^{*}T-Id$ is trace class.
\end{thm}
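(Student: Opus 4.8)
The plan is to prove the two implications separately, but first to restate the trace-class condition in the form in which the argument actually runs. Writing $T=L+A$ with $L$ $\mathbb{C}$-linear and $A$ $\mathbb{C}$-antilinear (Appendix~\ref{sec:symplectic-transformations}), one has $T^{*}T=L^{*}L+A^{*}A+L^{*}A+A^{*}L$, whose $\mathbb{C}$-linear part is $L^{*}L+A^{*}A$. The symplectic relations satisfied by $T$ (recalled in Appendix~\ref{sec:symplectic-transformations}) yield an identity of the shape $L^{*}L-Id=$ (a positive $\mathbb{C}$-linear operator whose trace-class-ness is equivalent to $A\in\mathcal{L}_{2}^{a}(\mathcal{Z})$), and likewise $A^{*}A$ is positive with $\|A^{*}A\|_{\mathcal{L}_{1}}=\|A\|_{\mathcal{L}_{2}^{a}}^{2}$. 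Hence the $\mathbb{C}$-linear part of $T^{*}T-Id$ is a sum of two positive operators, each trace class if and only if $A$ is Hilbert-Schmidt, so the statement is equivalent to: $T$ is implementable $\iff$ its antilinear part $A$ is Hilbert-Schmidt. This reduction is pure linear algebra on $\mathcal{Z}$.

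For necessity, suppose $U$ is unitary with $U^{*}W(\xi)U=W(T\xi)$ for all $\xi$. Differentiating in $\xi$ gives $U^{*}\Phi(\xi)U=\Phi(T\xi)$ on a suitable domain, and separating into annihilation and creation parts yields $U^{*}a(\xi)U=a(L'\xi)+a^{*}(A'\xi)$ with $L'$ $\mathbb{C}$-linear and $A'$ $\mathbb{C}$-antilinear, built linearly from $L,A$, and $A'$ Hilbert-Schmidt iff $A$ is. Since $a(\xi)\Omega=0$, the vacuum $U\Omega$ is annihilated by the conjugated family $\{Ua(\xi)U^{*}\}_{\xi}=\{a(L''\xi)+a^{*}(A''\xi)\}_{\xi}$; assuming first that $L''$ is boundedly invertible (the general case being reduced to it by the polar-type decomposition of real-linear symplectomorphisms of Appendix~\ref{sec:symplectic-transformations}, or by approximation), this forces $a(\eta)U\Omega=-a^{*}(A''L''^{-1}\eta)U\Omega$, which by the canonical commutation relations identifies $U\Omega$ with the squeezed vacuum $\omega_{T}\propto\exp\!\bigl(\tfrac{1}{2}a^{*}(K)a^{*}\bigr)\Omega$, where $K\in\mathcal{Z}^{\vee2}$ encodes the symmetric operator $-A''L''^{-1}$. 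Its norm is $\prod_{j}(1-\kappa_{j}^{2})^{-1/2}$ over the singular values $(\kappa_{j})$ of the operator associated to $K$, which is finite precisely when $\sum_{j}\kappa_{j}^{2}<\infty$; since $U$ is unitary, $U\Omega$ is a genuine unit vector, so $K$, hence $A$, must be Hilbert-Schmidt.

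For sufficiency, one constructs $U$ explicitly. I would split $T$ into a $\mathbb{C}$-linear unitary part $V$ on $\mathcal{Z}$, implemented by the second quantization $\Gamma(V)$, and a "symmetric" symplectomorphism, which by functional calculus on $\mathcal{Z}\oplus\overline{\mathcal{Z}}$ is the exponential $e^{G}$ of a real quadratic generator whose antilinear block is Hilbert-Schmidt; that generator is Wick-quantized into a quadratic expression $Q_{t}\left(z\right)=\Im\left\langle \beta,z^{\vee2}\right\rangle$ with $\beta\in\mathcal{Z}^{\vee2}$, which is essentially self-adjoint by Lemma~\ref{lem:QWick-autoadjoint}, so $e^{iQ^{Wick}/\varepsilon}$ is a bona fide unitary (this is exactly where the Hilbert-Schmidt hypothesis is used). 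One then sets $U:=e^{iQ^{Wick}/\varepsilon}\Gamma(V)$, checks $U\Omega=e^{iQ^{Wick}/\varepsilon}\Omega$ is normalizable (automatic) and that $U$ intertwines the Weyl operators as required, using the CCR and the fact that conjugation by $e^{iQ^{Wick}/\varepsilon}$ acts on $\Phi$ by the prescribed real-linear symplectic matrix (cf.\ the computation in Proposition~\ref{pro:bogoliubov-implementation}). Equivalently one can bypass the splitting and define $U$ on the total set $\{W(\xi)\Omega\}_{\xi}$ by $UW(\xi)\Omega:=c\,W(T\xi)\omega_{T}$, then verify isometry with dense range from the overlap identity $\langle\omega_{T},W(T\xi)^{*}W(T\eta)\omega_{T}\rangle=\langle\Omega,W(\xi)^{*}W(\eta)\Omega\rangle$, whose validity is again the Hilbert-Schmidt normalizability computation.

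The main obstacle is the sufficiency direction: showing that the formal quadratic generator truly exponentiates to a unitary on Fock space, and handling symplectomorphisms for which $L$ is not boundedly invertible — both resolved by the Hilbert-Schmidt condition together with the structure theory of Appendix~\ref{sec:symplectic-transformations}, the analytic heart being the finiteness of the squeezed-vacuum norm. For the purposes of this paper the theorem is only invoked as a black box: in our situation the implementer $U(t,0)$ already exists by Section~\ref{sec:Existence-quantum-evolution} and its underlying symplectomorphism is identified in Proposition~\ref{pro:bogoliubov-implementation}, so only the equivalence of the two conditions above and the fact that the antilinear part is Hilbert-Schmidt — guaranteed by \textbf{H2} and the construction of the classical flow — are actually needed downstream.
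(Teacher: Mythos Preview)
The paper does not prove this statement: it is quoted as a classical result due to Shale~\cite{MR0137504} and used as a black box (the text says explicitly ``We quote here a version of this theorem fitting our needs''). There is therefore nothing to compare your argument against.

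Your reduction of the trace-class condition to the Hilbert--Schmidt condition on $A$ is correct and worth keeping: with $T=L+A$ the $\mathbb{C}$-linear part of $T^{*}T-Id$ is $L^{*}L+A^{*}A-Id$, and the symplectic relation $L^{*}L-A^{*}A=Id$ from Proposition~\ref{pro:caracterisations_symplectique} turns this into $2A^{*}A$, whence the equivalence with $A\in\mathcal{L}_{2}^{a}(\mathcal{Z})$. The remainder of your proposal is a reasonable sketch of the standard proof of Shale's theorem (squeezed-vacuum normalizability for necessity, exponentiation of a quadratic generator or direct construction on Weyl vectors for sufficiency), but it is only a sketch: the domain and convergence issues in the necessity direction, and the verification that your candidate $U$ is isometric with dense range in the sufficiency direction, would each require substantially more detail to be complete. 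Since the paper intends the theorem as a citation rather than a proved result, this level of detail is not called for here; your final paragraph correctly identifies that only the equivalence $A\in\mathcal{L}_{2}^{a}(\mathcal{Z})$ is actually used downstream, and that the implementer is supplied independently by Section~\ref{sec:Existence-quantum-evolution} and Proposition~\ref{pro:bogoliubov-implementation}.
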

We can now quote the main result of this part.
\begin{thm}
\label{thm:fomula2-for-symplecto}Let~$T=L+A$ with~$L$ $\mathbb{C}$-linear
and~$A$ $\mathbb{C}$-antilinear, be an implementable symplectomorphism
with a Bogoliubov implementer~$U$ preserving~$\mathcal{D}(\left\langle N\right\rangle ^{k/2})$
for any integer~$k$$\geq2$, then for any polynomial~$b$ in~$\mathcal{P}_{\leq m}\left(\mathcal{Z}\right)$
with~$m\geq2$, \begin{equation}
U^{*}b^{Wick}U=\left(e^{\frac{\varepsilon}{2}\Lambda\left[T\right]}\left[b\left(T^{*}\cdot\right)\right]\right)^{Wick}\label{eq:formula2_for_symplecto}\end{equation}
as an equality of continuous operators from~$\mathcal{D}(\left\langle N\right\rangle ^{m/2})$
to~$\mathcal{H}$, with~$\left\langle N\right\rangle =(N^{2}+1)^{1/2}$,
where
\begin{itemize}
\item the exponential is a finite expansion whose rank depends on the degree
of the polynomial~$b$,
\item the operator~$\Lambda\left[T\right]$ is defined on any polynomial~c
by \[
\Lambda\left[T\right]c\left(z\right)=\mbox{Tr}\left[-2AA^{*}\partial_{\bar{z}}\partial_{z}c\left(z\right)\right]+\left\langle v\right|.\partial_{\bar{z}}^{2}c\left(z\right)+\partial_{z}^{2}c\left(z\right).\left|v\right\rangle \]
with~$v\in\bigotimes^{2}\mathcal{Z}$ the vector such that for all~$z_{1},\, z_{2}\in\mathcal{Z}$,
$\left\langle z_{1}\otimes z_{2},v\right\rangle =\left\langle z_{1},LA^{*}z_{2}\right\rangle $.
\end{itemize}
\end{thm}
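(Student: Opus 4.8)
\textbf{Proof plan for Theorem~\ref{thm:fomula2-for-symplecto}.} The plan is to reduce the infinite-dimensional statement to the known finite-dimensional Weyl-calculus identity by a dimensional approximation, using the Shale condition to control the antilinear part. First I would recall the exact relationship between Wick and Weyl quantizations: for a polynomial symbol $b$, $b^{Wick}$ differs from the Weyl quantization $b^{Weyl}$ of the same symbol by lower-order terms obtained by contractions, with an explicit formula (as in~\cite{MR1186643} or~\cite{MR2465733}). Since a Bogoliubov implementer $U$ acts on Weyl operators by $U^{*}W(\xi)U=W(T\xi)$, conjugation of $b^{Weyl}$ by $U$ corresponds, at the level of symbols, to the linear change of variables $z\mapsto T^{*}z$ composed with a Gaussian-type convolution whose covariance encodes the failure of $T$ to be $\mathbb{C}$-linear, i.e. the antilinear part $A$. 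This is the content of the finite-dimensional Mehler/metaplectic formula, and the operator $\exp(\tfrac{\varepsilon}{2}\Lambda[T])$ is precisely the polynomial reduction of that Gaussian convolution, truncated because $b$ has finite degree.

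The key steps, in order: (1) Choose an increasing sequence of finite-dimensional $\mathbb{C}$-subspaces $\mathcal{Z}_{n}\subset\mathcal{Z}$ whose union is dense, with orthogonal projections $P_{n}$; form the approximations $T_{n}=P_{n}TP_{n}$ (adjusting slightly so that $T_{n}$ is again a symplectomorphism on $\mathcal{Z}_{n}$ and $L_{n}\to L$, $A_{n}\to A$ in the appropriate norms — here the Hilbert--Schmidt convergence $A_{n}\to A$ in $\mathcal{L}_{2}^{a}$, guaranteed by the Shale condition of Theorem~\ref{thm:Shale}, is what makes $\mathrm{Tr}[A_{n}A_{n}^{*}\,\cdot\,]\to\mathrm{Tr}[AA^{*}\,\cdot\,]$ and $v_{n}\to v$). (2) On the finite-dimensional Fock space $\mathcal{H}(\mathcal{Z}_{n})$, invoke the explicit metaplectic/Mehler formula to establish $U_{n}^{*}b^{Wick}U_{n}=(e^{\frac{\varepsilon}{2}\Lambda[T_{n}]}[b(T_{n}^{*}\cdot)])^{Wick}$; this is where one translates between Weyl and Wick via the explicit contraction formula and identifies the Gaussian-convolution generator with $\Lambda[T_{n}]$ acting on the finite-degree symbol. (3) Pass to the limit $n\to\infty$: the right-hand side converges in $\|\cdot\|_{\mathcal{P}_{\leq m}(\mathcal{Z})}$ by Proposition~\ref{pro:estimate_comp_classical_flow} together with the trace-norm/Hilbert--Schmidt convergence of the coefficients, and by the Wick-calculus estimates (in the spirit of Lemma~\ref{lem:estimation-QWick-Q-quadratique}, giving $\mathcal{D}(\langle N\rangle^{m/2})\to\mathcal{H}$ continuity) this implies convergence of the Wick operators in $\mathcal{L}(\mathcal{D}(\langle N\rangle^{m/2}),\mathcal{H})$; on the left-hand side $U_{n}\to U$ strongly on the relevant domain (this is the delicate point and should be arranged by a strong-resolvent or direct-implementer argument, using that both sides are controlled uniformly on $\mathcal{D}(\langle N\rangle^{m/2})$).

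The main obstacle I anticipate is step (3), specifically making the finite-dimensional implementers $U_{n}$ converge to the given implementer $U$ in a topology strong enough to pass the operator identity to the limit while only having convergence of symbols in the $\mathcal{P}_{\leq m}$ norm. The implementer of a symplectomorphism is unique only up to a phase, so one must fix phases consistently (e.g. by normalizing against the vacuum or by realizing $U_{n}$, $U$ as exponentials of quadratic Wick Hamiltonians associated with a common interpolating flow, as in Proposition~\ref{pro:bogoliubov-implementation}); and one must verify that the domains $\mathcal{D}(\langle N\rangle^{k/2})$ are preserved uniformly in $n$. An alternative route that sidesteps the implementer-convergence issue is to test the identity~(\ref{eq:formula2_for_symplecto}) against the total set $\{W(\xi)\varphi\}$ spanning $\mathcal{H}_{0}$: using $U^{*}W(\xi)U=W(T\xi)$ and the conjugation rule for $b^{Wick}$ under Weyl operators (Proposition~\ref{pro:Wick-calculus}, last item), one reduces to computing $(b(T^{*}z+\,\cdot\,))^{Wick}$ on the vacuum sector, which again reduces to the finite-dimensional Gaussian computation but now without needing any limit of unitaries — only a limit of scalar Gaussian moments, controlled directly by the Hilbert--Schmidt norm of $A$. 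I would pursue this second route if the first proves technically heavy.
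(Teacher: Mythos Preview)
Your overall strategy---reduce to finite dimension via the Weyl--Wick relation, then take a dimensional limit controlled by the Hilbert--Schmidt condition on $A$---matches the paper's. The execution, however, differs in one decisive technical choice, and your version contains a gap.

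The paper avoids your ``main obstacle'' (convergence of the finite-dimensional implementers $U_n$ to $U$) altogether by first applying the polar-type decomposition of Theorem~\ref{thm:decomp_symplecto}: any implementable $T$ factors as $T=u\,e^{c\rho}$ with $u$ unitary, $c$ a conjugation and $\rho\ge 0$ Hilbert--Schmidt commuting with $c$. Choosing the Hilbert basis that \emph{diagonalizes} $\rho$, the truncations $\hat T_L=e^{c\rho\pi_L}$ are genuine symplectomorphisms of $\mathcal{Z}_L$, and the implementer $e^{-iQ^{Wick}/\varepsilon}$ restricts \emph{exactly} to $e^{-iQ_L^{Wick}/\varepsilon}$ on $\Gamma_s(\mathcal{Z}_L)\otimes\{\Omega^{\mathcal{Z}_L^\perp}\}$. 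Thus no limit of unitaries is needed: one already has the identity on the dense set $\bigcup_L\Gamma_s(\mathcal{Z}_L)$, and the unitary factor $u$ is absorbed at the end via $\Gamma(u)$. Your generic truncation $T_n=P_nTP_n$ is typically not symplectic on $\mathcal{Z}_n$, and the ``adjusting slightly'' is exactly where the paper's polar decomposition does the work.

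There is also a genuine gap in your step~(3): you claim the right-hand side converges in $\|\cdot\|_{\mathcal{P}_{\leq m}(\mathcal{Z})}$. But $b_K=b\circ\pi_K$ does \emph{not} in general converge to $b$ in that norm (since $\pi_K\to I$ only strongly), and Proposition~\ref{pro:estimate_comp_classical_flow} gives a bound, not convergence. The paper handles this by proving only \emph{weak} convergence $\widetilde{b_K}\rightharpoonup\tilde b$ in $\mathcal{L}^\vee_{\leq m}$, then showing that weak convergence is preserved by composition with $T^*$ and by $\Lambda[T]$ (the latter because $AA^*$ is trace class and $v\in\mathcal{Z}^{\otimes 2}$), and finally that weak convergence of symbols implies weak convergence of $(\cdot)^{Wick}\langle N\rangle^{-m/2}$. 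Your alternative route via testing against $W(\xi)\varphi$ is reasonable in spirit, but as written it still reduces to a symbol-level limit that requires the same weak-convergence machinery rather than norm convergence.
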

In order to prove this result, we use intermediate steps.
\begin{enumerate}
\item We prove that~$U^{*}b^{Weyl}U=b\left(T^{*}\cdot\right)^{Weyl}$ in
finite dimension.
\item We use the Fourier transform and the formula\[
b^{Weyl}=\frac{1}{\left(\pi\varepsilon/2\right)^{d}}\left(b*e^{-\frac{\left|z\right|^{2}}{\varepsilon/2}}\right)^{Wick}\]
 to get the result in finite dimension.
\item We extend the result to infinite dimension.
\end{enumerate}

\subsubsection{Action of Bogoliubov transformations on Weyl quantizations of polynomials
in finite dimension}

\begin{defn}
In a finite-dimensional Hilbert space~$\mathcal{Z}$ identified with~$\mathbb{C}^{d}$,
the \emph{symplectic Fourier transform} is defined by \[
\mathcal{F}^{\sigma}\left[f\right]\left(z\right)=\int_{\mathcal{Z}}e^{2\pi i\sigma\left(z,z'\right)}f\left(z'\right)L\left(dz'\right)\]
where~$L$ denotes the Lebesgue measure, and~$f$ is any Schwartz
tempered distribution. We associate with each polynomial~$b\in\mathcal{P}_{p,q}\left(\mathcal{Z}\right)$
a Weyl observable by \begin{equation}
b^{Weyl}=\int_{\mathcal{Z}}\mathcal{F}^{\sigma}\left[b\right]\left(z\right)W\left(-i\sqrt{2}\pi z\right)L\left(dz\right)\,.\label{eq:def-quantif-Weyl}\end{equation}
This formula has a meaning as an equality of quadratic forms on~$\mathcal{S}\left(\mathcal{Z}\right)$
since for any~$\Phi$, $\Psi$ in~$\mathcal{S}\left(\mathcal{Z}\right)$,
$z\mapsto\left\langle \Phi,W(-i\sqrt{2}\pi z)\Psi\right\rangle $
and its derivative are continuous bounded functions and~$\mathcal{F}^{\sigma}\left[b\right]$
is made of derivatives of the delta function.
\end{defn}

\begin{prop}
\label{pro:formula-Weyl-finite-dim}Let~$b\in\mathcal{P}_{\leq m}\left(\mathcal{Z}\right)$
with~$m\geq2$ be a polynomial on a finite-dimensional Hilbert space~$\mathcal{Z}$.
Let~$T$ be an implementable symplectomorphism with implementation~$U$
preserving the domain~$\mathcal{D}(\left\langle N\right\rangle ^{m/2})$.
Then \[
U^{*}b^{Weyl}U=b\left(T^{*}\cdot\right)^{Weyl}\]
as a continuous operator from~$\mathcal{D}(\left\langle N\right\rangle ^{m/2})$
to~$\mathcal{H}$.\end{prop}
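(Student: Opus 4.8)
The plan is to establish the Weyl-conjugation identity $U^{*}b^{Weyl}U=b(T^{*}\cdot)^{Weyl}$ directly from the defining formula~\eqref{eq:def-quantif-Weyl} together with the Bogoliubov property $U^{*}W(\xi)U=W(T\xi)$. First I would substitute the definition of $b^{Weyl}$ into $U^{*}b^{Weyl}U$, pushing $U^{*}$ and $U$ through the integral (legitimate as an identity of quadratic forms on $\mathcal{S}(\mathcal{Z})$, since the integrand is a finite combination of derivatives of the delta function paired against the smooth bounded function $z\mapsto\langle\Phi,W(-i\sqrt2\pi z)\Psi\rangle$ and its derivatives, and $U,U^{*}$ preserve $\mathcal{D}(\langle N\rangle^{m/2})\supset\mathcal{S}(\mathcal{Z})$). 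This yields
\[
U^{*}b^{Weyl}U=\int_{\mathcal{Z}}\mathcal{F}^{\sigma}[b](z)\,W\bigl(-i\sqrt2\pi\,Tz\bigr)\,L(dz)\,,
\]
using $U^{*}W(-i\sqrt2\pi z)U=W(-i\sqrt2\pi Tz)$ (here one must check that $T$ commutes with multiplication by $-i$, i.e. that $T$ is $\mathbb{C}$-real-linear and the scalar $-i\sqrt2\pi$ can be pulled inside; since $T=L+A$ with $L$ $\mathbb{C}$-linear and $A$ $\mathbb{C}$-antilinear, $T(-i\sqrt2\pi z)=-i\sqrt2\pi Lz+i\sqrt2\pi Az$, whereas we want $-i\sqrt2\pi Tz=-i\sqrt2\pi Lz-i\sqrt2\pi Az$; so in fact the correct bookkeeping is through the $\mathbb R$-linear identification and the Weyl relations $U^{*}W(\eta)U=W(T\eta)$ applied with $\eta=-i\sqrt2\pi z$, giving $W(T(-i\sqrt2\pi z))$, and one then changes variables accordingly).

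The second step is the change of variables $z'=T^{\sharp}z$ for the appropriate real-linear symplectomorphism $T^{\sharp}$ (the one such that $T\eta$ with $\eta=-i\sqrt2\pi z$ equals $-i\sqrt2\pi T^{\sharp}z$ — concretely $T^{\sharp}=-i\,T\,i$ in the notation of the paper, which is again an $\mathbb R$-linear symplectomorphism since conjugation by $i$ preserves $\sigma$). Because $T^{\sharp}$ is symplectic, $|\det T^{\sharp}|=1$ with respect to Lebesgue measure, so the change of variables introduces no Jacobian factor. Combined with the elementary intertwining property of the symplectic Fourier transform under a linear symplectomorphism, $\mathcal{F}^{\sigma}[b\circ S]=\mathcal{F}^{\sigma}[b]\circ S^{-T}$ where $S^{-T}$ is the inverse transpose with respect to $\sigma$ — and the key point that for a symplectic $S$ one has $S^{-T}=S$ in the appropriate sense, or more precisely $\sigma(Sz,z')=\sigma(z,S^{-1}z')$ makes the exponential transform cleanly — I obtain
\[
\int_{\mathcal{Z}}\mathcal{F}^{\sigma}[b](z)\,W\bigl(-i\sqrt2\pi\,T^{\sharp}z\bigr)\,L(dz)=\int_{\mathcal{Z}}\mathcal{F}^{\sigma}\bigl[b\circ (T^{\sharp})^{-1}\bigr](z')\,W\bigl(-i\sqrt2\pi z'\bigr)\,L(dz')\,,
\]
which is precisely $\bigl(b\circ(T^{\sharp})^{-1}\bigr)^{Weyl}$. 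It then remains to identify $b\circ(T^{\sharp})^{-1}$ with $b(T^{*}\cdot)$: this is a matter of unwinding the relation between $T$, $T^{\sharp}=-iTi$, and the adjoint/inverse with respect to the complex and symplectic structures, using that $T$ is symplectic so $T^{-1}=$ (symplectic adjoint) and that the $\mathbb C$-antilinear part transposes in the expected way — the formulae $L=\tfrac12(T-iTi)$, $A=\tfrac12(T+iTi)$ from the Remark after the definition of $\mathcal{X}(\mathcal{Z})$ make this bookkeeping routine.

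\textbf{Main obstacle.} I expect the principal difficulty to be precisely the careful tracking of the interplay between the complex scalar $-i\sqrt2\pi$, the real-linear symplectomorphism $T$, and its $\mathbb C$-linear/$\mathbb C$-antilinear decomposition: the Bogoliubov relation is naturally stated as $U^{*}W(\xi)U=W(T\xi)$ for $\xi\in\mathcal{Z}$ viewed as a real symplectic space, but the Weyl-quantization formula~\eqref{eq:def-quantif-Weyl} inserts the complex factor $-i\sqrt2\pi$, so one must be scrupulous about whether $T$ acts before or after multiplication by $-i$, and this is exactly the source of the $T$ versus $T^{\sharp}=-iTi$ discrepancy. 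Getting the change of variables, the absence of a Jacobian (via symplecticity), and the symplectic-Fourier intertwining all to line up with the correct operator $T^{*}$ (rather than $T$, $T^{-1}$, or $T^{\sharp}$) on the classical symbol side is the heart of the argument; once that is pinned down, every individual step is a short computation, and continuity from $\mathcal{D}(\langle N\rangle^{m/2})$ to $\mathcal{H}$ follows from the finite-dimensional boundedness of Weyl quantizations of polynomials relative to powers of the number operator together with the assumed invariance of $\mathcal{D}(\langle N\rangle^{m/2})$ under $U$ and $U^{*}$.
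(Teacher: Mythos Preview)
Your proposal is correct and follows essentially the same route as the paper: push $U^{*},U$ through the defining integral~\eqref{eq:def-quantif-Weyl}, use the Bogoliubov relation, change variables by the symplectomorphism (unit Jacobian), and invoke the intertwining property of the symplectic Fourier transform. The ``main obstacle'' you worry about dissolves once you note the identity $Ti=i(T^{*})^{-1}$ (equivalently $(T^{\sharp})^{-1}=(-iTi)^{-1}=T^{*}$), which is an immediate consequence of the characterization in Proposition~\ref{pro:caracterisations_symplectique}; the paper uses exactly this identity to go from $W(-\sqrt{2}\pi\,Tiz)$ to $W(-i\sqrt{2}\pi\,(T^{*})^{-1}z)$ and then performs the change of variables $z\mapsto T^{*}z$.
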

\begin{proof}
We compute, in the sense of quadratic forms on~$\mathcal{S}\left(\mathcal{Z}\right)$,\begin{eqnarray*}
U^{*}b^{Weyl}U & = & \int\mathcal{F}^{\sigma}\left[b\right]\left(z\right)W\left(-\sqrt{2}\pi Tiz\right)L\left(dz\right)\\
 & = & \int\mathcal{F}^{\sigma}\left[b\right]\left(T^{*}z\right)W\left(-i\sqrt{2}\pi z\right)L\left(dz\right)\\
 & = & \int\mathcal{F}^{\sigma}\left[b\left(T^{*}\cdot\right)\right]\left(z\right)W\left(-i\sqrt{2}\pi z\right)L\left(dz\right)\\
 & = & b\left(T^{*}\cdot\right)^{Weyl}\end{eqnarray*}
where we made use of the relation~$Ti=i\left(T^{*}\right)^{-1}$,
the volume preservation of~$T^{*}$ in~$\mathcal{Z}$ seen as a
$\mathbb{R}$-vector space and the property of composition of a symplectic
Fourier transform by a symplectomorphism (see Appendix~\ref{sec:Symplectic-Fourier-transform}).
The boundedness from~$\mathcal{D}(\left\langle N\right\rangle ^{m/2})$
to~$\mathcal{H}$ is deduced from the facts that the Fourier transform
of~$b$ involves only derivatives of the delta function of order
smaller or equal to~$m$ and that a derivation of the Weyl operator
gives at worse a field factor which is controlled by~$\left\langle N\right\rangle ^{1/2}$.
\end{proof}

\subsubsection{Action of Bogoliubov transformations on Wick quantization of polynomials
in finite dimension}

\begin{prop}
\label{pro:formula-Wick-finite-dim}Let~$b\in\mathcal{P}_{\leq m}\left(\mathcal{Z}\right)$
with~$m\geq2$ be a polynomial on a finite-dimensional Hilbert space~$\mathcal{Z}$.
Let~$T$ be an implementable symplectomorphism with implementation~$U$preserving
the domain~$\mathcal{D}(\left\langle N\right\rangle ^{m/2})$. Then
\begin{equation}
U^{*}b^{Wick}U=\left(e^{\frac{\varepsilon}{2}\Lambda\left[T\right]}\left[b\left(T^{*}\cdot\right)\right]\right)^{Wick}\,,\label{eq:finite_dim_formula}\end{equation}
as a continuous operator from~$\mathcal{D}(\left\langle N\right\rangle ^{m/2})$
to~$\mathcal{H}$, where~$\Lambda\left[T\right]$ is defined as
in Theorem~\ref{thm:fomula2-for-symplecto}.\end{prop}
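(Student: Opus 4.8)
The plan is to deduce the statement from its Weyl counterpart, Proposition~\ref{pro:formula-Weyl-finite-dim}, by transiting between the two quantizations. Recall that in finite dimension
\[
b^{Weyl}=\left(e^{\frac{\varepsilon}{2}\mathcal{D}_{0}}b\right)^{Wick}\,,\qquad\mathcal{D}_{0}c(z):=\Tr\!\left[\partial_{\bar z}\partial_{z}c(z)\right]\,,
\]
which is merely a rewriting of the Gaussian convolution identity $b^{Weyl}=\frac{1}{(\pi\varepsilon/2)^{d}}\bigl(b*e^{-|z|^{2}/(\varepsilon/2)}\bigr)^{Wick}$ recalled above: Taylor-expand $b$ under the integral, the even moments of the Gaussian producing the powers of $\mathcal{D}_{0}=\tfrac14\Delta_{\mathbb{R}^{2d}}$. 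On $\mathcal{P}_{\leq m}(\mathcal{Z})$ the maps $c\mapsto e^{\pm\frac{\varepsilon}{2}\mathcal{D}_{0}}c$ are well-defined (finite) unipotent bijections, so $b^{Wick}=\bigl(e^{-\frac{\varepsilon}{2}\mathcal{D}_{0}}b\bigr)^{Weyl}$. Applying Proposition~\ref{pro:formula-Weyl-finite-dim} to the polynomial $e^{-\frac{\varepsilon}{2}\mathcal{D}_{0}}b\in\mathcal{P}_{\leq m}(\mathcal{Z})$ and converting back to a Wick symbol gives
\[
U^{*}b^{Wick}U=\Bigl(\bigl(e^{-\frac{\varepsilon}{2}\mathcal{D}_{0}}b\bigr)\circ T^{*}\Bigr)^{Weyl}=\Bigl(e^{\frac{\varepsilon}{2}\mathcal{D}_{0}}\bigl[(e^{-\frac{\varepsilon}{2}\mathcal{D}_{0}}b)\circ T^{*}\bigr]\Bigr)^{Wick}\,,
\]
where $(e^{-\frac{\varepsilon}{2}\mathcal{D}_{0}}b)\circ T^{*}\in\mathcal{P}_{\leq m}(\mathcal{Z})$ by Proposition~\ref{pro:estimate_comp_classical_flow}.

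It therefore remains to prove the symbolic identity $e^{\frac{\varepsilon}{2}\mathcal{D}_{0}}\,C_{T^{*}}\,e^{-\frac{\varepsilon}{2}\mathcal{D}_{0}}=e^{\frac{\varepsilon}{2}\Lambda[T]}\,C_{T^{*}}$ as operators on $\mathcal{P}_{\leq m}(\mathcal{Z})$, where $C_{T^{*}}c=c\circ T^{*}$. I would introduce $\mathcal{D}_{1}:=C_{T^{*}}\mathcal{D}_{0}C_{T^{*}}^{-1}$, so that $C_{T^{*}}e^{-\frac{\varepsilon}{2}\mathcal{D}_{0}}C_{T^{*}}^{-1}=e^{-\frac{\varepsilon}{2}\mathcal{D}_{1}}$. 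Both $\mathcal{D}_{0}$ and $\mathcal{D}_{1}$ are constant-coefficient second-order differential operators on $\mathcal{Z}$ viewed as a $2d$-dimensional real vector space ($\mathcal{D}_{1}$ being the Laplacian of the flat metric pulled back along the $\mathbb{R}$-linear isomorphism $(T^{*})^{-1}$), hence they commute; consequently
\[
e^{\frac{\varepsilon}{2}\mathcal{D}_{0}}\,C_{T^{*}}\,e^{-\frac{\varepsilon}{2}\mathcal{D}_{0}}=e^{\frac{\varepsilon}{2}\mathcal{D}_{0}}e^{-\frac{\varepsilon}{2}\mathcal{D}_{1}}C_{T^{*}}=e^{\frac{\varepsilon}{2}(\mathcal{D}_{0}-\mathcal{D}_{1})}C_{T^{*}}\,,
\]
and the whole proposition reduces to the operator identity $\mathcal{D}_{0}-\mathcal{D}_{1}=\Lambda[T]$ on $\mathcal{P}_{\leq m}(\mathcal{Z})$, i.e. $\mathcal{D}_{0}(c\circ T^{*})=(\mathcal{D}_{0}c)\circ T^{*}+\Lambda[T](c\circ T^{*})$ for every polynomial $c$.

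The last identity is proved by an explicit chain-rule computation, which I expect to be the main technical point. Writing $T=L+A$ with $L$ $\mathbb{C}$-linear and $A$ $\mathbb{C}$-antilinear, one has $T^{*}=L^{*}+A^{*}$ (Appendix~\ref{sec:symplectic-transformations}); the holomorphic and antiholomorphic derivatives of $c\circ T^{*}$ then split according to whether the differentiation hits the $\mathbb{C}$-linear or the $\mathbb{C}$-antilinear component of $T^{*}$. Expanding $\mathcal{D}_{0}=\Tr[\partial_{\bar z}\partial_{z}]$ of $c\circ T^{*}$ in this way, the four resulting second-derivative contributions split into a ``diagonal'' part which, after simplification with the symplectic relations relating $L$ and $A$, reassembles $(\mathcal{D}_{0}c)\circ T^{*}$, and a remaining part equal to $\Tr\bigl[-2AA^{*}\,\partial_{\bar z}\partial_{z}(c\circ T^{*})\bigr]+\langle v|.\,\partial_{\bar z}^{2}(c\circ T^{*})+\partial_{z}^{2}(c\circ T^{*}).|v\rangle=\Lambda[T](c\circ T^{*})$, with $v$ the vector of Theorem~\ref{thm:fomula2-for-symplecto}; the Example with $\mathcal{Z}=\mathbb{C}$ and $T^{*}z=z\cosh t+\bar z\sinh t$ is a convenient consistency check, and already shows that the symplectic condition is genuinely used (replacing it by an arbitrary $\mathbb{R}$-linear map breaks the identity). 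Finally the equality holds as continuous operators from $\mathcal{D}(\langle N\rangle^{m/2})$ to $\mathcal{H}$: on the left because $U$ preserves $\mathcal{D}(\langle N\rangle^{m/2})$ and $b^{Wick}$ maps $\mathcal{D}(\langle N\rangle^{m/2})$ continuously into $\mathcal{H}$, and on the right because $e^{\frac{\varepsilon}{2}\Lambda[T]}(b\circ T^{*})$ is a polynomial in $\mathcal{P}_{\leq m}(\mathcal{Z})$ (the exponential being a finite sum, each application of $\Lambda[T]$ lowering the total order by $2$).
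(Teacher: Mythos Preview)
Your proof is correct and follows essentially the same route as the paper: both reduce to Proposition~\ref{pro:formula-Weyl-finite-dim} via the Gaussian convolution relation between Wick and Weyl symbols, and both identify the resulting correction as $e^{\frac{\varepsilon}{2}\Lambda[T]}$. The only cosmetic difference is that the paper carries out the key computation on the symplectic Fourier side --- computing the quadratic form $|T^{*}z|^{2}-|z|^{2}=2\langle z,AA^{*}z\rangle+\langle v,z^{\vee2}\rangle+\langle z^{\vee2},v\rangle$ and then reading off the corresponding differential operator --- whereas you do the Fourier-dual computation directly with the chain rule for $\mathcal{D}_{0}=\Tr[\partial_{\bar z}\partial_{z}]$; these are the same calculation viewed from the two sides of the Fourier transform.
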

\begin{proof}
We search the polynomial~$c$ such that~$U^{*}b^{Wick}U=c^{Wick}$.
In finite dimension for polynomials we can use the well known deconvolution
formula \[
c^{Wick}=\left(c*\frac{1}{\left(\pi\varepsilon/2\right)^{d}}e^{\frac{\left|z\right|^{2}}{\varepsilon/2}}\right)^{Weyl}\,.\]
By Proposition~\ref{pro:formula-Weyl-finite-dim} we boil down to
search for a polynomial~$c$ such that \[
\left(b*\frac{1}{\left(\pi\varepsilon/2\right)^{d}}e^{\frac{\left|z\right|^{2}}{\varepsilon/2}}\right)\left(T^{*}\cdot\right)=c*\frac{1}{\left(\pi\varepsilon/2\right)^{d}}e^{\frac{\left|z\right|^{2}}{\varepsilon/2}}\,.\]
Using symplectic Fourier transform (see appendix~\ref{sec:Symplectic-Fourier-transform})
and its properties with respect to convolution, composition with symplectomorphisms
and Gaussians, we get\begin{eqnarray*}
\mathcal{F}^{\sigma}c & = & \left[\mathcal{F}^{\sigma}b\left(T^{*}\cdot\right)\right]\times\left[\mathcal{F}^{\sigma}\left(\frac{e^{\frac{\left|z\right|^{2}}{\varepsilon/2}}}{\left(\pi\varepsilon/2\right)^{d}}\right)\left(T^{*}\cdot\right)\right]\times\left[\mathcal{F}^{\sigma}\left(\frac{e^{-\frac{\left|z\right|^{2}}{\varepsilon/2}}}{\left(\pi\varepsilon/2\right)^{d}}\right)\right]\\
 & = & e^{\frac{\pi^{2}\varepsilon\left(\left|T^{*}\cdot\right|^{2}-\left|\cdot\right|^{2}\right)}{2}}\times\mathcal{F}^{\sigma}b\left(T^{*}\cdot\right)\,.\end{eqnarray*}
 Writting~$T=L+A$ with~$L$ the $\mathbb{C}$-linear and~$A$
the $\mathbb{C}$-antilinear part of~$T$ we obtain \begin{eqnarray*}
\left|T^{*}z\right|^{2}-\left|z\right|^{2} & = & \left\langle L^{*}z,L^{*}z\right\rangle +\left\langle A^{*}z,A^{*}z\right\rangle +\left\langle L^{*}z,A^{*}z\right\rangle +\left\langle A^{*}z,L^{*}z\right\rangle -\left\langle z,z\right\rangle \\
 & = & \left\langle z,LL^{*}z\right\rangle +\left\langle z,AA^{*}z\right\rangle +\left\langle LA^{*}z,z\right\rangle +\left\langle z,LA^{*}z\right\rangle -\left\langle z,z\right\rangle \\
 & = & \left\langle z,2AA^{*}z\right\rangle +\left\langle v,z^{\vee^{2}}\right\rangle +\left\langle z^{2},v\right\rangle \end{eqnarray*}
with~$v\in\bigotimes^{2}\mathcal{Z}$ the vector such that for all~$z_{1},\, z_{2}\in\mathcal{Z}$,
$\left\langle z_{1}\otimes z_{2},v\right\rangle =\left\langle z_{1},LA^{*}z_{2}\right\rangle $.
  By Fourier transforming again, we get \[
\pi^{2}\mathcal{F}^{\sigma}\left[\left(\left|T^{*}\cdot\right|^{2}-\left|\cdot\right|^{2}\right)\times\cdot\right]\mathcal{F}^{\sigma}c=\mbox{Tr}\left[-2AA^{*}\partial_{\bar{z}}\partial_{z}c\left(z\right)\right]+\left\langle v\right|\partial_{\bar{z}}^{2}c\left(z\right)+\partial_{z}^{2}c\left(z\right)\left|v\right\rangle \]
as the $\mathbb{C}$-linear and $\mathbb{C}$-antilinear parts behave
differently under Fourier transform (the $\mathbb{C}$-linear part
has a minus sign added, see appendix~\ref{sec:Symplectic-Fourier-transform}).
We then obtain the claimed result.
\end{proof}

\subsubsection{Extension to infinite dimension on a ``cylindrical'' class of polynomials}
\begin{thm}
Let~$\hat{T}$ be symplectomorphism of the form~$\hat{T}=e^{c\rho}$,
with~$c$ a conjugation and~$\rho$ a positive, self-adjoint, Hilbert-Schmidt
operator commuting with~$c$. Let~$\left(\xi_{j}\right)_{j\in\mathbb{N}}$
a Hilbert basis in which~$\rho$ is diagonal. Let~$\pi_{K}$ be
the orthogonal projection on the finite-dimensional space~$\mathcal{Z}_{K}=Vect(\left\{ \xi_{j}\right\} _{j\leq K})$.

Then for any polynomial~$b$ in~$\mathcal{P}_{m}\left(\mathcal{Z}\right)$
with~$m\geq2$ and any integer~$K$\[
\hat{U}^{*}b_{K}^{Wick}\hat{U}=\left(e^{\frac{\varepsilon}{2}\Lambda\left[\hat{T}\right]}\left[b_{K}\left(\hat{T}^{*}\cdot\right)\right]\right)^{Wick}\]
as continuous operators from~$\mathcal{D}(\left\langle N\right\rangle ^{\frac{m}{2}})$
to~$\mathcal{H}$ where~$b_{K}\left(z\right)=b\left(\pi_{K}z\right)$.\end{thm}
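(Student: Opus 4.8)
The plan is to reduce the infinite-dimensional statement to the finite-dimensional Proposition~\ref{pro:formula-Wick-finite-dim} by exploiting the fact that everything in sight ``lives'' on the finite-dimensional subspace~$\mathcal{Z}_K$. The first observation is that since~$\rho$ is diagonal in the basis~$\left(\xi_j\right)_j$ and commutes with the conjugation~$c$, the symplectomorphism~$\hat{T}=e^{c\rho}$ leaves each~$\mathcal{Z}_K$ invariant, and so does~$\hat{T}^*$; hence~$\hat{T}$ splits as~$\hat{T}|_{\mathcal{Z}_K}\oplus\hat{T}|_{\mathcal{Z}_K^\perp}$ where the second factor is again of the same form on~$\mathcal{Z}_K^\perp$. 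Correspondingly the Fock space factorizes as~$\mathcal{H}\simeq\mathcal{H}(\mathcal{Z}_K)\otimes\mathcal{H}(\mathcal{Z}_K^\perp)$, and the Bogoliubov implementer factorizes (up to a phase, which is irrelevant in~$\hat U^*(\cdot)\hat U$) as~$\hat U\simeq\hat U_K\otimes\hat U_K^\perp$ with~$\hat U_K$ a Bogoliubov implementer of~$\hat{T}|_{\mathcal{Z}_K}$ on~$\mathcal{H}(\mathcal{Z}_K)$. One must check that~$\hat U_K$ preserves~$\mathcal{D}(\langle N_K\rangle^{m/2})$; this follows from the estimates of Section~\ref{sec:Existence-quantum-evolution} (Lemma~\ref{lem:stability_by_U}) applied on~$\mathcal{Z}_K$, since~$\hat{T}|_{\mathcal{Z}_K}$ is itself the time-one classical flow of a quadratic Hamiltonian~$\Im\langle\beta_K,z^{\vee2}\rangle$ for an appropriate~$\beta_K\in\mathcal{Z}_K^{\vee2}$ (concretely, differentiate~$e^{sc\rho}$).

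Next I would use that~$b_K^{Wick}=\left(b\circ\pi_K\right)^{Wick}$ acts as~$\left(b|_{\mathcal{Z}_K}\right)^{Wick}\otimes I$ under the factorization~$\mathcal{H}\simeq\mathcal{H}(\mathcal{Z}_K)\otimes\mathcal{H}(\mathcal{Z}_K^\perp)$; this is immediate from the definition of Wick quantization and the fact that~$\tilde{b}_K=\left(\pi_K^{\vee q}\right)\tilde b\left(\pi_K^{\vee p}\right)$ has range and ``source'' inside the symmetric tensor powers of~$\mathcal{Z}_K$. Therefore
\[
\hat U^*b_K^{Wick}\hat U=\left(\hat U_K^*\left(b|_{\mathcal{Z}_K}\right)^{Wick}\hat U_K\right)\otimes I\,.
\]
Now Proposition~\ref{pro:formula-Wick-finite-dim} applies verbatim on the finite-dimensional space~$\mathcal{Z}_K$: it gives
\[
\hat U_K^*\left(b|_{\mathcal{Z}_K}\right)^{Wick}\hat U_K=\left(e^{\frac{\varepsilon}{2}\Lambda\left[\hat{T}|_{\mathcal{Z}_K}\right]}\left[b|_{\mathcal{Z}_K}\left(\left(\hat{T}|_{\mathcal{Z}_K}\right)^*\cdot\right)\right]\right)^{Wick}\,.
\]
It then remains to identify this with the operator appearing in the statement, namely to check that~$\Lambda\left[\hat T|_{\mathcal{Z}_K}\right]$ applied to~$b_K\left(\hat T^*\cdot\right)$ (pulled back to~$\mathcal{Z}$ via~$\pi_K$) agrees with~$\Lambda\left[\hat T\right]$ applied to the same polynomial. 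This is a matter of unwinding the definition of~$\Lambda$: writing~$\hat T=L+A$ with~$L,A$ block-diagonal with respect to~$\mathcal{Z}_K\oplus\mathcal{Z}_K^\perp$, the antilinear part of~$\hat T|_{\mathcal{Z}_K}$ is~$A\pi_K=\pi_K A$, the trace term~$\Tr\left[-2AA^*\partial_{\bar z}\partial_z c\right]$ only sees the~$\mathcal{Z}_K$-block because~$c=b_K\left(\hat T^*\cdot\right)$ is cylindrical so~$\partial_{\bar z}\partial_z c$ has range in~$\mathcal{Z}_K$, and the vector~$v$ for~$\hat T|_{\mathcal{Z}_K}$ is~$\pi_K^{\otimes2}v$, which pairs identically against~$\partial_{\bar z}^2 c\in\mathcal{Z}_K^{\vee2}$. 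Hence the two exponentials coincide on this polynomial and the formula follows.

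The main obstacle I expect is the bookkeeping around the factorization of the Bogoliubov implementer: strictly speaking the implementer of a symplectomorphism is only unique up to a phase, and one has to argue that the unitary furnished by the quantum flow construction (Section~\ref{sec:Existence-quantum-evolution}) actually respects the tensor decomposition~$\mathcal{H}\simeq\mathcal{H}(\mathcal{Z}_K)\otimes\mathcal{H}(\mathcal{Z}_K^\perp)$. The cleanest way around this is not to decompose the given~$\hat U$ at all, but to run the quantum flow for the quadratic Hamiltonian~$\Im\langle\beta_K,z^{\vee2}\rangle$ associated to~$\hat T|_{\mathcal{Z}_K}$: this flow manifestly factorizes since~$\beta_K\in\mathcal{Z}_K^{\vee2}$ makes~$Q_K^{Wick}$ act only on the~$\mathcal{H}(\mathcal{Z}_K)$ tensor factor, and by Proposition~\ref{pro:bogoliubov-implementation} (applied once on~$\mathcal{Z}$ and once on~$\mathcal{Z}_K$) both it and the given~$\hat U$ implement~$\hat T$, so they agree up to a phase, which cancels in~$\hat U^*b_K^{Wick}\hat U$. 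A secondary, purely technical point is to confirm that all the domain statements ($\mathcal{D}(\langle N\rangle^{m/2})\to\mathcal{H}$ continuity) survive the tensoring; this is routine since~$\langle N\rangle$ on~$\mathcal{H}$ dominates~$\langle N_K\rangle$ on the first factor and the operators act as the identity on the second.
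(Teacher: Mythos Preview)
Your argument is correct and rests on the same core idea as the paper's: exploit the block-diagonal structure of $\hat T=e^{c\rho}$ with respect to $\mathcal{Z}_K\oplus\mathcal{Z}_K^{\perp}$, pass to the tensor factorization $\mathcal{H}\simeq\Gamma_s(\mathcal{Z}_K)\otimes\Gamma_s(\mathcal{Z}_K^{\perp})$, and invoke the finite-dimensional Proposition~\ref{pro:formula-Wick-finite-dim} on the first factor.

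The implementation differs in one respect. You factorize the implementer itself as $\hat U\simeq\hat U_K\otimes\hat U_K^{\perp}$ (up to a phase that cancels in the conjugation), so that $\hat U^{*}b_K^{Wick}\hat U=(\hat U_K^{*}(b|_{\mathcal{Z}_K})^{Wick}\hat U_K)\otimes I$ holds directly on all of $\mathcal{H}$; the identification of $\Lambda[\hat T]$ with $\Lambda[\hat T|_{\mathcal{Z}_K}]$ on cylindrical symbols then finishes the job. The paper instead introduces an auxiliary cutoff $L\geq K$, restricts attention to the subspace $\Gamma_s(\mathcal{Z}_L)\otimes\{\Omega^{\mathcal{Z}_L^{\perp}}\}$, argues that the full implementer $e^{-iQ^{Wick}/\varepsilon}$ and the truncated one $e^{-iQ_L^{Wick}/\varepsilon}$ agree on that subspace, applies Proposition~\ref{pro:formula-Wick-finite-dim} there, takes the union over $L$, and closes by continuity from $\mathcal{D}(\langle N\rangle^{m/2})$ to $\mathcal{H}$. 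Your route trades the density step for the bookkeeping around the tensor factorization of $\hat U$, which you correctly isolate and resolve via the quantum-flow construction and the phase-uniqueness of Bogoliubov implementers; the paper's route trades that bookkeeping for the final density argument. Either way the content is the same.
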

\begin{proof}
We first remark that, with~$Q\left(z\right)=\Im\left\langle c\rho z,z\right\rangle $,
$e^{-iQ^{Wick}/\varepsilon}$ is a Bogoliubov implementer of~$\hat{T}$
as it can be seen using Proposition~\ref{pro:bogoliubov-implementation}
and the Hilbert-Schmidt property of~$\rho$. We define~$\rho_{L}=\rho\pi_{L}$,
$\hat{T}_{L}=\hat{T}\pi_{L}$ and the operator~$Q_{L}\left(z\right)^{Wick}=\Im\left\langle c\rho_{L}z,z\right\rangle ^{Wick}$.
We use the identification~$\mathcal{H}=\Gamma_{s}\left(\mathcal{Z}_{L}\right)\otimes\Gamma_{s}(\mathcal{Z}_{L}^{\bot})$
and observe that on~$\Gamma_{s}\left(\mathcal{Z}_{L}\right)\otimes\{\Omega^{\mathcal{Z}_{L}^{\bot}}\}$,
$e^{-iQ^{Wick}/\varepsilon}=e^{-iQ_{L}^{Wick}/\varepsilon}$. For~$K\leq L$
we obtain on~$\Gamma_{s}\left(\mathcal{Z}_{L}\right)\otimes\{\Omega^{\mathcal{Z}_{L}^{\bot}}\}$\[
\hat{U}_{L}^{*}b_{K}^{Wick}\hat{U}_{L}=\left(e^{\frac{\varepsilon}{2}\Lambda\left[\hat{T}_{L}\right]}\left[b_{K}\left(\hat{T}_{L}^{*}\cdot\right)\right]\right)^{Wick}\]
by Proposition~\ref{pro:formula-Wick-finite-dim}, with $\hat{U}_{L}=e^{-iQ_{L}^{Wick}/\varepsilon}$.
But on this domain it is the same as

\[
\hat{U}^{*}b_{K}^{Wick}\hat{U}=\left(e^{\frac{\varepsilon}{2}\Lambda\left[\hat{T}\right]}\left[b_{K}\left(\hat{T}^{*}\cdot\right)\right]\right)^{Wick}\]
with~$\hat{U}=e^{-iQ^{Wick}/\varepsilon}$. We thus get an equality
on~$\cup_{L}\Gamma_{s}\left(\mathcal{Z}_{L}\right)$, and by continuity
of the involved operators from~$\mathcal{D}(\left\langle N\right\rangle ^{\frac{m}{2}})$
to~$\mathcal{H}$ we get the expected result.
\end{proof}

We will first show that Formula~(\ref{eq:formula2_for_symplecto})
apply in particular to a well chosen class of cylindrical polynomials,
and then extend it by density to every polynomial.

\subsubsection{Extension to general polynomials}

We split the proof of Formula~(\ref{eq:formula2_for_symplecto})
for general polynomials into several lemmata and propositions.
\begin{lem}
\label{lem:CVfaible_opdimfinie}Let~$\left(\xi_{j}\right)_{j\in\mathbb{N}}$
be a Hilbert basis of~$\mathcal{Z}$,~$\pi_{m}$ be the orthogonal
projector on~$\mathcal{Z}_{m}=Vect(\left\{ \xi_{j}\right\} _{j\leq m})$.
Let~$b$ be a polynomial in~$\mathcal{P}_{p,q}\left(\mathcal{Z}\right)$
and define~$b_{K}=b\left(\pi_{K}\cdot\right)$. Then~$(\widetilde{b_{K}})_{K\in\mathbb{N}}$
is bounded and \[
\tilde{b}=w-\lim_{j\to\infty}\widetilde{b_{K}}\,.\]

\end{lem}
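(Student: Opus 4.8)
The plan is to identify the symbol $\widetilde{b_K}$ explicitly and then reduce the claim to the strong convergence $\pi_K\to I_{\mathcal{Z}}$.

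First I would use that $\pi_K$ acts in the same way on each tensor factor, so it commutes with the symmetrizations $\mathcal{S}_p$ and $\mathcal{S}_q$; consequently $(\pi_K z)^{\vee p}=\pi_K^{\vee p}\,z^{\vee p}$, where $\pi_K^{\vee p}=\pi_K\vee\cdots\vee\pi_K$ restricts on $\bigvee^p\mathcal{Z}$ to $\pi_K^{\otimes p}$ and is the orthogonal projector onto $\bigvee^p\mathcal{Z}_K$. Therefore
\[
b_K(z)=b(\pi_K z)=\left\langle (\pi_K z)^{\vee q},\tilde b\,(\pi_K z)^{\vee p}\right\rangle=\left\langle z^{\vee q},\pi_K^{\vee q}\,\tilde b\,\pi_K^{\vee p}\,z^{\vee p}\right\rangle ,
\]
using that $\pi_K^{\vee q}$ is self-adjoint. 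Since the set $\{z^{\vee p}:z\in\mathcal{Z}\}$ spans a dense subspace of $\bigvee^p\mathcal{Z}$ (polarization identity), the symbol of a monomial in $\mathcal{P}_{p,q}(\mathcal{Z})$ is unique, whence $\widetilde{b_K}=\pi_K^{\vee q}\,\tilde b\,\pi_K^{\vee p}$. Boundedness is then immediate, since $\|\pi_K^{\vee p}\|\le 1$ and $\|\pi_K^{\vee q}\|\le 1$ give $\|\widetilde{b_K}\|_{\mathcal{L}(\bigvee^p\mathcal{Z},\bigvee^q\mathcal{Z})}\le\|\tilde b\|$ for every $K$.

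For the convergence I would argue as follows. Since $(\xi_j)_j$ is a Hilbert basis, $\pi_K\to I_{\mathcal{Z}}$ strongly; hence $\pi_K^{\otimes p}\to I$ strongly on $\bigotimes^p\mathcal{Z}$ (convergence on elementary tensors, together with the uniform bound $\|\pi_K^{\otimes p}\|\le 1$ and density of the algebraic tensor product), and, restricting to the symmetric subspace, $\pi_K^{\vee p}\to I_{\bigvee^p\mathcal{Z}}$ strongly, and likewise $\pi_K^{\vee q}\to I_{\bigvee^q\mathcal{Z}}$ strongly. Then for $\phi\in\bigvee^p\mathcal{Z}$ and $\psi\in\bigvee^q\mathcal{Z}$ one has $\left\langle\psi,\widetilde{b_K}\phi\right\rangle=\left\langle\pi_K^{\vee q}\psi,\tilde b\,\pi_K^{\vee p}\phi\right\rangle$, and since $\pi_K^{\vee q}\psi\to\psi$ and $\tilde b\,\pi_K^{\vee p}\phi\to\tilde b\phi$ strongly (the latter by continuity of $\tilde b$), the scalar product converges to $\left\langle\psi,\tilde b\phi\right\rangle$. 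This is precisely the weak-operator convergence $\widetilde{b_K}\to\tilde b$ claimed in the statement.

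The argument is essentially routine; the only points that require a little care are the commutation of $\pi_K$ with the symmetrizations (so that $\pi_K^{\vee p}$ is genuinely the orthogonal projector onto $\bigvee^p\mathcal{Z}_K$) and the passage from strong convergence on $\mathcal{Z}$ to strong convergence of the tensor powers on the completed symmetric tensor product. I do not anticipate any real obstacle here.
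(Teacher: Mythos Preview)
Your proof is correct. The paper states this lemma without proof, presumably because it is routine; your argument supplies precisely the expected details, namely the identification $\widetilde{b_K}=\pi_K^{\vee q}\,\tilde b\,\pi_K^{\vee p}$ and the passage from strong convergence of $\pi_K$ to strong convergence of $\pi_K^{\vee p}$. In fact your computation yields the slightly stronger conclusion that $\widetilde{b_K}\to\tilde b$ in the strong operator topology, since $\pi_K^{\vee q}\,\tilde b\,\pi_K^{\vee p}\phi\to\tilde b\,\phi$ in norm for each fixed $\phi$.
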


To formulate more clearly some convergence results we need some extra
definitions.
\begin{defn}
We define the spaces\[
\mathcal{L}_{p,q}^{\vee}\left(\mathcal{Z}\right)=\mathcal{L}\left(\mathcal{Z}^{\vee p},\mathcal{Z}^{\vee q}\right)\,,\quad\mathcal{L}_{m}^{\vee}=\bigoplus_{p+q=m}\mathcal{L}_{p,q}^{\vee}\quad\mbox{and}\quad\mathcal{L}_{\leq m}^{\vee}=\bigoplus_{m'\leq m}\mathcal{L}_{m'}^{\vee}\]
corresponding to~$\mathcal{P}_{p,q}\left(\mathcal{Z}\right)$, $\mathcal{P}_{m}\left(\mathcal{Z}\right)$
and~$\mathcal{P}_{\leq m}\left(\mathcal{Z}\right)$.

Let~$b=\sum_{p,q}b_{p,q}$ be a polynomial, with~$b_{p,q}\in\mathcal{P}\left(\mathcal{Z}\right)$.
We note~$\tilde{b}=(\widetilde{b_{p,q}})\in\bigoplus_{p,q}\mathcal{L}_{p,q}^{\vee}\left(\mathcal{Z}\right)$.

The norm of~$\tilde{b}=(\widetilde{b_{p,q}})\in\mathcal{L}_{\leq m}^{\vee}\left(\mathcal{Z}\right)$
is~$\|\tilde{b}\|_{\mathcal{L}_{\leq m}^{\vee}\left(\mathcal{Z}\right)}=\sum_{p,q}\|\widetilde{b_{p,q}}\|_{\mathcal{L}\left(\bigvee^{p}\mathcal{Z},\bigvee^{q}\mathcal{Z}\right)}\,.$

A sequence~$(\tilde{b}_{K})_{K\in\mathbb{N}}$ of elements of~$\mathcal{L}_{\leq m}\left(\mathcal{Z}\right)$
\emph{converges weakly} to~$\tilde{b}$ in~$\mathcal{L}_{\leq m}\left(\mathcal{Z}\right)$
if~$\tilde{b}_{K_{p,q}}$ converges weakly to~$\tilde{b}_{p,q}$
for every~$p$ and~$q$ as~$K\to+\infty$. \end{defn}
\begin{lem}
Let~$T$ be an operator in~$\mathcal{X}\left(\mathcal{Z}\right)$,
$\left(b_{K}\right)_{K\in\mathbb{N}}$ and~$b$ be polynomials in~$\mathcal{P}_{m}\left(\mathcal{Z}\right)$
such that~$(\tilde{b}_{K})_{K\in\mathbb{N}}$ converges weakly to~$\tilde{b}$.
Then~$ $$b_{K}\left(T\cdot\right)$ and~$b\left(T\cdot\right)$
are in~$\mathcal{P}_{m}\left(\mathcal{Z}\right)$ and~$\widetilde{b_{K}\left(T\cdot\right)}$
converges weakly to~$\widetilde{b\left(T\cdot\right)}$.
\end{lem}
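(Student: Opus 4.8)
The statement to prove is: for $T\in\mathcal{X}(\mathcal{Z})$ and polynomials $b_K,b\in\mathcal{P}_m(\mathcal{Z})$ with $\tilde b_K\rightharpoonup\tilde b$ weakly, the composed polynomials $b_K(T\cdot)$ and $b(T\cdot)$ lie in $\mathcal{P}_m(\mathcal{Z})$ and $\widetilde{b_K(T\cdot)}\rightharpoonup\widetilde{b(T\cdot)}$. The membership $b_K(T\cdot),b(T\cdot)\in\mathcal{P}_m(\mathcal{Z})$ is already furnished by Proposition~\ref{pro:estimate_comp_classical_flow}, which moreover gives the uniform bound $\|b_K(T\cdot)\|_{\mathcal{P}_m}\le\|T\|_{\mathcal{X}}^m\|b_K\|_{\mathcal{P}_m}$; since a weakly convergent sequence $(\tilde b_K)$ is norm-bounded (uniform boundedness principle), the sequence $\widetilde{b_K(T\cdot)}$ is bounded in $\mathcal{L}_m^{\vee}(\mathcal{Z})$. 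So the only real content is identifying the weak limit.

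The plan is to reduce to a single monomial and write out the symbol of $b(T\cdot)$ explicitly. For $b\in\mathcal{P}_{p,q}(\mathcal{Z})$ with $b(z)=\langle z^{\vee q},\tilde b z^{\vee p}\rangle$, substituting $Tz=Lz+Az$ and expanding by multilinearity produces a finite sum of monomials in $z$ and $\bar z$; collecting terms of bidegree $(p',q')$ with $p'+q'=m$ shows that $\widetilde{b(T\cdot)}_{p',q'}$ is obtained from $\tilde b$ by pre- and post-composition with fixed bounded operators built from symmetrized tensor powers of $L$, $L^*$, $A$, $A^*$ (the $\mathcal{X}(\mathcal{Z})$ norm controls exactly these, since $A\in\mathcal{L}_2^a\subset\mathcal{L}(\mathcal{Z})$). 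Symbolically each component has the form $\widetilde{b(T\cdot)}_{p',q'}=\sum_\ell R_\ell\,\tilde b\,S_\ell$ for finitely many fixed $R_\ell\in\mathcal{L}(\mathcal{Z}^{\vee q},\mathcal{Z}^{\vee q'})$, $S_\ell\in\mathcal{L}(\mathcal{Z}^{\vee p'},\mathcal{Z}^{\vee p})$ depending only on $T$.

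The key step is then the elementary fact that for fixed bounded operators $R,S$, the map $\tilde b\mapsto R\tilde b S$ is weak-to-weak continuous on bounded sets: indeed $\langle\psi,R\tilde b_K S\varphi\rangle=\langle R^*\psi,\tilde b_K(S\varphi)\rangle\to\langle R^*\psi,\tilde b(S\varphi)\rangle=\langle\psi,R\tilde b S\varphi\rangle$. Applying this to each of the finitely many terms $R_\ell\tilde b_K S_\ell$ gives $\widetilde{b_K(T\cdot)}_{p',q'}\rightharpoonup\widetilde{b(T\cdot)}_{p',q'}$ for every $(p',q')$ with $p'+q'=m$, which is exactly the asserted weak convergence in $\mathcal{L}_m^\vee(\mathcal{Z})$. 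I expect the main (though still modest) obstacle to be bookkeeping: writing the expansion of $\langle(Tz)^{\vee q},\tilde b(Tz)^{\vee p}\rangle$ cleanly enough to exhibit the operators $R_\ell,S_\ell$ as honest bounded maps between symmetric tensor powers — this is the same combinatorial expansion underlying Proposition~\ref{pro:estimate_comp_classical_flow}, so one may simply invoke its proof and remark that each resulting term is of the stated sandwich form, whence weak continuity is immediate.
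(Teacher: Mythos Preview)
Your proposal is correct. The paper actually states this lemma without proof, presumably regarding it as a routine consequence of the expansion underlying Proposition~\ref{pro:estimate_comp_classical_flow}; your argument---writing each $(p',q')$-component of $\widetilde{b(T\cdot)}$ as a finite sum of sandwiches $R_\ell\,\tilde b\,S_\ell$ with fixed bounded $R_\ell,S_\ell$ built from symmetrized tensor powers of $L,L^*,A,A^*$, and then observing that $\tilde b\mapsto R\tilde b S$ is weak-to-weak continuous---is exactly the natural way to supply the missing details.
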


\begin{lem}
Let~$T$ be an operator in~$\mathcal{X}\left(\mathcal{Z}\right)$,
$\left(b_{K}\right)_{K\in\mathbb{N}}$ and~$b$ be polynomials in~$\mathcal{P}_{m}\left(\mathcal{Z}\right)$
such that~$(\tilde{b}_{K})_{K\in\mathbb{N}}$ is bounded and converges
weakly to~$\tilde{b}$. Then~$(\widetilde{e^{\frac{\varepsilon}{2}\Lambda\left[T\right]}b_{K}})_{K\in\mathbb{N}}$
converges weakly to~$\widetilde{e^{\frac{\varepsilon}{2}\Lambda\left[T\right]}b}$.\end{lem}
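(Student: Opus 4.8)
The statement to prove is that if $(\tilde b_K)_{K\in\mathbb N}$ is a bounded sequence in $\mathcal L_m^\vee(\mathcal Z)$ converging weakly to $\tilde b$, then $(\widetilde{e^{\frac\varepsilon2\Lambda[T]}b_K})_{K\in\mathbb N}$ converges weakly to $\widetilde{e^{\frac\varepsilon2\Lambda[T]}b}$. The plan is to reduce everything to the behaviour of a single elementary operation, the second-order differentiation map $c\mapsto\partial_{\bar z}^2 c$, $c\mapsto\partial_z^2 c$, $c\mapsto\partial_{\bar z}\partial_z c$, under weak convergence of symbols, and then iterate. Since $\Lambda[T]$ on $\mathcal P_m(\mathcal Z)$ is a fixed finite sum of terms of the shape $\mathrm{Tr}[-2AA^*\,\partial_{\bar z}\partial_z\,\cdot\,]$, $\langle v|\,\partial_{\bar z}^2\,\cdot\,$ and $\partial_z^2\,\cdot\,|v\rangle$, and since $e^{\frac\varepsilon2\Lambda[T]}b=\sum_{k\le\lfloor m/2\rfloor}\frac1{k!}(\frac\varepsilon2\Lambda[T])^k b$ is a \emph{finite} sum of compositions of such terms (the last remark before the statement guarantees $\Lambda[T]$ lowers total order by $2$, so the series terminates), it suffices to show each elementary building block is weakly sequentially continuous and preserves boundedness on each fixed space $\mathcal L_{p,q}^\vee(\mathcal Z)$.

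First I would record, from the explicit formula~\eqref{eq:derivee_poly_wick}, that for $b\in\mathcal P_{p,q}(\mathcal Z)$ the symbol of $\partial_{\bar z}\partial_z b$ is, up to the combinatorial constant $pq$, obtained by a partial contraction of $\tilde b$ against $I_{\mathcal Z}$ in one left and one right slot; similarly $\widetilde{\partial_z^2 b}$ and $\widetilde{\partial_{\bar z}^2 b}$ are partial contractions against a fixed vector ($v$ or $|v\rangle$, which lives in $\bigotimes^2\mathcal Z$ and is fixed once $T$ is). Each such contraction is a bounded linear map $\mathcal L_{p,q}^\vee(\mathcal Z)\to\mathcal L_{p',q'}^\vee(\mathcal Z)$ with $p'+q'=p+q-2$, so boundedness of the sequence is preserved automatically. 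For weak convergence: a bounded sequence of operators converges weakly iff the matrix elements $\langle\psi,\tilde b_K\varphi\rangle$ converge for $\psi,\varphi$ ranging over a total set; testing against symmetrized pure tensors $\xi_1\vee\cdots$ and $\eta_1\vee\cdots$, one checks that the matrix elements of the contracted symbol are finite linear combinations (with fixed coefficients depending on $T$ through $v$, $AA^*$) of matrix elements of $\tilde b_K$ against slightly different test tensors — hence they converge. The only mild subtlety is the trace term $\mathrm{Tr}[-2AA^*\,\partial_{\bar z}\partial_z c]$: here $AA^*$ is trace class (this is exactly the content of \textbf{H2}/the Shale condition $A\in\mathcal L_2^a(\mathcal Z)$, giving $AA^*\in\mathcal L_1(\mathcal Z)$), so pairing a uniformly bounded net of operators $\widetilde{\partial_{\bar z}\partial_z c_K}\in\mathcal L(\mathcal Z)$ against the fixed trace-class operator $AA^*$ is weak-$*$ continuous; this is where trace-class-ness, rather than mere boundedness, is used.

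With the elementary step in hand, the iteration is routine: if $(\tilde c_K)$ is bounded and weakly convergent in $\mathcal L_{m'}^\vee(\mathcal Z)$ then so is $(\widetilde{\Lambda[T]c_K})$ in $\mathcal L_{m'-2}^\vee(\mathcal Z)$, by the previous paragraph applied termwise. Applying this $k$ times, $(\widetilde{\Lambda[T]^k b_K})$ is bounded and weakly convergent for each $k\le\lfloor m/2\rfloor$, and since $e^{\frac\varepsilon2\Lambda[T]}$ is a fixed finite linear combination $\sum_{k}\frac1{k!}(\frac\varepsilon2)^k\Lambda[T]^k$, the sum $(\widetilde{e^{\frac\varepsilon2\Lambda[T]}b_K})$ is bounded and weakly convergent, with limit $\widetilde{e^{\frac\varepsilon2\Lambda[T]}b}$ because weak limits pass through finite linear combinations. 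I expect the main obstacle — such as it is — to be purely bookkeeping: carefully tracking the combinatorial prefactors in~\eqref{eq:derivee_poly_wick} through the iterated contractions and making sure the chosen total set of test vectors (symmetrized tensor products) is genuinely total in $\mathcal Z^{\vee p}$ and $\mathcal Z^{\vee q}$, so that convergence of matrix elements on it really does upgrade to weak convergence given the uniform bound. No genuinely hard analytic point arises beyond the trace-class observation already noted.
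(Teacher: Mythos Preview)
Your proposal is correct and follows essentially the same approach as the paper: reduce to showing that a single application of $\Lambda[T]$ preserves weak convergence of $\tilde b_K\to\tilde b$, then iterate finitely many times since the exponential terminates. The paper compresses your argument into one displayed formula expressing $\widetilde{\Lambda[T]b}$ as $\Tr_1[(-2A^*A\otimes I)\tilde b]+(\langle v|\vee I)\tilde b+\tilde b(|v\rangle\vee I)$ and then declares the preservation ``clear'', whereas you unpack why it is clear (in particular flagging that the trace term requires $AA^*\in\mathcal L_1(\mathcal Z)$ so that pairing against a bounded weakly convergent sequence is continuous); but the substance is the same.
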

\begin{proof}
It is enough to show that weak convergence is preserved by the action
of~$\Lambda\left[T\right]$. But, for any polynomial~$b$, \[
\widetilde{\Lambda\left[T\right]b}=\Tr_{1}\left[\left(-2A^{*}A\otimes I_{\mathcal{Z}^{\vee q-1}}\right)\tilde{b}\right]+\left(\left\langle v\right|\vee I_{\mathcal{Z}^{\vee q-2}}\right)\tilde{b}+\tilde{b}\left(\left|v\right\rangle \vee I_{\mathcal{Z}^{\vee p-2}}\right)\,,\]
where~$\Tr_{1}$ is the partial trace on the first~$\mathcal{Z}$
subspace on the left and any direction on the right (so that if~$\tilde{b}\in\mathcal{L}_{p,q}^{\vee}\left(\mathcal{Z}\right)$,
then~$\Tr_{1}[(-2A^{*}A\otimes I_{\mathcal{Z}^{\vee q-1}})\tilde{b}]$
is in~$\mathcal{L}_{p-1,q-1}^{\vee}\left(\mathcal{Z}\right)$). With
this formula the preservation of the weak convergence is clear.
\end{proof}

\begin{prop}
Let~$b$ and~$\left(b_{K}\right)_{K\in\mathbb{N}}$ be Wick polynomials
in~$\mathcal{P}_{p,q}\left(\mathcal{Z}\right)$ such that~$w-\lim\tilde{b}_{K}=\tilde{b}$.
Then \[
w-\lim_{K}\left(b_{K}-b\right)^{Wick}\left\langle N\right\rangle ^{-\frac{p+q}{2}}=0\,.\]

\end{prop}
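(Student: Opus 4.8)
The plan is to reduce the statement — which is about convergence in the weak operator topology — to a fibrewise estimate on each $\mathcal{Z}^{\vee n}$ and then to the elementary fact that weak operator convergence is stable under tensoring with an identity. First I would note that a weakly convergent sequence $(\widetilde{b_K})$ in $\mathcal{L}(\mathcal{Z}^{\vee p},\mathcal{Z}^{\vee q})$ is, by the uniform boundedness principle, norm bounded, so $M:=\sup_K\|\widetilde{b_K}-\tilde{b}\|<\infty$. Combined with the standard Wick estimate $\|c^{Wick}\langle N\rangle^{-(p+q)/2}\|_{\mathcal{L}(\mathcal{H})}\leq C_{p,q}\|\tilde{c}\|$ for $c\in\mathcal{P}_{p,q}(\mathcal{Z})$ (see~\cite{MR2465733}), this gives $\sup_K\|(b_K-b)^{Wick}\langle N\rangle^{-(p+q)/2}\|_{\mathcal{L}(\mathcal{H})}<\infty$, so it is enough to prove $\langle\Phi,(b_K-b)^{Wick}\langle N\rangle^{-(p+q)/2}\Psi\rangle\to0$ for $\Phi,\Psi$ in the dense subspace $\mathcal{H}_{\mathrm{fin}}$. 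By linearity I may take $\Psi=\Psi^{(n)}\in\mathcal{Z}^{\vee n}$ and $\Phi=\Phi^{(n+q-p)}\in\mathcal{Z}^{\vee(n+q-p)}$, this being the only case in which the matrix element can be nonzero, since $(b_K-b)^{Wick}$ shifts the particle number by $q-p$ while $\langle N\rangle^{-(p+q)/2}$ preserves it.

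Next I would unfold the definition of the Wick quantization on $\mathcal{Z}^{\vee n}$. Since $\langle N\rangle^{-(p+q)/2}$ acts there as the scalar $(\varepsilon^2 n^2+1)^{-(p+q)/4}$, one gets, for $n\geq p$ (the case $n<p$ being trivial),
\[
\langle\Phi^{(n+q-p)},(b_K-b)^{Wick}\langle N\rangle^{-(p+q)/2}\Psi^{(n)}\rangle=c_n\,\langle\Phi^{(n+q-p)},\bigl((\widetilde{b_K}-\tilde{b})\vee I_{\mathcal{Z}^{\vee(n-p)}}\bigr)\Psi^{(n)}\rangle
\]
where $c_n=\varepsilon^{(p+q)/2}(\varepsilon^2 n^2+1)^{-(p+q)/4}\sqrt{n!\,(n+q-p)!}/(n-p)!$ is a finite constant independent of $K$. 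Writing $(\widetilde{b_K}-\tilde{b})\vee I_{\mathcal{Z}^{\vee(n-p)}}=\mathcal{S}_{n+q-p}\bigl((\widetilde{b_K}-\tilde{b})\otimes I_{\mathcal{Z}^{\otimes(n-p)}}\bigr)\mathcal{S}_n$, and using that $\mathcal{S}_n$ and $\mathcal{S}_{n+q-p}$ are self-adjoint projections acting as the identity on the symmetric vectors $\Psi^{(n)}$ and $\Phi^{(n+q-p)}$ (together with the inclusions $\mathcal{Z}^{\vee n}\subset\mathcal{Z}^{\vee p}\otimes\mathcal{Z}^{\otimes(n-p)}$ and $\mathcal{Z}^{\vee(n+q-p)}\subset\mathcal{Z}^{\vee q}\otimes\mathcal{Z}^{\otimes(n-p)}$ which make the right-hand side meaningful), the matrix element reduces to $c_n\langle\Phi^{(n+q-p)},\bigl((\widetilde{b_K}-\tilde{b})\otimes I_{\mathcal{Z}^{\otimes(n-p)}}\bigr)\Psi^{(n)}\rangle$.

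Finally I would invoke the elementary lemma that if $T_K\to0$ in the weak operator topology of $\mathcal{L}(\mathcal{Z}^{\vee p},\mathcal{Z}^{\vee q})$ with $\sup_K\|T_K\|<\infty$, then $T_K\otimes I_{\mathcal{Z}^{\otimes r}}\to0$ in the weak operator topology: one checks this first on pairs of elementary tensors, where $\langle\phi\otimes\eta,(T_K\otimes I)(\psi\otimes\chi)\rangle=\langle\phi,T_K\psi\rangle\langle\eta,\chi\rangle\to0$, extends it by bilinearity to finite linear combinations, and concludes for arbitrary vectors by a routine approximation argument using $\|T_K\otimes I\|=\|T_K\|\leq M$. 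Applied with $T_K=\widetilde{b_K}-\tilde{b}$ and $r=n-p$, this yields the fibrewise convergence, and the reduction of the first paragraph then gives the statement. The computation is essentially routine: the only points requiring some care are the bookkeeping of the symmetrizers and of the spaces on which each factor acts in the second paragraph, and the one external ingredient is the standard Wick norm estimate needed to pass from $\mathcal{H}_{\mathrm{fin}}$ to all of $\mathcal{H}$.
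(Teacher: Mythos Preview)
The paper states this proposition without proof, so there is no argument of the author's to compare against; your proof is correct and supplies exactly what the paper omits. The structure---uniform boundedness via Banach--Steinhaus, reduction to $\mathcal{H}_{\mathrm{fin}}$ by the standard Wick estimate $\|c^{Wick}\langle N\rangle^{-(p+q)/2}\|\leq C_{p,q}\|\tilde c\|$, fibrewise identification of the matrix element with a constant times $\langle\Phi^{(n+q-p)},(T_K\otimes I)\Psi^{(n)}\rangle$, and stability of weak operator convergence under tensoring with the identity---is the natural one and each step is sound.
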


\begin{prop}
\label{pro:cv-UbU}Let~$b$ and~$\left(b_{K}\right)_{K\in\mathbb{N}}$
be Wick polynomials in~$\mathcal{P}_{p,q}\left(\mathcal{Z}\right)$
such that~$w-\lim\tilde{b}_{K}=\tilde{b}$. Let~$U$ be a unitary
operator on the Fock space~$\mathcal{H}$ such that, for all~$k\geq2$,
$\left\langle N\right\rangle ^{\frac{k}{2}}U\left\langle N\right\rangle ^{-\frac{k}{2}}$
is a bounded operator. Then\[
w-\lim_{K}U^{*}\left(b_{K}-b\right)^{Wick}U\left\langle N\right\rangle ^{-\frac{m'}{2}}=0\]
with~$m'=\max\left(m,2\right)$, $m=p+q$.
\end{prop}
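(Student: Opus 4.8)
The plan is to prove Proposition~\ref{pro:cv-UbU} by combining the previous proposition (the case $U=I$) with the conjugation hypothesis on $U$. First I would note that it suffices to treat the case $p+q=m\geq 2$, since if $m=p+q<2$ the operators $(b_K-b)^{Wick}$ raise or lower the particle number by at most one and are relatively bounded with respect to $\langle N\rangle^{1/2}$, and the argument below goes through with $m'=2$ replacing $m'=m$; in fact the cleanest formulation just fixes $m'=\max(p+q,2)$ throughout. So assume $m\geq 2$.

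The key algebraic manipulation is the insertion of resolvent-type factors: write
\begin{equation}
U^{*}\left(b_{K}-b\right)^{Wick}U\left\langle N\right\rangle ^{-m'/2}
=\left(U^{*}\left\langle N\right\rangle ^{m'/2}U\right)^{-1}\cdot U^{*}\left[\left(b_{K}-b\right)^{Wick}\left\langle N\right\rangle ^{-m'/2}\right]\left\langle N\right\rangle ^{m'/2}U\left\langle N\right\rangle ^{-m'/2}\,.
\end{equation}
The rightmost factor $\langle N\rangle^{m'/2}U\langle N\rangle^{-m'/2}$ is bounded by hypothesis (applied with $k=m'\geq 2$), and $U^{*}$ is unitary, hence bounded. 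The factor $(b_K-b)^{Wick}\langle N\rangle^{-m'/2}$ tends to $0$ weakly by the preceding proposition (using $m'\geq p+q$ so that $\langle N\rangle^{-m'/2}$ dominates $\langle N\rangle^{-(p+q)/2}$; one absorbs the extra power of $\langle N\rangle^{-1/2}\leq I$, or more carefully writes $\langle N\rangle^{-m'/2}=\langle N\rangle^{-(p+q)/2}\langle N\rangle^{-(m'-p-q)/2}$ and uses that the second factor is a fixed bounded operator). The remaining subtlety is that weak convergence is not preserved under multiplication by arbitrary bounded operators on both sides — but it \emph{is} preserved under left multiplication by a fixed bounded operator and right multiplication by a fixed bounded operator: if $w\text{-}\lim X_K=0$ and $R,S$ are fixed bounded operators then $w\text{-}\lim R X_K S=0$, since $\langle\psi,RX_KS\varphi\rangle=\langle R^{*}\psi,X_K(S\varphi)\rangle\to 0$. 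Here $R=(U^{*}\langle N\rangle^{m'/2}U)^{-1}$ — wait, that operator is unbounded — so I would instead keep $U^{*}$ on the left (bounded) and group $\langle N\rangle^{m'/2}U\langle N\rangle^{-m'/2}$ on the right as the fixed bounded operator $S$, giving directly $w\text{-}\lim U^{*}(b_K-b)^{Wick}\langle N\rangle^{-m'/2}\cdot S=0$, and $S=\langle N\rangle^{m'/2}U\langle N\rangle^{-m'/2}$ is exactly $\langle N\rangle^{m'/2}U\langle N\rangle^{-m'/2}$, so $(b_K-b)^{Wick}U\langle N\rangle^{-m'/2}=(b_K-b)^{Wick}\langle N\rangle^{-m'/2}\cdot S$. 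Thus $w\text{-}\lim U^{*}(b_K-b)^{Wick}U\langle N\rangle^{-m'/2}=0$ follows from left multiplication by the bounded operator $U^{*}$ and right multiplication by the bounded operator $S$.

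The main obstacle is precisely the bookkeeping of which operators are bounded and on which side they sit: $\langle N\rangle^{m'/2}U\langle N\rangle^{-m'/2}$ is bounded (hypothesis) but $\langle N\rangle^{-m'/2}U^{*}\langle N\rangle^{m'/2}=(\langle N\rangle^{m'/2}U\langle N\rangle^{-m'/2})^{*}$ is also bounded, so one has freedom, whereas $U^{*}\langle N\rangle^{m'/2}U$ is genuinely unbounded and must never be isolated as a factor. Once the factorisation $U^{*}(b_K-b)^{Wick}U\langle N\rangle^{-m'/2}=U^{*}\big[(b_K-b)^{Wick}\langle N\rangle^{-m'/2}\big]\big[\langle N\rangle^{m'/2}U\langle N\rangle^{-m'/2}\big]$ is written with all three bracketed pieces being: bounded (unitary), weakly null, bounded — the conclusion is immediate from the elementary stability of weak-operator limits under fixed two-sided bounded multiplication. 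I would also remark that the hypothesis on $U$ is satisfied by the quantum flow $U(t,0)$ by Theorem~\ref{thm:existence-quantum-flow} and Lemma~\ref{lem:stability_by_U}, which is what makes this proposition applicable in the proof of Theorem~\ref{thm:exponential-formula}.
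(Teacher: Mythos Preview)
The paper states Proposition~\ref{pro:cv-UbU} without proof, so there is no argument to compare against. Your approach is the natural one and is correct: the factorisation
\[
U^{*}(b_{K}-b)^{Wick}U\langle N\rangle^{-m'/2}
= U^{*}\,\bigl[(b_{K}-b)^{Wick}\langle N\rangle^{-m'/2}\bigr]\,\bigl[\langle N\rangle^{m'/2}U\langle N\rangle^{-m'/2}\bigr]
\]
is exactly the intended trick, the three factors being respectively unitary, weakly null by the preceding proposition, and bounded by hypothesis. Weak operator convergence is indeed stable under fixed two-sided bounded multiplication, so the conclusion follows.

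Two cosmetic remarks. First, the detour through the unbounded object $(U^{*}\langle N\rangle^{m'/2}U)^{-1}$ should simply be deleted: you arrive at the right factorisation anyway, and the false start only obscures the argument. Second, the case $m=p+q<2$ does not really need a separate discussion: since $m'\geq p+q$ one writes $\langle N\rangle^{-m'/2}=\langle N\rangle^{-(p+q)/2}\langle N\rangle^{-(m'-p-q)/2}$ uniformly in all cases, the extra factor being bounded, so the previous proposition applies directly. The only reason $m'=\max(m,2)$ enters is that the hypothesis on $U$ is stated for $k\geq 2$, and you need $\langle N\rangle^{m'/2}U\langle N\rangle^{-m'/2}$ bounded; you identify this correctly.
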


\begin{prop}
\label{pro:formula-Wick-infinite-dim}Let~$T$ be an implementable
symplectomorphism with Bogoliubov implementer~$U$. Then for any
polynomial~$b$ in~$\mathcal{P}_{\leq m}\left(\mathcal{Z}\right)$,
$m\geq2$, \[
U^{*}b^{Wick}U=\left(e^{\frac{\varepsilon}{2}\Lambda\left[T\right]}\left[b\left(T^{*}\cdot\right)\right]\right)^{Wick}\]
as continuous operators from~$\mathcal{D}(\left\langle N\right\rangle ^{\frac{m}{2}})$
to~$\mathcal{H}$.\end{prop}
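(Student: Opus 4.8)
The plan is to reduce the general implementable symplectomorphism $T$ to the two building blocks already in hand --- $\mathbb{C}$-linear unitaries, for which the statement is elementary, and the squeezings $\hat T = e^{c\rho}$ with $\rho$ Hilbert-Schmidt, treated in the preceding theorem on cylindrical polynomials. Recall (Theorem~\ref{thm:Shale} and the decompositions recalled in Appendix~\ref{sec:symplectic-transformations}) that an implementable symplectomorphism factors as $T = V_1\,e^{c\rho}\,V_2$, with $V_1,V_2$ $\mathbb{C}$-linear unitaries and $\rho\geq 0$ self-adjoint Hilbert-Schmidt commuting with the conjugation $c$. For a $\mathbb{C}$-linear unitary $V$, a suitable power of its second quantization $\Gamma(V)$ is a Bogoliubov implementer of $V$, the induced action on Wick symbols being the exact substitution $a\mapsto a(V^*\cdot)$ with no $\varepsilon$-corrections (see~\cite{MR2465733}), consistently with $\Lambda[V] = 0$; so the claimed formula holds when $T$ is a linear unitary. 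Two implementers of a given symplectomorphism differ by a phase (irreducibility of the Fock representation of the Weyl relations) and $U^* b^{Wick} U$ does not feel that phase, so it suffices to prove the formula for one convenient implementer of $T$, which I take to be the composition $U = \Gamma(V_2)^*\,\hat U\,\Gamma(V_1)^*$ of the implementers of the three factors, where $\hat U$ implements $\hat T$; this $U$ still preserves every $\mathcal{D}(\langle N\rangle^{k/2})$.

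Next I would prove the formula for $T = \hat T = e^{c\rho}$ and an \emph{arbitrary} polynomial $b \in \mathcal{P}_{\leq m}(\mathcal{Z})$, upgrading the cylindrical theorem --- which gives it only for $b_K := b(\pi_K\cdot)$, with $\pi_K$ the projection onto $\mathcal{Z}_K = \Vect\{\xi_j\}_{j\leq K}$ in a basis diagonalizing $\rho$ --- by a weak-limit argument; one treats each homogeneous component of $b$ separately, the components of degree $\leq 1$ being immediate (there $\Lambda$ vanishes and the identity is just the differentiated Weyl relations). Take $\hat U = e^{-iQ^{Wick}/\varepsilon}$ with $Q(z) = \Im\langle c\rho z, z\rangle$, a Bogoliubov implementer of $\hat T$ which by Lemma~\ref{lem:stability_by_U} preserves every $\mathcal{D}(\langle N\rangle^{k/2})$, so that $\langle N\rangle^{k/2}\hat U\langle N\rangle^{-k/2}$ is bounded. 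By Lemma~\ref{lem:CVfaible_opdimfinie} the family $(\widetilde{b_K})_K$ is bounded with $\tilde b = w-\lim\widetilde{b_K}$; the lemmas on preservation of weak convergence under the substitution $(\cdot\circ\hat T^*)$ and under $e^{\frac{\varepsilon}{2}\Lambda[\hat T]}$ give $\widetilde{e^{\frac{\varepsilon}{2}\Lambda[\hat T]}[b_K\circ\hat T^*]} \to \widetilde{e^{\frac{\varepsilon}{2}\Lambda[\hat T]}[b\circ\hat T^*]}$ weakly, and the proposition asserting $w-\lim(b_K - b)^{Wick}\langle N\rangle^{-m/2} = 0$ then makes the right-hand sides of the cylindrical identity, read after right-multiplication by $\langle N\rangle^{-m/2}$, converge weakly to $(e^{\frac{\varepsilon}{2}\Lambda[\hat T]}[b\circ\hat T^*])^{Wick}\langle N\rangle^{-m/2}$. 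On the left, Proposition~\ref{pro:cv-UbU} gives $w-\lim_K \hat U^* b_K^{Wick}\hat U\langle N\rangle^{-m/2} = \hat U^* b^{Wick}\hat U\langle N\rangle^{-m/2}$. Since an operator from $\mathcal{D}(\langle N\rangle^{m/2})$ to $\mathcal{H}$ is determined by its right-composition with $\langle N\rangle^{-m/2}$, and weak limits in $\mathcal{L}(\mathcal{H})$ are unique, equating the two limits proves the formula for $\hat T$ and all $b$.

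It remains to reassemble $T = V_1\hat T V_2$. With $U = \Gamma(V_2)^*\,\hat U\,\Gamma(V_1)^*$ as above, one evaluates $U^* b^{Wick} U$ by three successive conjugations: the inner one, by $\Gamma(V_2)$, turns the symbol $b$ into $b\circ V_2^*$; the middle one, by $\hat U$, applies the identity just proven for $\hat T$; the outer one, by $\Gamma(V_1)$, post-composes with $V_1^*$. Tracking the substitutions, the resulting symbol is $(e^{\frac{\varepsilon}{2}\Lambda[\hat T]}[b\circ V_2^*\circ\hat T^*])\circ V_1^*$, and since $T^* = (V_1\hat T V_2)^* = V_2^*\hat T^* V_1^*$ one has $b\circ V_2^*\circ\hat T^*\circ V_1^* = b\circ T^*$; so the claim reduces to the purely algebraic identity $\Lambda[T] = (\cdot\circ V_1^*)\circ\Lambda[\hat T]\circ(\cdot\circ V_1)$. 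This is plausible since $\Lambda[T]$ depends on $T$ only through $AA^* = V_1\hat A\hat A^* V_1^*$ and the vector attached to $LA^* = V_1\hat L\hat A^* V_1^*$ (both $V_2$-independent after the cancellation $V_2 V_2^* = I$), while the derivatives $\partial_z,\partial_{\bar z},\partial_z^2,\partial_{\bar z}^2$ transform covariantly under the $\mathbb{C}$-linear substitution $c\mapsto c\circ V_1^*$ and the trace is cyclic. Pinning down this last identity while keeping mutually consistent the conventions for the antilinear adjoint, the order of the three substitutions, and the covariance of $\Lambda$ is the only genuinely delicate bookkeeping, and I expect it to be the main obstacle; it can be made routine by first checking the polynomial identity $e^{\frac{\varepsilon}{2}\Lambda[T]}[b\circ T^*] = (\text{composite of the three symbol transformations})\,b$ in finite dimension --- where it follows from Proposition~\ref{pro:formula-Wick-finite-dim} applied to $T$, $V_1$, $\hat T$, $V_2$ and uniqueness of the Wick symbol --- and then extending it to $\mathcal{Z}$ by weak density of the cylindrical polynomials and the lemmas on preservation of weak convergence under $\Lambda[\cdot]$ and composition.
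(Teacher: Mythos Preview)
Your argument is correct and follows the paper's approach closely: first pass from cylindrical polynomials $b_K$ to general $b$ for the squeezing $\hat T=e^{c\rho}$ via the weak-convergence lemmas (Lemma~\ref{lem:CVfaible_opdimfinie} through Proposition~\ref{pro:cv-UbU}), then reduce the general implementable $T$ to $\hat T$ by peeling off $\mathbb{C}$-linear unitary factors using $\Gamma(\cdot)$. The one overcomplication is your two-sided factorisation $T=V_1\hat T V_2$: the decomposition actually proved in Appendix~\ref{sec:symplectic-transformations} (Theorem~\ref{thm:decomp_symplecto}) is one-sided, $T=u\,e^{c\rho}$, and that is all the paper uses --- with a single unitary the reassembly step is shorter (the paper checks directly that $\Gamma(u)\bigl(e^{\frac{\varepsilon}{2}\Lambda[\hat T]}[b(\hat T^*\cdot)]\bigr)^{Wick}\Gamma(u^*)=(e^{\frac{\varepsilon}{2}\Lambda[T]}[b(T^*\cdot)])^{Wick}$ from $L=u\hat L$, $A=u\hat A$), and the ``delicate bookkeeping'' you flag with $V_2$ disappears because $V_2=I$. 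Your detour through finite dimension plus weak density to verify the covariance of $\Lambda$ is unnecessary once you use the one-sided form.
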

\begin{proof}
From the results~\ref{lem:CVfaible_opdimfinie} to~\ref{pro:cv-UbU}
we deduce the result for symplectomorphisms of the form~$\hat{T}=e^{c\rho}$,
with~$c$ a conjugation and~$\rho$ a positive, self-adjoint, Hilbert-Schmidt
operator commuting with~$c$. Then we observe that the hypothesis
on the form of~$\hat{T}$ is not restrictive as, if $T$ is of the
form~$ue^{c\rho}$ with~$u$ unitary, then, with~$\hat{U}$ a Bogoliubov
implementer for~$\hat{T}$, $\hat{U}\Gamma\left(u^{*}\right)$ is
a Bogoliubov implementer for~$T$ and \[
\Gamma\left(u\right)\left(e^{\frac{\varepsilon}{2}\Lambda\left[\hat{T}\right]}\left[b\left(\hat{T}^{*}\cdot\right)\right]\right)^{Wick}\Gamma\left(u^{*}\right)=\left(e^{\frac{\varepsilon}{2}\Lambda\left[T\right]}\left[b\left(T^{*}\cdot\right)\right]\right)^{Wick}\,.\]
Indeed for any polynomial~$c$, and operator~$\varphi$ in~$\mathcal{X}\left(\mathcal{Z}\right)$,
$\Gamma\left(\varphi\right)c^{Wick}\Gamma\left(\varphi^{*}\right)=c\left(\varphi^{*}\cdot\right)^{Wick}$
and\[
\Lambda\left[\hat{T}\right]^{k}\left[b\left(\hat{T}^{*}\cdot\right)\right]\left(u^{*}\cdot\right)=\Lambda\left[u\hat{T}\right]^{k}\left[b\left(\hat{T}^{*}u^{*}\cdot\right)\right]\]
 as can be checked by an explicit computation and using the fact that~$L=u\hat{L}$
and~$A=u\hat{A}$ with the~$L$, $\hat{L}$ and~$A$, $\hat{A}$
denoting respectively the $\mathbb{C}$-linear and $\mathbb{C}$-antilinear
parts of~$T$ and~$\hat{T}$. This achieves the proof.
\end{proof}

\subsection{An evolution formula for the Wick symbol}

We can now prove Theorem~\ref{thm:exponential-formula}.
\begin{proof}
We only need to apply propositions~\ref{pro:bogoliubov-implementation}
and~\ref{pro:formula-Wick-infinite-dim} with~$T=-i\varphi\left(0,t\right)i=L^{*}\left(t,0\right)+A^{*}\left(t,0\right)$
(with~$\varphi\left(t,0\right)=L\left(t,0\right)+A\left(t,0\right)$).
We remark that for any symplectomorphism~$T$, $\left(-iTi\right)^{*}=T^{-1}$
so that $\left(-i\varphi\left(0,t\right)i\right)^{*}=\varphi\left(t,0\right)$
and thus we get the result.
\end{proof}

\subsection{Estimates}

We now give estimates for the different terms of the expansion of
the symbol.
\begin{prop}
\label{pro:estimate_Lambda_T}Let~$T=L+A$ be an implementable symplectomorphism
with~$L$ $\mathbb{C}$-linear and~$A$ $\mathbb{C}$-antilinear.
Then the operator~$\Lambda\left[T\right]$ defined on~$\mathcal{P}\left(\mathcal{Z}\right)$
by \[
\Lambda\left[T\right]c\left(z\right)=Tr\left[-2AA^{*}\partial_{\bar{z}}\partial_{z}c\right]+\left\langle v\right|\partial_{\bar{z}}^{2}c\left(z\right)+\partial_{z}^{2}c\left(z\right)\left|v\right\rangle \,,\]
with~$v\in\bigotimes^{2}\mathcal{Z}$ the vector such that for all~$z_{1},\, z_{2}\in\mathcal{Z}$,
$\left\langle z_{1}\otimes z_{2},v\right\rangle =\left\langle z_{1},LA^{*}z_{2}\right\rangle $
is such that, for~$c$ in~$\mathcal{P}_{m}\left(\mathcal{Z}\right)$\[
\left\Vert \Lambda\left[T\right]c\right\Vert _{\mathcal{P}_{m-2}\left(\mathcal{Z}\right)}\leq2\left\Vert T\right\Vert _{\mathcal{X}\left(\mathcal{Z}\right)}\left\Vert A\right\Vert _{\mathcal{L}_{2}^{a}\left(\mathcal{Z}\right)}\left\Vert c\right\Vert _{\mathcal{P}_{m}\left(\mathcal{Z}\right)}\,.\]
\end{prop}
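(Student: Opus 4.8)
The plan is to estimate each of the three summands in $\Lambda[T]c$ separately in the norm $\|\cdot\|_{\mathcal{P}_{m-2}(\mathcal{Z})}$, which by the Definition preceding Proposition~\ref{pro:estimate_comp_classical_flow} reduces to estimating the corresponding operators $\widetilde{\Lambda[T]c}_{p-1,q-1}\in\mathcal{L}(\mathcal{Z}^{\vee(p-1)},\mathcal{Z}^{\vee(q-1)})$ for each bidegree $(p,q)$ with $p+q=m$. The key algebraic input is the explicit formula for the symbol of $\Lambda[T]c$ already recorded in the excerpt, namely
\[
\widetilde{\Lambda[T]c}=\Tr_{1}\left[\left(-2A^{*}A\otimes I_{\mathcal{Z}^{\vee q-1}}\right)\tilde{c}\right]+\left(\left\langle v\right|\vee I_{\mathcal{Z}^{\vee q-2}}\right)\tilde{c}+\tilde{c}\left(\left|v\right\rangle \vee I_{\mathcal{Z}^{\vee p-2}}\right).
\]
So the proof is essentially a matter of bounding three operator products.

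First I would treat the trace term. Since $A$ is $\mathbb{C}$-antilinear and Hilbert-Schmidt, $A^{*}A$ is a positive $\mathbb{C}$-linear trace-class operator with $\|A^{*}A\|_{\mathcal{L}_1(\mathcal{Z})}=\|AA^{*}\|_{\mathcal{L}_1(\mathcal{Z})}=\|A\|_{\mathcal{L}_2^a(\mathcal{Z})}^2$ (using that $A^*A$ and $AA^*$ are unitarily related, or directly from the definition of the antilinear Hilbert-Schmidt norm). The partial trace $\Tr_1[(-2A^*A\otimes I)\tilde c]$ has operator norm at most $2\|A^{*}A\|_{\mathcal{L}_1(\mathcal{Z})}\|\tilde c\|_{\mathcal{L}(\mathcal{Z}^{\vee p},\mathcal{Z}^{\vee q})}=2\|A\|_{\mathcal{L}_2^a(\mathcal{Z})}^2\|\tilde c\|$, because taking a partial trace against a trace-class factor contracts the norm by that factor's trace norm, and the symmetrizations $\mathcal{S}$ are contractions. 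For the two $v$-terms, the vector $v\in\bigotimes^2\mathcal{Z}$ defined by $\langle z_1\otimes z_2,v\rangle=\langle z_1,LA^*z_2\rangle$ is the Hilbert-Schmidt integral kernel of $LA^{*}$, hence $\|v\|_{\bigotimes^2\mathcal{Z}}=\|LA^{*}\|_{\mathcal{L}_2(\mathcal{Z})}\leq\|L\|_{\mathcal{L}(\mathcal{Z})}\|A\|_{\mathcal{L}_2^a(\mathcal{Z})}$; contracting $\tilde c$ against $\langle v|$ (resp. inserting $|v\rangle$) followed by a symmetrization produces an operator of norm at most $\|v\|\,\|\tilde c\|\leq\|L\|_{\mathcal{L}(\mathcal{Z})}\|A\|_{\mathcal{L}_2^a(\mathcal{Z})}\|\tilde c\|$ each.

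Summing the three contributions over all bidegrees $(p,q)$ with $p+q=m$ gives
\[
\left\Vert \Lambda\left[T\right]c\right\Vert _{\mathcal{P}_{m-2}\left(\mathcal{Z}\right)}\leq\left(2\left\Vert A\right\Vert _{\mathcal{L}_2^a\left(\mathcal{Z}\right)}+2\left\Vert L\right\Vert _{\mathcal{L}\left(\mathcal{Z}\right)}\right)\left\Vert A\right\Vert _{\mathcal{L}_2^a\left(\mathcal{Z}\right)}\left\Vert c\right\Vert _{\mathcal{P}_m\left(\mathcal{Z}\right)},
\]
and since $\|T\|_{\mathcal{X}(\mathcal{Z})}=\|L\|_{\mathcal{L}(\mathcal{Z})}+\|A\|_{\mathcal{L}_2^a(\mathcal{Z})}\geq\max(\|L\|_{\mathcal{L}(\mathcal{Z})},\|A\|_{\mathcal{L}_2^a(\mathcal{Z})})$, the right-hand side is bounded by $2\|T\|_{\mathcal{X}(\mathcal{Z})}\|A\|_{\mathcal{L}_2^a(\mathcal{Z})}\|c\|_{\mathcal{P}_m(\mathcal{Z})}$, which is the claimed estimate. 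The main obstacle—such as it is—is purely bookkeeping: keeping track of the combinatorial factors hidden in the symmetrizers $\mathcal{S}_{p}$, $\mathcal{S}_{q}$ and making sure that contracting against $\langle v|$ or a partial trace and then re-symmetrizing genuinely costs only a factor $\|v\|$ or $\|A^*A\|_{\mathcal{L}_1}$ and no spurious dimension-dependent constant; this is exactly where one uses that $\mathcal{S}_n$ is an orthogonal projection (hence a contraction) so that all such combinatorial prefactors are at most $1$.
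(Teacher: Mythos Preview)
Your proposal is correct and follows essentially the same approach as the paper's proof: bound the trace term by $2\|A\|_{\mathcal{L}_2^a}^2\|c\|$ via $\|AA^*\|_{\mathcal{L}_1}=\|A\|_{\mathcal{L}_2^a}^2$, bound each of the two $v$-terms by $\|v\|\,\|c\|\leq\|L\|_{\mathcal{L}(\mathcal{Z})}\|A\|_{\mathcal{L}_2^a}\|c\|$, and conclude using $\|T\|_{\mathcal{X}(\mathcal{Z})}=\|L\|_{\mathcal{L}(\mathcal{Z})}+\|A\|_{\mathcal{L}_2^a}$. The paper states exactly these three estimates (phrased at the level of $\partial_{\bar z}\partial_z c$, $\partial_{\bar z}^2 c$, $\partial_z^2 c$ rather than via the operator formula with $\Tr_1$ and $|v\rangle$, but the content is the same), and your extra remark that the symmetrizers are contractions is what makes those estimates constant-free.
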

\begin{proof}
We only have to remark that for any polynomial~$c$ in~$\mathcal{P}_{p,q}\left(\mathcal{Z}\right)$
the following estimates hold\[
\left\Vert Tr\left[B\partial_{\bar{z}}\partial_{z}c\left(z\right)\right]\right\Vert _{q-1\leftarrow p-1}\leq\left\Vert B\right\Vert _{\mathcal{L}_{1}\left(\mathcal{Z}\right)}\left\Vert c\right\Vert _{q\leftarrow p}\]
for any trace class operator~$B$, and \[
\left\Vert \left\langle v\right|\partial_{\bar{z}}^{2}c\left(z\right)\right\Vert _{q-2\leftarrow p}\leq\left\Vert v\right\Vert _{\bigvee^{2}\mathcal{Z}}\left\Vert c\right\Vert _{q\leftarrow p}\]
and that~$\left\Vert v\right\Vert _{\bigvee^{2}\mathcal{Z}}=\left\Vert LA^{*}\right\Vert _{\mathcal{L}_{2}^{a}\left(\mathcal{Z}\right)}\leq\left\Vert L\right\Vert _{\mathcal{L}\left(\mathcal{Z}\right)}\left\Vert A\right\Vert _{\mathcal{L}_{2}^{a}\left(\mathcal{Z}\right)}$.
The same estimate holds for~$\partial_{z}^{2}c\left(z\right)\left|v\right\rangle $.
\end{proof}
We apply this result to the expression given in the theorem~\ref{thm:exponential-formula}.
\begin{prop}
Let~$\left(Q_{t}\right)_{t}$ be a continuous one parameter family
of quadratic polynomials, $\varphi$ the classical flow associated
to~$\left(Q_{t}\right)_{t}$, and~$\Lambda^{t}$ the operator defined
in theorem~\ref{thm:exponential-formula}. Then, for~$b$ in~$\mathcal{P}_{\leq m}\left(\mathcal{Z}\right)$
\[
\left\Vert e^{\frac{\varepsilon}{2}\Lambda^{t}}\left(b\circ\varphi\left(t,0\right)\right)\right\Vert _{\mathcal{P}\left(\mathcal{Z}\right)}\leq\left\Vert b\right\Vert _{\mathcal{P}\left(\mathcal{Z}\right)}\left\Vert \varphi\left(t,0\right)\right\Vert _{\mathcal{X}\left(\mathcal{Z}\right)}^{m}\sum_{k=0}^{m}\frac{1}{k!}\left(\varepsilon\left\Vert \varphi\left(t,0\right)\right\Vert _{\mathcal{X}\left(\mathcal{Z}\right)}\left\Vert A\left(t,0\right)\right\Vert _{\mathcal{L}_{2}^{a}\left(\mathcal{Z}\right)}\right)^{k}\]
 where~$A$ is the $\mathbb{C}$-antilinear part of~$\varphi$.\end{prop}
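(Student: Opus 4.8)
The plan is to bootstrap from the two quantitative estimates already at our disposal: Proposition~\ref{pro:estimate_comp_classical_flow}, controlling the composition $b\circ\varphi(t,0)$, and Proposition~\ref{pro:estimate_Lambda_T}, controlling the action of an operator of the form $\Lambda[T]$. The only genuine preliminary is to recognise $\Lambda^{t}$ as such an operator with a symplectomorphism whose relevant norms are those of $\varphi(t,0)$. Comparing the definition of $\Lambda^{t}$ in Theorem~\ref{thm:exponential-formula} with that of $\Lambda[T]$ in Proposition~\ref{pro:estimate_Lambda_T}, one sees that $\Lambda^{t}=\Lambda[T]$ for $T=-i\varphi(0,t)i=\varphi(t,0)^{*}=L^{*}(t,0)+A^{*}(t,0)$: its $\mathbb{C}$-antilinear part $A_{T}=A^{*}(t,0)$ satisfies $A_{T}A_{T}^{*}=A^{*}(t,0)A(t,0)$ and $L_{T}A_{T}^{*}=L^{*}(t,0)A(t,0)$, which is exactly the data entering $\Lambda^{t}$. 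Since passing to the adjoint preserves the operator norm of a $\mathbb{C}$-linear operator and the $\mathcal{L}_{2}^{a}(\mathcal{Z})$-norm of a $\mathbb{C}$-antilinear one (the positive operators $A_{T}A_{T}^{*}$ and $A_{T}^{*}A_{T}$ having the same trace norm), this gives $\left\Vert T\right\Vert _{\mathcal{X}(\mathcal{Z})}=\left\Vert \varphi(t,0)\right\Vert _{\mathcal{X}(\mathcal{Z})}$ and $\left\Vert A_{T}\right\Vert _{\mathcal{L}_{2}^{a}(\mathcal{Z})}=\left\Vert A(t,0)\right\Vert _{\mathcal{L}_{2}^{a}(\mathcal{Z})}$, so that Proposition~\ref{pro:estimate_Lambda_T}, iterated, yields for $c\in\mathcal{P}_{m'}(\mathcal{Z})$
\[
\left\Vert (\Lambda^{t})^{k}c\right\Vert _{\mathcal{P}_{m'-2k}(\mathcal{Z})}\leq\left(2\left\Vert \varphi(t,0)\right\Vert _{\mathcal{X}(\mathcal{Z})}\left\Vert A(t,0)\right\Vert _{\mathcal{L}_{2}^{a}(\mathcal{Z})}\right)^{k}\left\Vert c\right\Vert _{\mathcal{P}_{m'}(\mathcal{Z})}\,,
\]
a quantity which moreover vanishes as soon as $2k>m'$.

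Next I would reduce to a homogeneous $b\in\mathcal{P}_{m'}(\mathcal{Z})$ with $m'\leq m$, the general case following by summing over the homogeneous components according to the definition of $\left\Vert \cdot\right\Vert _{\mathcal{P}(\mathcal{Z})}$. For such a $b$, Proposition~\ref{pro:estimate_comp_classical_flow} gives $b\circ\varphi(t,0)\in\mathcal{P}_{m'}(\mathcal{Z})$ with $\left\Vert b\circ\varphi(t,0)\right\Vert _{\mathcal{P}_{m'}(\mathcal{Z})}\leq\left\Vert \varphi(t,0)\right\Vert _{\mathcal{X}(\mathcal{Z})}^{m'}\left\Vert b\right\Vert _{\mathcal{P}_{m'}(\mathcal{Z})}$. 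Writing the exponential as the finite sum $e^{\frac{\varepsilon}{2}\Lambda^{t}}(b\circ\varphi(t,0))=\sum_{k=0}^{\lfloor m'/2\rfloor}\frac{1}{k!}(\frac{\varepsilon}{2}\Lambda^{t})^{k}(b\circ\varphi(t,0))$, the triangle inequality for $\left\Vert \cdot\right\Vert _{\mathcal{P}(\mathcal{Z})}$, the bound of the previous step, and the identity $(\varepsilon/2)^{k}2^{k}=\varepsilon^{k}$ give
\[
\left\Vert e^{\frac{\varepsilon}{2}\Lambda^{t}}(b\circ\varphi(t,0))\right\Vert _{\mathcal{P}(\mathcal{Z})}\leq\left\Vert b\right\Vert _{\mathcal{P}_{m'}(\mathcal{Z})}\left\Vert \varphi(t,0)\right\Vert _{\mathcal{X}(\mathcal{Z})}^{m'}\sum_{k=0}^{\lfloor m'/2\rfloor}\frac{1}{k!}\left(\varepsilon\left\Vert \varphi(t,0)\right\Vert _{\mathcal{X}(\mathcal{Z})}\left\Vert A(t,0)\right\Vert _{\mathcal{L}_{2}^{a}(\mathcal{Z})}\right)^{k}\,.
\]
To match the stated inequality I would then use that $\varphi(t,0)$, being a symplectomorphism, has $\mathbb{C}$-linear part bounded below by $1$ (see Appendix~\ref{sec:symplectic-transformations}), hence $\left\Vert \varphi(t,0)\right\Vert _{\mathcal{X}(\mathcal{Z})}\geq1$; this allows replacing the exponent $m'$ by $m$ and extending the sum of nonnegative terms up to $k=m$, after which summing over the homogeneous components of a general $b\in\mathcal{P}_{\leq m}(\mathcal{Z})$ finishes the proof.

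I do not expect any real analytic obstacle here: the statement is bookkeeping layered on Propositions~\ref{pro:estimate_comp_classical_flow} and~\ref{pro:estimate_Lambda_T} and the finiteness of the exponential expansion. The one step deserving attention is the identification above, namely checking that $\Lambda^{t}=\Lambda[T]$ for $T=\varphi(t,0)^{*}$ (and not for $\varphi(0,t)$ or some inverse), so that $\left\Vert T\right\Vert _{\mathcal{X}(\mathcal{Z})}$ and $\left\Vert A_{T}\right\Vert _{\mathcal{L}_{2}^{a}(\mathcal{Z})}$ are genuinely the norms of the \emph{forward} flow $\varphi(t,0)$ and of its antilinear part $A(t,0)$, together with the elementary remark $\left\Vert \varphi(t,0)\right\Vert _{\mathcal{X}(\mathcal{Z})}\geq1$ that lets the lower-degree pieces of $b$ be absorbed into the uniform bound written with exponent $m$ and sum up to $m$.
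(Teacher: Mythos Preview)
Your proof is correct and follows exactly the approach of the paper, whose proof is the single sentence ``It is enough to combine Propositions~\ref{pro:estimate_comp_classical_flow} and~\ref{pro:estimate_Lambda_T}.'' You have simply made explicit the bookkeeping the paper leaves implicit: the identification $\Lambda^{t}=\Lambda[T]$ for $T=-i\varphi(0,t)i=L^{*}(t,0)+A^{*}(t,0)$, the invariance of the relevant norms under taking adjoints, the iteration of Proposition~\ref{pro:estimate_Lambda_T}, and the use of $\left\Vert \varphi(t,0)\right\Vert_{\mathcal{X}(\mathcal{Z})}\geq1$ (the Remark following the Proposition) to absorb lower-degree components into the uniform bound with exponent $m$.
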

\begin{proof}
It is enough to combine the propositions~\ref{pro:estimate_comp_classical_flow}
and~\ref{pro:estimate_Lambda_T}.\end{proof}
\begin{rem}
The norm~$\left\Vert \varphi\left(t,0\right)\right\Vert _{\mathcal{X}\left(\mathcal{Z}\right)}$
is bigger than~$1$ as for any symplectic transformation~$T=L+A$
with~$L$ $\mathbb{C}$-linear and~$A$ $\mathbb{C}$-antilinear,
$L^{*}L=I_{\mathcal{Z}}+A^{*}A\geq I_{\mathcal{Z}}$ (see proposition~\ref{pro:caracterisations_symplectique})
and thus~$\left\Vert T\right\Vert _{\mathcal{X}\left(\mathcal{Z}\right)}\geq\left\Vert L\right\Vert _{\mathcal{L}\left(\mathcal{Z}\right)}\geq1$.
\end{rem}
\appendix

\section{$\mathbb{R}$-linear symplectic transformations\label{sec:symplectic-transformations}}

In this part we adapt and recall some results of~\cite{MR628382}
to fit our needs.

Let~$\left(\mathcal{Z},\left\langle \cdot,\cdot\right\rangle \right)$
be a separable Hilbert space over the complex numbers field~$\mathbb{C}$.
The scalar products is linear with respect to the right variable and
antilinear with respect to the left variable. We note~$\mbox{Aut}_{\mathbb{R}}\left(\mathcal{Z}\right)$
the group of $\mathbb{R}$-linear continuous automorphisms on~$\mathcal{Z}$.
We define a symplectic form~$\sigma$ on~$\mathcal{Z}$ by\[
\sigma\left(z_{1},z_{2}\right):=\Im\left\langle z_{1},z_{2}\right\rangle \,.\]

\begin{defn}
A $\mathbb{R}$-linear automorphism~$T$ is a \emph{symplectomorphism}
if it preserves the symplectic form, i.e. if\[
\forall z_{1},z_{2}\in\mathcal{Z},\quad\sigma\left(Tz_{1},Tz_{2}\right)=\sigma\left(z_{1},z_{2}\right)\,.\]
We note~$\mbox{Sp}_{\mathbb{R}}\left(\mathcal{Z}\right)$ the set
of symplectic transformations over the Hilbert space~$\mathcal{Z}$.
It is a subgroup of~$\mbox{Aut}_{\mathbb{R}}\left(\mathcal{Z}\right)$.\end{defn}
\begin{prop}
\label{pro:decomp_symplecto1}A $\mathbb{R}$-linear application~$T:\mathcal{Z}\to\mathcal{Z}$
can be written as a sum of two applications respectively $\mathbb{C}$-linear
and $\mathbb{C}$-antilinear in a unique way : \[
T=\frac{T-iTi}{2}+\frac{T+iTi}{2}\,.\]

\end{prop}

\begin{defn}
Let~$A$ be a (bounded) $\mathbb{C}$-antilinear operator on the
Hilbert space~$\mathcal{Z}$. We define its \emph{adjoint}~$A^{*}$
as the only antilinear operator such that\[
\forall z_{1},z_{2}\in\mathcal{Z},\quad\left\langle z_{1},Az_{2}\right\rangle =\left\langle z_{2},A^{*}z_{1}\right\rangle \,.\]
 Let~$T=L+A=\mathcal{Z}\to\mathcal{Z}$ be a $\mathbb{R}$-linear
application with~$L$ $\mathbb{C}$-linear and~$A$ $\mathbb{C}$-antilinear.
The \emph{adjoint}~$T^{*}$ of~$T$ is defined by~$T^{*}=L^{*}+A^{*}$.\end{defn}
\begin{prop}
\label{pro:caracterisations_symplectique}Let~$T=L+A$ be a $\mathbb{R}$-linear
automorphism with~$L$ $\mathbb{C}$-linear and~$A$ $\mathbb{C}$-antilinear,
then the following conditions are equivalent.
\begin{enumerate}
\item $L+A$ is a symplectomorphism.
\item $\left(L^{*}-A^{*}\right)\left(L+A\right)=I_{\mathcal{Z}}$.
\item $\left(L^{*}+A^{*}\right)\left(L-A\right)=I_{\mathcal{Z}}$.
\item $L^{*}L-A^{*}A=I_{\mathcal{Z}}$ and $L^{*}A=A^{*}L$.\label{enu:characterzations-symplec-4}
\item $L^{*}-A^{*}$ is a symplectomorphism.
\item $L-A$ is a symplectomorphism. 
\item $LL^{*}-AA^{*}=I_{\mathcal{Z}}$ and $A^{*}L=L^{*}A$.
\end{enumerate}
\end{prop}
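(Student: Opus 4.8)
The plan is to verify the seven conditions are equivalent by establishing a cycle of implications together with a few cross-links, all of which reduce to elementary manipulations once one has the right bookkeeping for the $\mathbb{C}$-linear/$\mathbb{C}$-antilinear decomposition and the adjoint of an antilinear operator. The central computational fact I would establish first is the relation between the symplectic form and the decomposition $T=L+A$: writing $\sigma(z_1,z_2)=\Im\langle z_1,z_2\rangle=\frac{1}{2i}(\langle z_1,z_2\rangle-\langle z_2,z_1\rangle)$ and using that $\langle z_1,Az_2\rangle=\langle z_2,A^*z_1\rangle$ (hence $\langle Az_1,z_2\rangle=\overline{\langle z_2,Az_1\rangle}=\overline{\langle z_1,A^*z_2\rangle}=\langle A^*z_2,z_1\rangle$), one computes $\langle Tz_1,Tz_2\rangle=\langle (L^*L+A^*A)z_1,z_2\rangle+\langle z_1,L^*Az_2\rangle+\langle A^*Lz_1,z_2\rangle$ modulo sorting which terms are linear and which antilinear in each slot. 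Taking the imaginary part of $\langle Tz_1,Tz_2\rangle-\langle z_1,z_2\rangle$ and polarizing (replacing $z_1$ by $iz_1$) separates the $\mathbb{C}$-linear identity from the $\mathbb{C}$-antilinear one, which is exactly how (1) becomes equivalent to (4): the condition $\sigma(Tz_1,Tz_2)=\sigma(z_1,z_2)$ for all $z_1,z_2$ is equivalent to $L^*L-A^*A=I_{\mathcal{Z}}$ together with $L^*A=A^*L$.

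Once (1)$\Leftrightarrow$(4) is in hand, I would note that $(L^*-A^*)(L+A)=L^*L-A^*A+L^*A-A^*A$ — careful: $(L^*-A^*)(L+A)=L^*L+L^*A-A^*L-A^*A$, where $L^*L-A^*A$ is the $\mathbb{C}$-linear part and $L^*A-A^*L$ the $\mathbb{C}$-antilinear part of the product. Thus $(L^*-A^*)(L+A)=I_{\mathcal{Z}}$ is, by uniqueness of the decomposition (Proposition~\ref{pro:decomp_symplecto1}), equivalent to the conjunction $L^*L-A^*A=I_{\mathcal{Z}}$ and $L^*A=A^*L$, i.e. to (4); this gives (2)$\Leftrightarrow$(4). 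The implication (2)$\Rightarrow$(3): since $T=L+A$ is an $\mathbb{R}$-linear automorphism and (2) says $T^*\cdot T=I$ with $T^*=L^*-A^*\cdot i\cdot(\dots)$ — more precisely $(L^*-A^*)$ is $-iT^*i$ type; the clean way is: from (2), $L^*-A^*$ is a left inverse of $L+A$, and since $L+A$ is bijective it is the two-sided inverse, so also $(L+A)(L^*-A^*)=I_{\mathcal{Z}}$, whose $\mathbb{C}$-linear and $\mathbb{C}$-antilinear parts read $LL^*-AA^*=I_{\mathcal{Z}}$ and $AL^*=LA^*$ (equivalently $A^*L=L^*A$ after taking adjoints), giving (7); and replacing the pair $(L,A)$ by $(L,-A)$ — which corresponds to passing from $T$ to $L-A$ — turns (2) into $(L^*+A^*)(L-A)=I_{\mathcal{Z}}$, which is (3). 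So (2), (3), (7) all follow from one another by these sign flips, and the same flips show (4)$\Leftrightarrow$(7) directly. For (5) and (6): condition (5) says $L^*-A^*$ is a symplectomorphism; applying the already-proven equivalence (1)$\Leftrightarrow$(2) to the operator $L^*-A^*$ (whose $\mathbb{C}$-linear part is $L^*$ and $\mathbb{C}$-antilinear part is $-A^*$) says this holds iff $((L^*)^*-(-A^*)^*)(L^*-A^*)=I$, i.e. $(L+A)(L^*-A^*)=I_{\mathcal{Z}}$, which we identified above as equivalent to (2); hence (2)$\Leftrightarrow$(5), and (6) is the $(L,A)\mapsto(L,-A)$ flip of (5), equivalent to (3). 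Assembling: (1)$\Leftrightarrow$(4)$\Leftrightarrow$(2)$\Leftrightarrow$(3)$\Leftrightarrow$(5)$\Leftrightarrow$(6)$\Leftrightarrow$(7).

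I would organize the write-up as: (a) the polarization lemma giving (1)$\Leftrightarrow$(4); (b) the uniqueness-of-decomposition observation giving (2)$\Leftrightarrow$(4) and (3)$\Leftrightarrow$(4) and (7)$\Leftrightarrow$(4); (c) the remark that a left inverse of a bijection is a two-sided inverse, upgrading (2)$\Rightarrow$(7) and closing (2)$\Leftrightarrow$(7); (d) the substitution $(L,A)\mapsto(-iT^*i$-style flips$)$, concretely $A\mapsto -A$, relating the barred/unbarred and plus/minus versions, which folds in (5) and (6). The one genuine piece of content is step (a); everything else is formal bookkeeping with the decomposition and with adjoints of antilinear maps.

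\textbf{Main obstacle.} The only place that requires care rather than routine algebra is the polarization in step (a): one must correctly track which of the four terms $L^*L$, $A^*A$, $L^*A$, $A^*L$ is $\mathbb{C}$-linear and which is $\mathbb{C}$-antilinear (a product of two antilinear maps is linear, a product of one of each is antilinear), and correctly use $\langle z_1,Az_2\rangle=\langle z_2,A^*z_1\rangle$ — note the swap of arguments — when taking imaginary parts, since $\Im$ does not commute with the antilinear conjugations in the naive way. Getting the sign of the cross-term condition right ($L^*A=A^*L$, not $L^*A=-A^*L$) hinges on this. After that, the substitution bookkeeping for (5), (6) and the barred versions is mechanical once one records that $T\mapsto L^*-A^*$ swaps the roles of $\{(1),(2)\}$ with $\{(5),(2)\}$ and that $A\mapsto-A$ toggles between the unprimed and primed (plus/minus) statements.
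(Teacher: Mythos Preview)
Your proposal is correct and follows essentially the same strategy as the paper: the polarization step (replace $z_1$ by $iz_1$ to upgrade an imaginary-part identity to a full identity) and the observation that a one-sided inverse of a bijection is two-sided are exactly the two ingredients the paper uses. The only organizational difference is that the paper first proves $(1)\Leftrightarrow(2)$ by writing $\sigma(Tz_1,Tz_2)=\Im\langle z_1,(L^*-A^*)(L+A)z_2\rangle$ and then obtains $(4)$ from $(2)\wedge(3)$ by sum and difference, whereas you go straight to $(1)\Leftrightarrow(4)$ and read off $(2),(3)$ via uniqueness of the linear/antilinear decomposition; these are interchangeable.

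One small slip: from $(L+A)(L^*-A^*)=I_{\mathcal{Z}}$ the antilinear part is $AL^*-LA^*=0$, i.e.\ $AL^*=LA^*$, but this is \emph{not} obtained from $A^*L=L^*A$ ``after taking adjoints'' --- taking the antilinear adjoint of $AL^*=LA^*$ just returns the same equation. The condition you actually derive, $LL^*-AA^*=I_{\mathcal{Z}}$ together with $AL^*=LA^*$, is the correct companion to $(5)$ (apply your $(1)\Leftrightarrow(4)$ to the operator $L^*-A^*$), and the cycle of equivalences closes regardless; just drop the parenthetical about adjoints when you write it up.
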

\begin{proof}
$\left(1\right)\Leftrightarrow\left(2\right)$ Let~$T=L+A$ a symplectomorphism,
for all~$z_{1},z_{2}\in\mathcal{Z}$,\begin{eqnarray*}
\sigma\left(z_{1},z_{2}\right)=\Im\left\langle z_{1},z_{2}\right\rangle  & = & \Im\left\langle \left(L+A\right)z_{1},Tz_{2}\right\rangle \\
 & = & \Im\left(\left\langle z_{1},L^{*}Tz_{2}\right\rangle +\overline{\left\langle z_{1},A^{*}Tz_{2}\right\rangle }\right)\\
 & = & \Im\left\langle z_{1},\left(L^{*}-A^{*}\right)Tz_{2}\right\rangle \,.\end{eqnarray*}
Replacing~$z_{1}$ by~$iz_{1}$ we get the same relation with a
real part instead of an imaginary part and finally\[
\left\langle z_{1},\left[\left(L^{*}-A^{*}\right)\left(L+A\right)-I_{\mathcal{Z}}\right]z_{2}\right\rangle =0\]
and this in turn implies~$\left(L^{*}-A^{*}\right)\left(L+A\right)=I_{\mathcal{Z}}$.
We can reverse the order of these calculations in order to obtain
the first equivalence.

$\left(2\right)\Leftrightarrow\left(3\right)$ The $\mathbb{C}$-linearity
and antilinearity properties of~$L$ and~$A$ give \[
\left(L^{*}-A^{*}\right)\left(L+A\right)i=i\left(L^{*}+A^{*}\right)\left(L-A\right)\]
so that we get the equivalent condition~$\left(3\right)$. 

$\left(\left(2\right)\,\mbox{and}\,\left(3\right)\right)\Leftrightarrow\left(4\right)$
The sum and the difference of the equations of~$\left(2\right)$
and~$\left(3\right)$ give~$\left(4\right)$ and the sum and difference
of the equations in~$\left(4\right)$ give~$\left(2\right)$ and~$\left(3\right)$.

$\left(1\right)\Leftrightarrow\left(5\right)$ From~$\left(1\right)\,\mbox{and}\,\left(3\right)$
we know that the inverse of a symplectomorphism~$T=L+A$ is $T^{-1}=L^{*}-A^{*}$
which is necessarily a symplectomorphism too, and thus~$ $$\left(1\right)\Rightarrow\left(5\right)$.
We get~$\left(5\right)\Rightarrow\left(1\right)$ exchanging~$T$
and~$T^{-1}$.

$\left(1\right)\Leftrightarrow\left(6\right)\Leftrightarrow\left(7\right)$
is easily deduced from the previous equivalences.\end{proof}
\begin{prop}
Let~$T=L+A$ be a symplectomorphism with~$L$ $\mathbb{C}$-linear
and~$A$ $\mathbb{C}$-antilinear, then~$L$ is invertible.\end{prop}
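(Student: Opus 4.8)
The plan is to derive everything from the algebraic characterizations of symplectomorphisms in Proposition~\ref{pro:caracterisations_symplectique}. Since $T=L+A$ is a symplectomorphism, item~(\ref{enu:characterzations-symplec-4}) gives $L^{*}L=I_{\mathcal{Z}}+A^{*}A$, and the seventh item of the same proposition gives $LL^{*}=I_{\mathcal{Z}}+AA^{*}$. Although $A$ is $\mathbb{C}$-antilinear, both $A^{*}A$ and $AA^{*}$ are $\mathbb{C}$-linear, bounded and positive operators, so these two identities are genuine identities in $\mathcal{L}(\mathcal{Z})$, and moreover $\langle z,A^{*}Az\rangle=\langle Az,Az\rangle=\|Az\|^{2}\geq0$ by the definition of the antilinear adjoint recalled before Proposition~\ref{pro:caracterisations_symplectique}.

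First I would show that $L$ is bounded below. For any $z\in\mathcal{Z}$ one has $\|Lz\|^{2}=\langle z,L^{*}Lz\rangle=\|z\|^{2}+\langle z,A^{*}Az\rangle=\|z\|^{2}+\|Az\|^{2}\geq\|z\|^{2}$. Hence $L$ is injective and its range is closed. Applying the identical argument to $LL^{*}=I_{\mathcal{Z}}+AA^{*}$ shows that $L^{*}$ is also bounded below, hence injective; therefore the range of $L$ is dense in $\mathcal{Z}$. A subspace which is simultaneously closed and dense equals the whole space, so $L$ is surjective.

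Consequently $L$ is a continuous linear bijection of $\mathcal{Z}$, and by the open mapping theorem (equivalently, since $\|Lz\|\geq\|z\|$ forces $\|L^{-1}w\|\leq\|w\|$) its inverse is bounded, so $L$ is invertible in $\mathcal{L}(\mathcal{Z})$. I do not expect any real obstacle here: the argument is short, and the only point deserving a little care is the bookkeeping with $\mathbb{C}$-antilinear operators and their adjoints, namely checking that $A^{*}A$ is linear positive with $\langle z,A^{*}Az\rangle=\|Az\|^{2}$, which is immediate from the definition of the antilinear adjoint.
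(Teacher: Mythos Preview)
Your proof is correct and follows essentially the same line as the paper's: both use Proposition~\ref{pro:caracterisations_symplectique} to obtain $L^{*}L\geq I_{\mathcal{Z}}$ and $LL^{*}\geq I_{\mathcal{Z}}$, deduce that $L$ and $L^{*}$ are injective with $L$ having closed range, and conclude surjectivity from density of the range. Your version is slightly more explicit about why $A^{*}A$ is positive, but the argument is the same.
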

\begin{proof}
From Proposition~\ref{pro:caracterisations_symplectique} we get
\[
L^{*}L=I_{\mathcal{Z}}+A^{*}A\geq I_{\mathcal{Z}}\quad\mbox{and}\quad LL^{*}=I_{\mathcal{Z}}+AA^{*}\geq I_{\mathcal{Z}}\]
 and $ $thus~$L$ and~$L^{*}$ are injective. From the injectivity
of~$L^{*}$ we get~$\overline{\mbox{Ran}L}=\left(\mbox{Ker}L^{*}\right)^{\perp}=\left\{ 0\right\} ^{\perp}=\mathcal{Z}$. 

It is now enough to show that the range of~$L$ is closed. Pick a
vector~$y\in\mathcal{Z}$, there is a sequence~$\left(x_{n}\right)\in\mathcal{Z}^{\mathbb{N}}$
such that~$Lx_{n}\rightarrow y$. The relation~$L^{*}L\geq I_{\mathcal{Z}}$
gives~$\left|Lx_{m}-Lx_{n}\right|\geq\left|x_{n}-x_{m}\right|$.
The left hand part of the inequality goes to~$0$ for~$m,n\to\infty$,
so that~$\left(x_{n}\right)$ is a Cauchy sequence and thus converges
to a limit~$x$. By continuity of~$L$, $Lx=y$ and~$L$ is indeed
one to one.\end{proof}
\begin{defn}
An application~$c$ from~$\mathcal{Z}$ to~$\mathcal{Z}$ is a
\emph{conjugation} if and only if it satisfies the following conditions.
\begin{enumerate}
\item $c$ si $\mathbb{R}$-linear.
\item $c^{2}=I_{\mathcal{Z}}$.
\item For all~$z_{1}$, $z_{2}$ in~$\mathcal{Z}$, $\left\langle cz_{1},z_{2}\right\rangle =\left\langle cz_{2},z_{1}\right\rangle $.
\end{enumerate}
\end{defn}
\begin{rem}
It follows from the third condition in this definition that a conjugation
is antilinear.
\end{rem}

One may define different conjugations on the same Hilbert space over~$\mathbb{C}$
(even for a one dimensional Hilbert space). As an example one can
consider a Hilbert basis~$\left(e_{j}\right)$ and define the application~$c:\sum_{j}\alpha_{j}e_{j}\mapsto\sum_{j}\overline{\alpha_{j}}e_{j}$.
\begin{defn}
\label{def:real-space-associated}Let~$c$ be a conjugation on the
Hilbert space~$\mathcal{Z}$. The \emph{real} and \emph{imaginary}
parts of a vector~$z\in\mathcal{Z}$ (with respect to the conjugation
$c$) are defined as\[
\Re z:=\frac{z+cz}{2}\quad\mbox{and}\quad\Im z:=\frac{z-cz}{2i}\,.\]
They verify~$z=\Re z+i\Im z$. The space~$E_{\mathbb{R}}^{c}:=\Re\mathcal{Z}=\Im\mathcal{Z}$
is a subspace of~$\mathcal{Z}$ as $\mathbb{R}$-vector space, $\left\langle \cdot,\cdot\right\rangle $
restricted to~$E_{\mathbb{R}}^{c}$ is a real scalar product and~$E=E_{\mathbb{R}}^{c}\oplus iE_{\mathbb{R}}^{c}$.

Let~$f$ be a $\mathbb{R}$-linear application on~$\mathcal{Z}$,
then we can define the applications from~$E_{\mathbb{R}}^{c}$ to
itself\[
\begin{array}{cc}
\alpha:z\mapsto\Re f\left(z\right)\,,\quad & \gamma:z\mapsto\Re f\left(iz\right)\,,\\
\beta:z\mapsto\Im f\left(z\right)\,,\quad & \delta:z\mapsto\Im f\left(iz\right)\,.\end{array}\]
Then, if~$a,\, b\in E_{\mathbb{R}}^{c}$, then~$f\left(a+ib\right)=\alpha\left(a\right)+i\beta\left(a\right)+\gamma\left(ib\right)+i\delta\left(ib\right)$,
and~$f$ can be represented as an application on~$E_{\mathbb{R}}^{c}\times E_{\mathbb{R}}^{c}$
by the matrix\[
\left(\begin{array}{cc}
\alpha & \gamma\\
\beta & \delta\end{array}\right)\,.\]
The following relations hold with the above sign if~$f$ is $\mathbb{C}$-linear
and with the below sign if~$f$ is $\mathbb{C}$-antilinear: $\beta=\mp\gamma$
and~$\alpha=\pm\delta$ and~$f^{*}$ is represented by the matrix~$\left(\begin{smallmatrix}\alpha^{T} & \mp\beta^{T}\\
\beta^{T} & \pm\alpha^{T}\end{smallmatrix}\right)$.
\end{defn}
We want to show a reduction result for the symplectomorphisms in the
spirit of the polar decomposition, in the case of an implementable
symplectomorphism (see Definition~\ref{def:implementable-symplectomorphism}
and Theorem~\ref{thm:Shale}).
\begin{thm}
\label{thm:decomp_symplecto}Let~$T$ be an implementable symplectomorphism.
Then%
\begin{comment}
Enoncé à reformuler, ne pas oublier le caractère $\mathcal{L}_{2}$
de B
\end{comment}
{}\[
T=ue^{c\rho}\]
where
\begin{itemize}
\item $u$ is a unitary operator,
\item $c$ is a conjugation,
\item $\rho$ is a Hilbert-Schmidt, self-adjoint, non-negative operator
commuting with~$c$.
\end{itemize}
\end{thm}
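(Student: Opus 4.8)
The plan is to mimic the polar decomposition, working with the real scalar product $\Re\langle\cdot,\cdot\rangle$ on $\mathcal Z$, for which the map $S\mapsto S^{*}$ of this appendix is precisely the transpose (one checks $\Re\langle z_{1},Sz_{2}\rangle=\Re\langle S^{*}z_{1},z_{2}\rangle$ directly from the definitions of $L^{*}$ and $A^{*}$). Since $T$ is a symplectomorphism it is invertible, and so is $T^{*}$ (the real symplectic group is stable under transpose: writing $J=i\,I_{\mathcal Z}$, the relation $T^{*}JT=J$ yields $TJT^{*}=J$). Hence $P:=T^{*}T$ is a bounded, $\Re\langle\cdot,\cdot\rangle$-selfadjoint, positive definite symplectomorphism, and $X:=\log P$ is a bounded $\Re\langle\cdot,\cdot\rangle$-selfadjoint operator. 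First I would show that $X$ is $\mathbb C$-antilinear and selfadjoint as an antilinear operator in the sense of this appendix: from $PJP=J$, $P=P^{*}$ and $J^{-1}=-J$ one gets $e^{X}=J^{-1}e^{-X}J=e^{-J^{-1}XJ}$, and since both $X$ and $-J^{-1}XJ$ are $\Re\langle\cdot,\cdot\rangle$-selfadjoint, uniqueness of the logarithm of a positive operator forces $X=-J^{-1}XJ$, i.e.\ $Xi=-iX$; a $\Re\langle\cdot,\cdot\rangle$-selfadjoint operator anticommuting with $i$ is $\mathbb C$-antilinear, and then a short computation gives $\langle z_{1},Xz_{2}\rangle=\langle z_{2},Xz_{1}\rangle$.

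Next I would use the polar decomposition of the antilinear selfadjoint operator $X$: one has $X=c|X|$ with $|X|:=(X^{2})^{1/2}$ a nonnegative $\mathbb C$-linear selfadjoint operator (note that $X^{2}$ is $\mathbb C$-linear and $\langle z,X^{2}z\rangle=|Xz|^{2}\geq0$) and $c$ a conjugation commuting with $|X|$ --- the partial isometry of the polar decomposition is antilinear, on $(\ker X)^{\perp}$ it is a conjugation since there $X$ is injective with $X^{2}=|X|^{2}$, on the complex subspace $\ker X$ one extends it by any conjugation, and $c|X|=|X|c$ follows from $X=X^{*}$. Setting $\rho:=\tfrac12|X|$ gives a nonnegative $\mathbb C$-linear selfadjoint operator commuting with $c$ and such that $X=2c\rho$.

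It then remains to apply Shale's criterion and to exhibit the unitary factor. As $X$ is $\mathbb C$-antilinear, $X^{2k}$ is $\mathbb C$-linear and $X^{2k+1}$ is $\mathbb C$-antilinear, so the $\mathbb C$-linear part of $P-I_{\mathcal Z}=e^{X}-I_{\mathcal Z}$ is $\sum_{k\geq1}X^{2k}/(2k)!$; since $X^{2}\geq0$ this operator lies between $\tfrac12X^{2}$ and a constant multiple of $X^{2}$, hence is trace class if and only if $X^{2}=4\rho^{2}$ is, that is, if and only if $\rho$ is Hilbert--Schmidt. By Theorem~\ref{thm:Shale} this trace-class condition is precisely implementability of $T$, so $\rho$ is Hilbert--Schmidt. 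Finally $[c,\rho]=0$ and $(c\rho)^{2}=\rho^{2}$ give $e^{c\rho}=\cosh\rho+c\sinh\rho$, which is $\Re\langle\cdot,\cdot\rangle$-selfadjoint and satisfies $e^{c\rho}e^{c\rho}=\cosh(2\rho)+c\sinh(2\rho)=e^{2c\rho}=e^{X}=P$; hence $u:=Te^{-c\rho}$ obeys $u^{*}u=e^{-c\rho}Pe^{-c\rho}=I_{\mathcal Z}$ and, being a product of invertible operators, is unitary, so $T=ue^{c\rho}$ is the desired decomposition.

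The steps I expect to be delicate are (i) showing that $X=\log P$ anticommutes with $i$, hence is $\mathbb C$-antilinear --- the ``logarithm lies in $\mathfrak{sp}$'' lemma, together with converting $\Re\langle\cdot,\cdot\rangle$-selfadjointness into antilinear selfadjointness --- and (ii) the polar decomposition of an antilinear selfadjoint operator together with the Hilbert--Schmidt bookkeeping through Shale's theorem, which is exactly the place where the hypothesis that $T$ be implementable is used.
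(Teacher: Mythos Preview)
Your argument is correct and takes a genuinely different route from the paper. The paper first polar--decomposes the $\mathbb C$-linear part, $L=u|L|$, reduces to the case $L=L^{*}\geq0$, and then proves a finite-dimensional diagonalization lemma for antilinear involutions (Lemma~\ref{lem:antiliear-application-reduction}) which, applied on each eigenspace of $L$, simultaneously diagonalizes $L$ and $A$ and produces the conjugation $c$ and $\rho=\arg\cosh|L|$. Your approach instead performs the \emph{real} polar decomposition $T=u\,(T^{*}T)^{1/2}$ directly, takes $X=\log(T^{*}T)$, and uses the ``$\log$ of a symplectic positive operator is infinitesimally symplectic'' observation $JXJ^{-1}=-X$ to see that $X$ is $\mathbb C$-antilinear; the conjugation then comes from the polar decomposition of the antilinear selfadjoint $X$. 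This is more streamlined (no finite-dimensional reduction, no block-by-block analysis) and makes the Hilbert--Schmidt bookkeeping via Shale's theorem very transparent; the paper's approach, on the other hand, gives the explicit identifications $u=$ phase of $L$, $\rho=\arg\cosh|L|=\arg\sinh(Ac)$, which are useful later.

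One point you should make explicit: after showing $u^{*}u=I_{\mathcal Z}$ you conclude ``hence $u$ is unitary'', but so far $u$ is only $\mathbb R$-linear. You need $\mathbb C$-linearity. This follows immediately because $u=Te^{-c\rho}$ is a product of symplectomorphisms, hence symplectic, and a real-orthogonal symplectomorphism commutes with $J=i$: from $u^{*}Ju=J$ and $u^{*}u=I$ one gets $Ju=uJ$. Adding this one line closes the argument.
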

\begin{rem}
The operator~$u$ is the unitary operator of the polar decomposition~$L=u\left|L\right|$
of the $\mathbb{C}$-linear part of~$T$. The conjugation~$c$ is
a specific conjugation associated with~$L$ and will be constructed
during the proof and~$\rho=\arg\cos\left|L\right|$.
\end{rem}

\begin{proof}
Let us write~$T=L+A$ with~$L$ $\mathbb{C}$-linear and~$A$ $\mathbb{C}$-antilinear.
With~$L=u\left|L\right|$ the polar decomposition of~$L$ we get~$T=u\left(\left|L\right|+u^{*}A\right)$
so that it is enough to show the two next lemmas.\end{proof}
\begin{lem}
\label{lem:antiliear-application-reduction}Let~$\left(E,\left\langle \cdot,\cdot\right\rangle \right)$
be a finite-dimensional Hilbert space over~$\mathbb{C}$. Let~$f:E\to E$
be a $\mathbb{C}$-antilinear application such that\[
ff^{*}=I_{E}\qquad\mbox{and}\qquad f=f^{*}\,.\]
Then there exists an orthonormal basis~$\left(u_{j}\right)$ of~$E$
such that\[
\forall j,\, f\left(u_{j}\right)=u_{j}\,.\]
\end{lem}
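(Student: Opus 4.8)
The plan is to observe that the hypotheses on $f$ say exactly that $f$ is a conjugation in the sense introduced just before Definition~\ref{def:real-space-associated}, and then to exploit the real structure a conjugation induces on $E$. Indeed, $f$ is $\mathbb{C}$-antilinear, hence $\mathbb{R}$-linear; from $f=f^{*}$ and $ff^{*}=I_{E}$ one gets $f^{2}=I_{E}$; and conjugating the identity $\langle z_{1},fz_{2}\rangle=\langle z_{2},fz_{1}\rangle$ (which is what $f=f^{*}$ means for an antilinear operator) yields $\langle fz_{1},z_{2}\rangle=\langle fz_{2},z_{1}\rangle$. So $f$ satisfies the three defining properties of a conjugation.

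Next I would introduce the fixed-point set $E_{\mathbb{R}}:=\{z\in E:\ fz=z\}$. By Definition~\ref{def:real-space-associated} applied with $c=f$, we have $E_{\mathbb{R}}=\Re E=\Im E$, the $\mathbb{R}$-vector space decomposition $E=E_{\mathbb{R}}\oplus iE_{\mathbb{R}}$, and $\langle\cdot,\cdot\rangle$ restricts to a real scalar product on $E_{\mathbb{R}}$. The one computation worth isolating is that this restriction is real-valued: for $a,b\in E_{\mathbb{R}}$ one has $\langle a,b\rangle=\langle fa,fb\rangle=\overline{\langle a,b\rangle}$, where the first step uses $fa=a$, $fb=b$, and the second uses the general identity $\langle fz_{1},fz_{2}\rangle=\langle z_{2},f(fz_{1})\rangle=\langle z_{2},z_{1}\rangle=\overline{\langle z_{1},z_{2}\rangle}$, valid because $f=f^{*}$ and $f^{2}=I_{E}$.

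Since $E$ is finite-dimensional, $(E_{\mathbb{R}},\langle\cdot,\cdot\rangle)$ is a finite-dimensional real inner product space, so I would pick an orthonormal basis $(u_{j})$ of it. Each $u_{j}$ is fixed by $f$ by construction, so the only remaining point is that $(u_{j})$ is also an orthonormal basis of $E$ over $\mathbb{C}$. Orthonormality $\langle u_{j},u_{k}\rangle=\delta_{jk}$ is immediate, these real inner products coinciding with the complex ones; the family is $\mathbb{C}$-linearly independent because a relation $\sum_{j}(a_{j}+ib_{j})u_{j}=0$ with $a_{j},b_{j}\in\mathbb{R}$ forces $\sum_{j}a_{j}u_{j}=0$ and $\sum_{j}b_{j}u_{j}=0$ by the directness of $E=E_{\mathbb{R}}\oplus iE_{\mathbb{R}}$, hence $a_{j}=b_{j}=0$; and it spans $E$ over $\mathbb{C}$ because every $z\in E$ equals $\Re z+i\Im z$ with $\Re z,\Im z\in E_{\mathbb{R}}$. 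This produces the claimed basis.

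The argument is routine; the only place needing a second's care is the identity $\langle fz_{1},fz_{2}\rangle=\overline{\langle z_{1},z_{2}\rangle}$, since without the resulting real-valuedness of $\langle\cdot,\cdot\rangle$ on $E_{\mathbb{R}}$ a real-orthonormal basis would not be complex-orthonormal. Alternatively, one can bypass Definition~\ref{def:real-space-associated} and argue by induction on $\dim E$: if $E\neq\{0\}$, pick $z\neq0$ and note $z+fz$ is $f$-fixed, and if it vanishes then $fz=-z$, so $iz$ is $f$-fixed; normalize to obtain a unit $f$-fixed vector $u$ (legitimate since $f$ is $\mathbb{R}$-homogeneous); then $f$ preserves $u^{\perp}$ because $\langle w,u\rangle=0$ implies $\langle fw,u\rangle=\langle fw,fu\rangle=\overline{\langle w,u\rangle}=0$, so $f|_{u^{\perp}}$ again satisfies the hypotheses and one recurses.
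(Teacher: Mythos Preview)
Your proof is correct and notably cleaner than the paper's. The key observation you make---that $f$ itself is a conjugation, so one can take $c=f$ in Definition~\ref{def:real-space-associated} and read off the basis from the real fixed-point space $E_{\mathbb{R}}^{f}$---bypasses almost all of the computation in the paper's argument.

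The paper instead fixes an \emph{auxiliary} conjugation $c_{0}$, writes $f$ as a $2\times 2$ block matrix $\left(\begin{smallmatrix}\alpha & \beta\\ \beta & -\alpha\end{smallmatrix}\right)$ on $E_{\mathbb{R}}^{c_{0}}\times E_{\mathbb{R}}^{c_{0}}$, uses $f=f^{*}$ and $ff^{*}=I_{E}$ to get that $\alpha,\beta$ are commuting real symmetric matrices with $\alpha^{2}+\beta^{2}=\mathrm{Id}$, simultaneously diagonalizes them, and finally solves the resulting one-dimensional problem by an explicit rotation. This is more hands-on and shows concretely how the eigenvectors arise, but it requires a reduction step and a case computation. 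Your approach trades all of that for the single identity $\langle fz_{1},fz_{2}\rangle=\overline{\langle z_{1},z_{2}\rangle}$ and the existence of orthonormal bases in real inner product spaces; the inductive alternative you sketch is also valid and self-contained. Either of your arguments would serve as a drop-in replacement for the paper's proof.
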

\begin{proof}
Let us consider an arbitrary conjugation~$c_{0}$ on $E$ and the~$\left(\begin{smallmatrix}\alpha & \beta\\
\beta & -\alpha\end{smallmatrix}\right)$ {}``matrix'' of~$f$ (as a $\mathbb{R}$-linear operator) on $E=E_{\mathbb{R}}^{c_{0}}\oplus iE_{\mathbb{R}}^{c_{0}}$
identified with~$E_{\mathbb{R}}^{c_{0}}\times E_{\mathbb{R}}^{c_{0}}$.
The matrix associated to~$f^{*}$ is~$ $$\left(\begin{smallmatrix}\alpha^{T} & \beta^{T}\\
\beta^{T} & -\alpha^{T}\end{smallmatrix}\right)$ so that the relation~$f=f^{*}$ gives~$\alpha=\alpha^{T}$ and~$\beta=\beta^{T}$.
From~$ff^{*}=I_{E}$ we deduce~$\alpha^{2}+\beta^{2}=\mbox{Id}$
and~$\alpha\beta=\beta\alpha$. We can thus diagonalize simultaneously~$\alpha$
and~$\beta$, and so in a convenient basis of $E_{\mathbb{R}}^{c_{0}}$
the matrix of $f$ is of the form\[
\left(\begin{array}{ccc|ccc}
\ddots &  & 0 & \ddots &  & 0\\
 & \lambda_{j}^{\alpha} &  &  & \lambda_{j}^{\beta}\\
0 &  & \ddots & 0 &  & \ddots\\
\hline \ddots &  & 0 & \ddots &  & 0\\
 & \lambda_{j}^{\beta} &  &  & -\lambda_{j}^{\alpha}\\
0 &  & \ddots & 0 &  & \ddots\end{array}\right)\,.\]
We can thus confine ourself to the case of a space~$E$ of complex
dimension~$1$ and of~$f$ with a matrix of the form~$\left(\begin{smallmatrix}\alpha & \beta\\
\beta & -\alpha\end{smallmatrix}\right)$ with~$\alpha$ and~$\beta$ real numbers. We search a normalized
vector~$z=\left(\begin{smallmatrix}\cos\theta\\
\sin\theta\end{smallmatrix}\right)=\left(\begin{smallmatrix}x\\
y\end{smallmatrix}\right)$ and a real~$\lambda$ such that~$f\left(z\right)=\lambda z$, i.e.
\begin{eqnarray*}
\lambda\left(\begin{smallmatrix}x\\
y\end{smallmatrix}\right) & = & \left(\begin{smallmatrix}\phantom{-\mbox{}}\alpha x+\beta y\\
-\alpha y+\beta x\end{smallmatrix}\right)=\left(\begin{smallmatrix}x & y\\
-y & x\end{smallmatrix}\right)\left(\begin{smallmatrix}\alpha\\
\beta\end{smallmatrix}\right)\\
 & = & \sqrt{\alpha^{2}+\beta^{2}}\left(\begin{smallmatrix}\cos\theta & \sin\theta\\
-\sin\theta & \cos\theta\end{smallmatrix}\right)\left(\begin{smallmatrix}\cos\phi\\
\sin\phi\end{smallmatrix}\right)=\sqrt{\alpha^{2}+\beta^{2}}\left(\begin{smallmatrix}\cos\left(\phi-\theta\right)\\
\sin\left(\phi-\theta\right)\end{smallmatrix}\right)\end{eqnarray*}
 so that if we choose~$\theta$ such that~$\phi-\theta=\theta$
we get the desired result with~$\lambda=\sqrt{\alpha^{2}+\beta^{2}}$.
Finally, from~$ff^{*}=I_{E}$ we deduce that~$\lambda=1$ and the
result follows.\end{proof}
\begin{lem}
Let~$T=L+A$ be an implementable symplectomorphism with~$L$ $\mathbb{C}$-linear
self-adjoint and positive, $A$ $\mathbb{C}$-antilinear. 

Then~$L$ and~$A$ commute, there exist a conjugation~$c$ commuting
with~$L$ and~$A$ such that~$Ac$ is self-adjoint and non-negative
and\[
T=e^{c\rho}\]
with~$\rho=\arg\cosh L=\arg\sinh\left(Ac\right)$ a Hilbert-Schmidt,
non-negative and self-adjoint operator commuting with~$c$.\end{lem}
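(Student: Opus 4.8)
The plan is to wring algebraic information out of the symplectomorphism relations, then build the conjugation from an antilinear polar decomposition, and finally recognise the exponential series. \emph{First, the algebra.} Applying Proposition~\ref{pro:caracterisations_symplectique} to $T=L+A$ with $L=L^{*}$, items~(4) and~(7) read $L^{2}-A^{*}A=I_{\mathcal{Z}}$, $L^{2}-AA^{*}=I_{\mathcal{Z}}$ and $LA=A^{*}L$. Hence $A^{*}A=AA^{*}=L^{2}-I_{\mathcal{Z}}\geq0$, so $L\geq I_{\mathcal{Z}}$ is invertible, and by associativity $A(A^{*}A)=(AA^{*})A=(A^{*}A)A$, so $A$ commutes with $A^{*}A=L^{2}-I_{\mathcal{Z}}$. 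Real-polynomial functional calculus (legitimate for the antilinear $A$) then shows that $A$ commutes with $L=(I_{\mathcal{Z}}+A^{*}A)^{1/2}$ and with $|A|:=(A^{*}A)^{1/2}$. Feeding $AL=LA$ into $LA=A^{*}L$ and cancelling the invertible $L$ gives $A=A^{*}$; this proves the commutation statement, and in particular $T=T^{*}$.

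\emph{Next, the conjugation.} Put $\mathcal{K}=\overline{\operatorname{Ran}|A|}=(\ker A)^{\perp}$; because $A=A^{*}$ one has $\ker A=\ker A^{*}=\ker|A|$, and both $\mathcal{K}$ and $\ker A$ reduce $L$, $A$ and $|A|$, with $L=I_{\mathcal{Z}}$ and $A=0$ on $\ker A$. On $\operatorname{Ran}|A|$ define $c(|A|v)=Av$; since $\|Av\|^{2}=\langle v,A^{*}Av\rangle=\||A|v\|^{2}$, this is a well-defined antilinear isometry, extended by continuity to a surjective isometry of $\mathcal{K}$, so $c^{*}c=I_{\mathcal{K}}$. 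Because $A$ and $|A|$ commute, $A=c|A|=|A|c$ on $\mathcal{K}$; comparing with $A=A^{*}=(c|A|)^{*}=|A|c^{*}$ and using injectivity of $|A|$ on $\mathcal{K}$ gives $c=c^{*}$ there, hence $c^{2}=c^{*}c=I_{\mathcal{K}}$ and $c|_{\mathcal{K}}$ is a conjugation. Extending by an arbitrary conjugation of $\ker A$ produces a conjugation $c$ of $\mathcal{Z}$ commuting with $L$ and $A$, for which $Ac=|A|\geq0$ is self-adjoint and commutes with $L$ and $A$.

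\emph{Finally, the exponential.} Shale's Theorem~\ref{thm:Shale} applied to the self-adjoint symplectomorphism $T$ says the $\mathbb{C}$-linear part of $T^{2}-I_{\mathcal{Z}}=2A^{2}+2LA$, namely $2A^{2}$, is trace class, so $A^{2}=A^{*}A\in\mathcal{L}_{1}(\mathcal{Z})$ and $A\in\mathcal{L}_{2}^{a}(\mathcal{Z})$. Set $\rho:=\arg\cosh L=\arg\sinh|A|$: it is $\mathbb{C}$-linear, self-adjoint, non-negative, commutes with $c$ (being a function of $L$), is Hilbert--Schmidt because $\arg\sinh x\leq x$ and $|A|$ is Hilbert--Schmidt, and satisfies $\arg\sinh(Ac)=\arg\sinh|A|=\rho$. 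Since $[c,\rho]=0$ one has $(c\rho)^{2k}=\rho^{2k}$ and $(c\rho)^{2k+1}=c\rho^{2k+1}$, whence $e^{c\rho}=\cosh\rho+c\sinh\rho=L+c|A|=L+A=T$. The main point requiring care is the middle step: verifying that the antilinear polar decomposition of $A$ genuinely yields a \emph{conjugation} (this is where $A=A^{*}$ is essential) and treating $\ker A$ separately, on which $L$ degenerates to the identity and $\rho$ vanishes.
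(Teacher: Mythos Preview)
Your proof is correct, and the algebraic opening (deriving $A^{*}A=AA^{*}$, then $[A,L]=0$ via real functional calculus, then $A=A^{*}$ from $(A-A^{*})L=0$) is essentially the same as the paper's. The genuine difference is in how the conjugation $c$ is built. The paper proceeds spectrally: since $AA^{*}$ is trace class it diagonalises, the (finite-dimensional) eigenspaces $\ker(L-\mu I)$ are $A$-invariant, and on each such space one applies a separate lemma (matrix reduction of an antilinear involution) to produce a basis fixed by $A/\sqrt{\mu^{2}-1}$; the conjugation is then the coordinate-wise complex conjugation in that basis. You instead extract $c$ directly from an antilinear polar decomposition $A=c|A|$ on $(\ker A)^{\perp}$, check $c=c^{*}$ using $A=A^{*}$, and patch in an arbitrary conjugation on $\ker A$. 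Your route is shorter and basis-free, and avoids the auxiliary finite-dimensional lemma entirely; the paper's route, on the other hand, makes the simultaneous diagonalisation of $L$ and $A$ explicit, which is convenient if one later wants concrete eigenvalue formulae $\mu_{j}^{2}-\lambda_{j}^{2}=1$. Both reach $T=\cosh\rho+c\sinh\rho=e^{c\rho}$ in the same way.
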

\begin{proof}
As~$AA^{*}\in\mathcal{L}_{1}\left(\mathcal{Z}\right)$, $AA^{*}=\sum_{j}\lambda_{j}^{2}\left|e_{j}\right\rangle \left\langle e_{j}\right|$,
with~$\lambda_{j}\in\mathbb{R}$ and~$\sum_{j}\lambda_{j}^{2}<\infty$,
from~$L^{2}=I_{\mathcal{Z}}+AA^{*}$ we deduce~$L^{2}=\sum_{j}\mu_{j}^{2}\left|e_{j}\right\rangle \left\langle e_{j}\right|$
with~$\mu_{j}=\sqrt{1+\lambda_{j}^{2}}$ and thus~$L=\sum_{j}\mu_{j}\left|e_{j}\right\rangle \left\langle e_{j}\right|$.

From the equivalent characterizations of a symplectomorphism we get\[
L^{2}-AA^{*}=I_{\mathcal{Z}}\quad\mbox{and}\quad L^{2}-A^{*}A=I_{\mathcal{Z}}\]
multiplying the first equality on the right and the second on the
left by~$A$ and computing the difference we get~$\left[L^{2},A\right]=0$.
As~$L$ is self-adjoint and positive one can use the functional calculus
and~$L=\sqrt{L^{2}}$ to obtain~$\left[L,A\right]=0$. 

From~$\left[L,A\right]=0$, $L=L^{*}$ and the characterizations
of a symplectomorphism, we also get~$AL=LA=L^{*}A=A^{*}L$ so that~$\left(A-A^{*}\right)L=0$
and from the invertibility of~$L$ one deduces~$A=A^{*}$.

The proper subspaces associated with~$L$ and~$\ker\left(L-\mu I_{\mathcal{Z}}\right)$,
are thus stable by the action of~$A$ (and finite-dimensional). We
also remark that on~$\ker\left(L-\mu I_{\mathcal{Z}}\right)$, $AA^{*}=L^{2}-I_{\mathcal{Z}}=(\mu^{2}-1)I_{\mathcal{Z}}$,
so that two cases are possible:\[
\mu=1\,,\,\mbox{then}\, A=0\qquad\mbox{or}\qquad\mu>1\,,\,\mbox{then}\,\frac{1}{\sqrt{\mu^{2}-1}}A\frac{1}{\sqrt{\mu^{2}-1}}A^{*}=I_{\mathcal{Z}}\,.\]

We apply Lemma~\ref{lem:antiliear-application-reduction} to the
$\mathbb{C}$-antilinear applications induced by the applications
$A/\sqrt{\mu^{2}-1}$ on the Hilbert spaces~$\ker\left(L-\mu I_{\mathcal{Z}}\right)$.
This provides us with a Hilbert basis~$\left(e_{j}\right)$ of~$\mathcal{Z}$
which diagonalizes both~$L$ and~$A$. We can also define a conjugation~$c\left(\sum_{j}\alpha_{j}e_{j}\right)=\sum_{j}\overline{\alpha_{j}}e_{j}$.
This conjugation commutes with~$L$ and~$A$, and~$Ac$ is clearly
a non-negative self-adjoint operator and so is necessarily~$\sqrt{AA^{*}}$.
We finally get for every vector~$e_{j}$ of the basis the relations~$Le_{j}=\mu_{j}e_{j}$
and~$Ae_{j}=\lambda_{j}e_{j}$ with~$\mu_{j}^{2}-\lambda_{j}^{2}=1$,
and thus one can define~$\rho_{j}=\arg\cosh\mu_{j}$ ($\rho_{j}=\arg\sinh\lambda_{j}$
as $\lambda_{j}\geq0$) and so we can define~$\rho=\arg\cosh L=\arg\sinh Ac$
so that~$T=e^{c\rho}$.
\end{proof}

\section{Relations between Weyl and Wick symbols in finite dimension}

%
\begin{comment}
J'utilise ici les normalisations du reed simon II.
\end{comment}
{}We want to use the relation between the Weyl and Wick symbols associated
to a same Wick polynomial in finite dimension, working with~$\mathcal{Z}=\mathbb{C}^{r}$
we have \[
b=\frac{1}{\left(\pi\varepsilon/2\right)^{r}}\breve{b}*e^{-\frac{\left|z\right|^{2}}{\varepsilon/2}}\]
where~$b$ is the Wick symbol and~$\breve{b}$ is the Weyl symbol
and~$b^{Wick}=\breve{b}^{Weyl}$. We want to get rid of the convolution
and for this we use the Fourier transform\[
\mathcal{F}f\left(x'\right)=\frac{1}{\left(2\pi\right)^{r}}\int_{\mathbb{R}^{2r}}e^{-ix.x'}f\left(x\right)dx\]
where~$x,x'\in\mathbb{R}^{2r}\cong\mathbb{C}^{r}$. The inverse Fourier
transform is then\[
\mathcal{F}^{-1}f\left(x\right)=\frac{1}{\left(2\pi\right)^{r}}\int_{\mathbb{R}^{2r}}e^{ix.x'}f\left(x'\right)dx'\,.\]
We can use the formulae\begin{eqnarray*}
\mathcal{F}\left(f*g\right) & = & \left(2\pi\right)^{r}\mathcal{F}f.\mathcal{F}g\\
\mathcal{F}\left[e^{-\alpha\frac{\left|x\right|^{2}}{2}}\right]\left(x'\right) & = & \frac{1}{\alpha^{r}}e^{-\frac{\left|x'\right|^{2}}{2\alpha}}\\
\mathcal{F}^{-1}\left(x\times\cdot\right)\mathcal{F} & = & D_{x}\,.\end{eqnarray*}
We then obtain with~$m=2n$\begin{eqnarray*}
\mathcal{F}b\left(z'\right) & = & \frac{\left(2\pi\right)^{r}}{\left(\pi\varepsilon/2\right)^{r}}\mathcal{F}\left[e^{-\frac{\left|z\right|^{2}}{\varepsilon/2}}\right]\mathcal{F}\breve{b}\left(z'\right)\\
 & = & \left(\frac{4}{\varepsilon}\right)^{r}\left(\frac{\varepsilon}{4}\right)^{r}e^{-\frac{\varepsilon}{8}\left|z'\right|^{2}}\mathcal{F}\breve{b}\left(z'\right)\\
 & = & e^{-\frac{\varepsilon}{8}\left|z'\right|^{2}}\mathcal{F}\breve{b}\left(z'\right)\end{eqnarray*}
and%
\begin{comment}
Comment justifier les égalités qui suivent ? Je ne sais pas trop,
mais en fait je peux m'appuyer sur la théorie du semi-groupe d'évolution
associé à l'équation de la chaleur car le produit de convolution avec
une gaussienne correspond à l'évolution selon le semi-groupe de la
chaleur. D'ailleur à la fin du calcul on arrive à $b=e^{\frac{\varepsilon}{2}\Delta_{2n}}$.
stein shakarchi 1 p209.
\end{comment}
{}\begin{eqnarray*}
b & = & \mathcal{F}^{-1}e^{-\frac{\varepsilon}{8}\left|z'\right|^{2}}\mathcal{F}\breve{b}\\
 & = & e^{-\frac{\varepsilon}{8}\mathcal{F}^{-1}\left|z'\right|^{2}\mathcal{F}}\breve{b}\\
 & = & e^{\frac{\varepsilon}{2}\partial_{z}.\partial_{\bar{z}}}\breve{b}\end{eqnarray*}
using the fact that \[
\mathcal{F}^{-1}\left|z'\right|^{2}\mathcal{F}=D_{\left(x,\xi\right)}^{2}=-4\times\frac{1}{2}\left(\partial_{x}-i\partial_{\xi}\right).\frac{1}{2}\left(\partial_{x}+i\partial_{\xi}\right)=-4\partial_{z}.\partial_{\bar{z}}\,.\]

It is clear that if~$\breve{b}$ is a polynomial in~$\mathcal{P}_{\leq m}\left(\mathcal{Z}\right)$,
then~$b$ is in this class of polynomials, as we can see deriving
the convolution product. We want to show that the application\begin{eqnarray*}
\mathcal{P}_{\leq m}\left(\mathcal{Z}\right) & \to & \mathcal{P}_{\leq m}\left(\mathcal{Z}\right)\\
\breve{b} & \mapsto & b=\frac{1}{\left(\pi\varepsilon/2\right)^{n}}\breve{b}*e^{-\frac{\left|z\right|^{2}}{\varepsilon/2}}\end{eqnarray*}
is a bijection. As the dimension of~$\mathcal{Z}$ is finite, the
dimension of~$\mathcal{P}_{\leq m}\left(\mathcal{Z}\right)$ is finite
and it is enough to show the injectivity of this application. For
this we want to justify that on the part of main degree this application
is the identity. This is obvious from the following facts:
\begin{itemize}
\item $\partial_{\bar{z}}^{q}\partial_{z}^{p}b=\frac{1}{\left(\pi\varepsilon/2\right)^{r}}\partial_{\bar{z}}^{q}\partial_{z}^{p}\breve{b}*e^{-\frac{\left|z\right|^{2}}{\varepsilon/2}}$
\item this application is the identity on the constants.
\end{itemize}
Thus we can also consider the reverse application that we will improperly
note \[
\breve{b}=e^{-\frac{\varepsilon}{2}\partial_{z}.\partial_{\bar{z}}}b\,.\]

\section{Symplectic Fourier transform\label{sec:Symplectic-Fourier-transform}}

Let us then consider the symplectic Fourier transform on~$L^{2}\left(\mathbb{C}^{d};\mathbb{C}\right)\equiv L^{2}\left(\mathbb{R}^{2d}\right)$
with~$z=x+iy$, defined by\[
\mathcal{F}^{\sigma}\left(f\right)\left(z\right)=\int e^{i2\pi\sigma\left(z,z'\right)}f\left(z'\right)L\left(dz'\right)\]
with~$\sigma\left(z,z'\right)=\Im\left\langle z,z'\right\rangle =\Im\left[\left\langle x,x'\right\rangle +\left\langle y,y'\right\rangle +i\left\langle x,y'\right\rangle -i\left\langle y,x'\right\rangle \right]$
and~$L$ denotes the Lebesgue measure. We list here some properties
of the symplectic Fourier transform.

\begin{enumerate}
\item Inverse.\[
\left(\mathcal{F}^{\sigma}\right)^{-1}=\mathcal{F}^{\sigma}\]

\item Convolution. \[
\mathcal{F}^{\sigma}\left(f*g\right)=\mathcal{F}^{\sigma}f.\mathcal{F}^{\sigma}g\]

\item Composition with a symplectic transformation.\label{sub:Composition-F-symplecto}
Let~$T$ be a symplectomorphism, then\[
\mathcal{F}^{\sigma}\left[f\left(T\cdot\right)\right]\left(z\right)=\mathcal{F}^{\sigma}\left[f\right]\left(Tz\right)\,.\]

\item Gaussians. For~$a>0$,  \[
\mathcal{F}^{\sigma}\left[e^{-a\left|\cdot\right|^{2}}\right]\left(z\right)=\left(\frac{\pi}{a}\right)^{d}e^{-\pi^{2}\left|z\right|^{2}/a}\,.\]

\item Derivation. We consider the derivations~$\partial_{z}=\frac{1}{2}\left(\partial_{x}-i\partial_{y}\right)$
and~$\partial_{\bar{z}}=\frac{1}{2}\left(\partial_{x}+i\partial_{y}\right)$
then\[
-\frac{1}{\pi}\partial_{z}.z_{0}=\mathcal{F}^{\sigma}\left(\bar{z}.z_{0}\times\right)\mathcal{F}^{\sigma}\quad\mbox{and}\quad\frac{1}{\pi}\bar{z}_{0}.\partial_{\bar{z}}=\mathcal{F}^{\sigma}\left(\bar{z}_{0}.z\times\right)\mathcal{F}^{\sigma}\,.\]
 \end{enumerate}
\begin{acknowledgement*}
The author would like to thank Francis Nier and Zied Ammari for profitable
discussions.
\end{acknowledgement*}

\bibliographystyle{plain}
\bibliography{biblio_lemme_hepp}

\end{document}